\documentclass[reqno]{amsart}    

\usepackage[margin=1in]{geometry} 
\usepackage[parfill]{parskip}   
\usepackage{graphicx}

\usepackage{amssymb}        
\usepackage{amsmath}
\usepackage{amsthm}     
\usepackage{tikz-cd}

\usepackage{hyperref}

\usepackage{xcolor}
\usepackage{tikz}

\usepackage{booktabs}
\usepackage{tabularx}
\usepackage{array}

\usetikzlibrary{arrows.meta,positioning,calc}

\begin{document}

\newtheorem{thm}{Theorem}[section]    
\newtheorem{prop}[thm]{Proposition} 
\newtheorem{conj}[thm]{Conjecture}
\newtheorem{lem}[thm]{Lemma}
\newtheorem{cor}[thm]{Corollary}
\newtheorem{ques}[thm]{Question}

\theoremstyle{definition}
\newtheorem{definition}[thm]{Definition}
\newtheorem{example}[thm]{Example}
\newtheorem{remark}[thm]{Remark}

\def\dd{{\operatorname{d}}}

\begin{abstract}

Entropy functionals and their associated divergences underlie many statistical methods, including maximum entropy inference, minimum divergence estimation, and goodness-of-fit testing, yet choosing among Shannon, R\'enyi, Tsallis, and more general entropies is often a matter of convention rather than structural principle. We introduce a measure theoretic framework in which admissibility requires the entropy of an input measure to be bounded above by that of its reference measure whenever the former is absolutely continuous with respect to the latter. Under generalized mean-value composition, we characterize all such entropy functionals and obtain a four-level hierarchy determined successively by the mean generator, entropy scale, and additivity assumptions. A continuous strictly monotone generator $g$ is admissible exactly when $t\mapsto g(1/t)$ is strictly convex for increasing $g$, or strictly concave for decreasing $g$. This resolves a question posed by R\'enyi  (Proc.~4th Berkeley Sympos.~Math.~Statist.~Prob.,~1961) concerning which generalized means may replace the arithmetic mean in his entropy axiomatization. The same criterion is equivalent to strict convexity of an associated Csisz\'ar $f$-divergence generator and therefore yields data processing under Markov kernels with an exact equality condition. Within this hierarchy, product additivity singles out the R\'enyi family, while internal additivity, or product additivity together with arithmetic mean-value composition, singles out Shannon entropy. The characterization is constructive and yields new admissible entropy and divergence families, including integral-transform examples.
\end{abstract}

\title{A Structural Characterization of Entropy Functionals}
\author{Daniel Lazarev}
\address{Department of Mathematics, Massachusetts Institute of Technology, Cambridge, MA, USA}
\subjclass[2020]{Primary 62B10; secondary 94A17, 26E60, 26A51.}
\keywords{Entropy characterization, structural monotonicity, statistical information, $f$-divergence, data processing, R\'enyi entropy, generalized means.}

\email{dlazarev@mit.edu}  

\maketitle

\section{Introduction}

Entropy functionals and the divergence families derived from them are foundational objects of statistical inference. They appear as objective functions in maximum entropy inference \cite{Jay57}, whose statistical and decision theoretic foundations were developed by Csisz\'ar \cite{Csi91} and Grünwald and Dawid \cite{GrunDaw04}, and whose roots include the
discrimination-information interpretation of Kullback and Leibler \cite{KulLeib51} and the axiomatic maximum entropy formulation of Shore and Johnson \cite{ShoJoh80}. They also appear in the broader theory of statistical divergences, including the Ali--Silvey and Csiszár $f$-divergence frameworks \cite{AliSilvey66,Csiszar67}; as the basis of the power divergence family of Cressie and Read \cite{CresRead84}; as tools for robust estimation, as in density power divergence estimation \cite{BasuHarr98}; and as estimands in their own right, as in the nonparametric estimation of Rényi entropy functionals of multidimensional densities \cite{Leo08}.

Across these settings, the same underlying question arises: among the many available entropies, what principles distinguish one from another? This question is important not only for categorizing entropy functionals, but also for understanding which statistical information measure is most appropriate in a given context, a choice that is often resolved in practice by convention or analytic convenience.

In this work, we provide a framework for entropy characterization and admissibility by first considering a minimal structural requirement: a monotonicity condition with respect to
absolute continuity, in the spirit of the information orderings underlying the comparison of statistical experiments \cite{Blackwell53, LeCam64}. 
We then characterize the entropy functionals compatible with this requirement under generalized mean-value composition, which we later show can itself be derived from more primitive structural and regularity properties. 
Thus, our results are not intended to select a single universally optimal entropy functional, but to identify the statistical assumptions underlying different entropy choices.

The characterization we obtain is constructive and answers a question posed by R\'enyi in his foundational paper \cite{Ren61} about which of the generalized (Kolmogorov-Nagumo or quasi-arithmetic) means may replace the arithmetic mean in his axiomatization of entropy and still lead to an admissible entropy, along with a stronger variant of that question in which additivity itself is relaxed. The answer takes the form of a convexity criterion on the mean's generator alone, checkable by a second-derivative test, and it induces a four-level hierarchy of entropy classes in which the familiar functionals of statistics occupy the most rigid levels. In particular, the same power-law generators that underlie the R\'enyi family and the Cressie--Read statistics are singled out by additivity across independent systems, while Shannon entropy is singled out by a stronger internal form of additivity. The hierarchy thus separates, postulate by postulate, exactly which features of the classical formulas are forced by structural monotonicity, which by mean-value composition, which by the entropy scale, and which by additivity. It thereby provides both a structural explanation for the entropy functionals in common statistical use and a principled mechanism for generating and situating new ones.

Many entropy characterizations have been proposed since Shannon's original paper \cite{Sha48}. Classical results include those of Faddeev \cite{Faddeev56} and the systematic treatment of Acz\'el and Dar\'oczy \cite{AczelDaroczy75}; see also Csisz\'ar \cite{Csi08} and Jizba and Korbel \cite{JizKor20} for broader axiomatic perspectives. More recent work has approached the problem from complementary directions. Gour and Tomamichel \cite{GourTomamichel21} study entropies and relative entropies under monotonicity with respect to mixing and data processing together with product additivity, obtaining a correspondence between the two classes. Schlather and Ditscheid \cite{SchlatherDitscheid24} decompose the classical chain rule into product additivity and an intrinsic proportionality condition. Balsubramani \cite{Balsub26} characterizes multivariate divergence functionals under data processing and product additivity. 
In contrast, product additivity is not a starting assumption here. We first classify entropy functionals under structural (reference measure) monotonicity and generalized mean-value composition, thereby retaining nonadditive admissible families, and then determine how successive scale and additivity assumptions recover the classical R\'enyi and Shannon cases.

R\'enyi's 1961 formulation \cite{Ren61} provides the conceptual starting point. To facilitate his characterization, he introduced \textit{generalized distributions}: nonnegative vectors whose total mass lies in $(0,1]$. This enlarged class is closed under both the product operation and, when the resulting mass remains at most one, disjoint union. R\'enyi used these operations, together with permutation invariance, continuity, and normalization, to characterize Shannon entropy. He then observed that there is no necessary \textit{a priori} reason for the disjoint-union law to use the arithmetic mean and asked which generalized, or quasi-arithmetic, means remain compatible with product additivity. Our finite-measure formulation removes the upper restriction on total mass, resolves the associated closure issue, and provides a natural reference measure setting in which both R\'enyi's original question and its nonadditive extension can be studied. The operations and postulates are stated formally at the beginning of Section~\ref{sec:main}.

The remainder of the paper is organized as follows. Section~\ref{sec:main} formulates R\'enyi's framework for finite reference measure spaces, introduces the new structural postulates, and states the main characterization theorems. Section~\ref{sec:stat-interpret} develops the statistical interpretation of the hierarchy, its connection with $f$-divergences and data processing, and constructions of new admissible entropy functionals. Section~\ref{sec:ent-mean-lem} establishes the required results about entropies and generalized means. Section~\ref{sec:proofs} contains the proofs of the main results and related consequences.

\section{Questions, New Postulates, and Main Results}
\label{sec:main}

\subsection{R\'enyi's postulates}

In his 1961 paper \cite{Ren61}, R\'enyi presented a natural characterization of Shannon entropy by introducing  so-called \textit{generalized distributions}, which are  distributions $P = (p_1, \dots, p_n)$, where $p_i \geq 0$ for $i=1, \dots, n$, whose mass $W(P) := \sum_{i=1}^n p_i$ satisfies $0 < W(P) \leq 1.$ Of course, if $W(P)=1$ then $P$ is simply a probability distribution.

In this paper, we write the entropies in measure-theoretic form, which allows
greater generality and gives a natural formulation of R\'enyi's operations.
In this language, the operations used in R\'enyi's characterization are, for two  measure spaces $(X, \sigma(X), \mu)$ and $(Y,\sigma(Y), \nu)$, where $\mu$ and $\nu$ are finite measures:
\begin{itemize}
    \item (Product) $(X, \mu) \otimes (Y, \nu) := (X \times Y, \mu \otimes \nu)$, where for all $A \in \sigma(X)$ and $B \in \sigma(Y)$, 
    $$
    \mu\otimes \nu(A \times B) = \mu(A) \,\nu(B) .
    $$
    \item (Disjoint sum) ~~ $(X,\mu) \oplus (Y,\nu):= (X \sqcup Y, \mu \oplus \nu)$, where for all $A \in \sigma(X)$ and $B \in \sigma(Y)$, 
    $$
    (\mu \oplus \nu)(A \sqcup B) = \mu(A) + \nu(B).
    $$
\end{itemize}
For reference, in Rényi's paper, for $Q = (q_1, \dots, q_m)$, the product is given by
$P * Q := (p_1 q_1, \dots, p_i q_k,$ $\dots, p_n q_m)$,
and the disjoint sum by
$P \cup Q := (p_1, \dots, p_n, q_1, \dots, q_m)$.
Thus, $W(P*Q) = W(P) W(Q)$ and $W(P \cup Q) = W(P)+W(Q)$, where the latter is only defined if $W(P \cup Q) \leq 1$.
More specifically, using generalized distributions allowed R\'enyi to define the disjoint sum  such that the family of generalized distributions is closed under disjoint union, which would not have been true for probability distributions. In this work, the finite-measure formulation avoids the closure issue that motivated R\'enyi's generalized distributions.

Then, for an entropy, $\mathcal{H}(\mu)$, Rényi's postulates are, in our notation, as follows. 
We note that, in Postulate 2,  we write $\{p\}$ for the finite measure on a one-point space whose unique atom has mass $p$.
\begin{itemize}
\item[] \textsc{Postulate 1.} (Relabeling Invariance) \quad
Suppose that $T:\mathcal X \rightarrow\mathcal Y$ is a measurable isomorphism satisfying $T_{\#}\mu=\mu'$ and  $T_{\#}\nu=\nu'$.
Then $
\mathcal H(\mu,(\mathcal X,\nu))
=
\mathcal H(\mu',(\mathcal Y,\nu'))$.

    \item[] \textsc{Postulate 2.} (Atomic Continuity) \quad
Let $\delta$ denote the unit measure on a one-point space. The function $p\mapsto
\mathcal H(p\delta,(\{\ast\},p\delta))$
is continuous for $p>0$.

   \item[] \textsc{Postulate 3.} (Normalization) \quad
Let $u_2$ denote the uniform probability measure on a two-point space,
and let $\#_2$ denote counting measure on that space. Then
$\mathcal H(u_2,(\{1,2\},\#_2))
=
\log 2$.
    
    \item[] \textsc{Postulate 4.} (Additivity) \quad $\mathcal{H}(\mu \otimes \nu) = \mathcal{H}(\mu) + \mathcal{H}(\nu)$.
    \item[] \textsc{Postulate 5.} (Mean Value) \quad $\mathcal{H}(\mu \oplus \nu) = \dfrac{\mu(\mathcal{X}) \mathcal{H}(\mu) + \nu(\mathcal{Y}) \mathcal{H}(\nu)}{\mu(\mathcal{X})+ \nu(\mathcal{Y})}$.
\end{itemize}

Having defined his postulates, Rényi provided a simple proof that the only entropy $\mathcal{H}$ that satisfies Postulates 1-5 is Shannon entropy. More importantly, however, he observed that there is no necessary \textit{a priori} reason to use the arithmetic mean in Postulate 5, prompting the developments and questions that we introduce below.

\subsection{Definitions and questions}
We note that means and entropy functionals are considered on their natural domains in the sense that displayed integrals are well defined and the resulting values lie in the relevant domains of the functions involved.
The general notion of a mean originating from Kolmogorov's and Nagumo's work \cite{Kol30,Nagumo1930} (see also  \cite{Aczel48, HarLitPol52, Bullen2003, deCar16})  is naturally related to the characterization of entropy. This is given by
\begin{definition}
\label{def:g-mean}
The {\it generalized (Kolmogorov-Nagumo) $g$-mean} given a measure $\mu$ on a space $\mathcal{X}$ of a measurable function $f$ is given by 
\begin{equation}
 \label{eq:mean-g}
\mathcal{M}_g (f, (\mathcal{X},\mu)) := g^{-1}\left(   \dfrac{1}{\mu(\mathcal{X})} \int_\mathcal{X}  g(f) \, \dd \mu \right) \, ,
\end{equation}
where $g$  is a continuous, strictly monotone function, with inverse $g^{-1}$.
\end{definition}

In particular, we obtain the arithmetic mean for any linear function $g(x) =ax+b$, $a,b \in \mathbb{R}$. Moreover, choosing $g(x) = 1/x$ gives the \textit{harmonic mean}, while $g(x) = \log x$ leads to the \textit{geometric mean} (where $x>0$ for both cases). Setting $g(x) = x^p$ in the generalized mean \eqref{def:g-mean} we have
\begin{definition}
\label{def:Lp-mean}
The {\it $L^p$, or power, mean} of order $p \in \mathbb{R}\setminus \{0\}$ given a finite measure $\mu$ on a space $\mathcal{X}$ of a measurable function $f$ is given by 
\begin{equation}
\label{eq:Lp-mean}
M^p_\mu (f, \mathcal{X}) := \left(   \dfrac{1}{\mu(\mathcal{X})} \int_{\mathcal{X}}  f^p \, \dd \mu \right)^{\frac{1}{p}} \, 
\end{equation}
whenever the integral exists. 
If $p<0$, we assume $f>0$ $\mu$-a.e.
The case $p=0$ is defined by the limit $p \rightarrow 0$, which gives the geometric mean in Lemma~\ref{lem:M0}.
\end{definition}
Thus, the power mean includes all three classic means mentioned above, where the other two are obtained by setting $p=1$ or $p=-1$.
When the space $\mathcal{X}$ is clear from context we will simplify the notation and write $M^p_\mu (f)$. 

Having defined the generalized mean, we can present, in our notation, Rényi's
\begin{itemize}
    \item[] \textsc{Postulate 5'.} (Generalized Mean Value) \quad There exists a continuous, strictly monotone function, $\varphi$, with inverse $\varphi^{-1}$, such that 
    $$
    \mathcal{H}(\mu \oplus \nu) = \varphi^{-1} \left(\dfrac{\mu(\mathcal{X}) \,\varphi(\mathcal{H}(\mu)) + \nu(\mathcal{Y}) \,\varphi(\mathcal{H}(\nu))}{\mu(\mathcal{X})+ \nu(\mathcal{Y})} \right).
    $$
\end{itemize}

Based on this generalized mean postulate, Rényi posed the following open problem \cite{Ren61}:
\begin{ques}
\label{renyi-ques}
    \rm What choices of the function $\varphi$ in Postulate 5' are admissible, in the sense of being compatible with Postulate 4?
\end{ques}

Of course, the question itself suggests that additivity (Postulate 4) is essential in defining entropy. Nonadditive entropy functionals, however, also arise naturally. For example, Perelman's entropy for Ricci flow and the Colding--Minicozzi entropy for mean-curvature flow are used through their monotonicity properties \cite{Per02,ColMin12}; Tsallis entropy serves as an information-theoretic example below (Example~\ref{ex:tsallis}). This motivates replacing Postulate~4 by the weaker 
\begin{itemize}
    \item[] \textsc{Postulate A.} (Structural Monotonicity) \quad For every pair of nonzero finite measures $\mu \ll \nu$ on $\mathcal{X}$,
    $$
    \mathcal{H}(\mu, (\mathcal{X}, \nu)) \leq \mathcal{H}(\nu, (\mathcal{X}, \nu)),
    $$ 
    with equality if and only if $\mu = c \nu$ for some constant $c>0$. 
    Moreover, for any two finite measure spaces $(\mathcal{X}, \nu)$ and $(\mathcal{Y}, \xi)$, if $\nu(\mathcal{X}) = \xi(\mathcal{Y})>0$, then 
    $$\mathcal{H}(\nu, (\mathcal{X}, \nu)) = \mathcal{H}(\xi, (\mathcal{Y}, \xi)).
    $$
\end{itemize}

It will be convenient to write explicitly the atomic entropy scale already
induced by Postulate~2:
\begin{equation}
\label{eq:atomic-h}
h(r)
:=
\mathcal H
\left(
r^{-1}\delta,
(\{\ast\},r^{-1}\delta)
\right),
\qquad r>0.
\end{equation}
Thus $h(r)$ is the entropy of a one-atom reference structure of total
mass $1/r$.

\begin{remark}
\label{rem:ent-notation}
Throughout the postulates, the notation $\mathcal{H}(\mu)$ suppresses the ambient
reference structure when that structure is clear from context. Thus, when
$\mu$ is considered on the reference measure space
$(\mathcal{X},\nu)$, the notation $\mathcal{H}(\mu)$ means
$\mathcal{H}(\mu,(\mathcal{X},\nu))$, as we used explicitly in Postulate A.
This convention is important: if $\mu\ll\nu$ and $\nu\ll\mu$, the two relative entropies $\mathcal{H}(\mu,(\mathcal{X},\nu))$ and $\mathcal{H}(\nu,(\mathcal{X},\mu))$ are defined with respect to different reference structures and need not agree.

The same convention is used for products and disjoint sums, whose reference measures are the corresponding product and disjoint-sum reference measures. 
If, for $i=1,2$, the measure $\mu_i$ is evaluated relative to the reference
measure space $(\mathcal{X}_i,\nu_i)$, and if $m_i:=\mu_i(\mathcal{X}_i)$ and $m:=m_1+m_2$,
then $\mathcal{H}(\mu_1\oplus\mu_2)$ is understood to be evaluated relative to
$$
\left(
\mathcal{X}_1\sqcup\mathcal{X}_2,
\frac{m_1}{m}\nu_1\oplus\frac{m_2}{m}\nu_2
\right).
$$
With this convention, on the component $\mathcal{X}_i$ one has
$$
\frac{1}{m}
\frac{\dd(\mu_1\oplus\mu_2)}
{\dd\left(\frac{m_1}{m}\nu_1\oplus\frac{m_2}{m}\nu_2\right)}
=
\frac{1}{m_i}\frac{\dd\mu_i}{\dd\nu_i}.
$$
Thus disjoint union preserves the normalized Radon--Nikodym density of each
summand, while the masses $m_i$ determine the weights in Postulates 5 and $5'$.
\end{remark}
In the case of information theoretic entropy and Rényi's axiomatization, this monotonicity is also hidden in the generalized mean value property of Postulate 5'. It justifies, from an entropic perspective,  why $\varphi$ in Postulate 5' must be strictly monotone: Otherwise, the conjugated application of $\varphi$ to the disjoint sum of atomic entropies would not guarantee that the resulting entropy of the disjoint sum remains monotonic.  

\begin{remark}
\label{rem:post5'}
Following R\'enyi's example, we can question the need for Postulate 5' itself. Indeed, as we show in Proposition~\ref{prop:derive-5'}, Postulate 5' may itself be derived from several of the entropy postulates together with intensivity, structural compatibility, and standard regularity assumptions on the disjoint sum operation.
Therefore, in short, statistical entropy can be described as an intensive monotonic coordinate on a family of finite measure spaces.
Having noted this, we will make use of Postulate 5' for reasons of clarity, and to better situate our results relative to R\'enyi's original work.
\end{remark}

The above discussion leads to the following modified version of Rényi's problem:
\begin{ques}
\label{renyi-ques'}
    \rm What choices of the function $\varphi$ in Postulate 5' are admissible, in the sense of being compatible with Postulate A?
\end{ques}

Our main results, Theorems \ref{thm:ent-admit} and \ref{thm:ent-char}, address both Questions \ref{renyi-ques} and \ref{renyi-ques'}.
We start by defining the generalized entropy.

\begin{definition}
\label{def:gh-ent}
The {\it generalized $(g,h)$-entropy} given a measure $\mu$ on a measure space $(\mathcal{X}, \nu)$, with $\mu \ll \nu$,  is given by 
\begin{equation}
 \label{eq:gh-entropy}
\mathcal{H}_{g,h}(\mu, (\mathcal{X}, \nu)) =  h \left( \mathcal{M}_g \left(\dfrac{1}{\mu(\mathcal{X}) } \dfrac{\dd \mu}{\dd \nu}, (\mathcal{X}, \mu) \right) \right)  
=   h \circ g^{-1}\left(   \dfrac{1}{\mu(\mathcal{X})} \int_\mathcal{X}  g  \left(\dfrac{1}{\mu(\mathcal{X})}\dfrac{\dd \mu}{\dd \nu} \right) \ \, \dd \mu \right) , 
\end{equation}
where $h$ is a continuous strictly decreasing function, and $g$  is a continuous, strictly monotone function, with inverse $g^{-1}$.
\end{definition}

\begin{remark}
The function $h\left(\dfrac{1}{\mu(\mathcal{X}) } \dfrac{\dd \mu}{\dd \nu}\right)$ is the ``atomic entropy'' (Eq.~\ref{eq:atomic-h}) with $h$ serving as a choice of scale,  much like the normalization of Postulate 3 can be viewed as a choice of units.
In fact, as we will show explicitly in the proof of Theorem~\ref{thm:ent-char}  in Sec.~\ref{sec:proofs}, the entropy that emerges directly from the weakest postulates we presented, in terms of the same strictly monotone map $\varphi$ of Postulate 5', is given by
\begin{equation}
\label{eq:M-h-inside}
\mathcal{H}_{\varphi,h}(\mu, (\mathcal{X}, \nu)) =  \mathcal{M}_\varphi \left(h\left(\dfrac{1}{\mu(\mathcal{X}) } \dfrac{\dd \mu}{\dd \nu}\right), (\mathcal{X}, \mu) \right)  
=
 \varphi^{-1}\left(   \dfrac{1}{\mu(\mathcal{X})} \int_\mathcal{X}  \varphi \circ h  \left(\dfrac{1}{\mu(\mathcal{X})}\dfrac{\dd \mu}{\dd \nu} \right) \ \, \dd \mu \right)
\end{equation}
where the role of $h$ as an atomic entropy is more clear. We derive Eq.~\eqref{eq:gh-entropy} from Eq.~\eqref{eq:M-h-inside} by taking  $g = \varphi \circ h$.
Nevertheless, the form presented in Definition~\ref{def:gh-ent} will be more useful in what follows.
\end{remark}

\begin{remark}
    A discrete version of Eq.~\eqref{eq:M-h-inside} with $h(r) = - \log r$ appeared in Acz{\'e}l and Dar{\'o}czy's classic book (Eq.~2.2.11 in \cite{AczelDaroczy75}). Definition~\ref{def:gh-ent} extends this form to finite measure spaces equipped with reference measures and separates the mean generator, $g$, from the entropy scale, $h$. Within the present hierarchy, Theorem~\ref{thm:ent-char} shows that the resulting $(g,h)$-entropies form the most general class compatible with structural monotonicity and generalized mean-value composition. Related entropy functionals have also been studied under the name ``generalized entropies'' \cite{GrunDaw04, DupKra12, AmiBalHer18, BalPal20}.
\end{remark}

At the most basic level, Postulate 4 is required to determine a choice of scale, $h$, while Postulate 5 or 5' constrain the possible generator, $g$. However, as stated, Postulate 4 does more than that: it does not independently choose a particular $h$ but also restricts the possible $g$. Therefore, postulates that only delimit the atomic scale $h$, such as Postulate $4_{\rm at}$ which we now introduce, represent the weakest form of additivity, in the sense of decoupling the constraints on $h$ from $g$. 
\begin{itemize}
    \item[] \textsc{Postulate $4_{\rm at}$.} (Atomic Additivity) \quad
    Let $h:(0,\infty)\to\mathbb{R}$ be defined as in Eq.~\ref{eq:atomic-h}. Then, for all $r,s>0$, \; $h(rs)=h(r)+h(s)$.
\end{itemize}

\begin{remark}
\label{rem:h-log}
Together with continuity, Postulate $4_{\rm at}$ implies that
$h(r)=a\log r $
for some constant $a$. The direction of Postulate A requires the atomic
entropy scale to be decreasing, so $a<0$, and Postulate 3 gives $h(1/2)=\log 2$,
so $a=-1$. Therefore $h(r)=-\log r$.
\end{remark}
Since $h$ is a strictly decreasing function, any specific choice of $h$ is always order-equivalent to any other, so we do not lose generality, at the level of Postulate A, by choosing a particular coordinate. 
In fact, the admissibility criterion of Theorem~\ref{thm:ent-admit} depends only on the induced mean generator $g$, and not separately on the inner entropy scale $h$ (though, of course, this does not mean that the criterion is independent of the entropy scale when the
generator $\varphi$ in Postulate $5'$ is held fixed, since
$g=\varphi\circ h$).
As detailed in Example~\ref{ex:tsallis}, this order-equivalence has already been noticed by Leonenko, Pronzato, and Savani \cite{Leo08}, who point out that R\'enyi and Tsallis entropies are functions of each other and are maximized by the same distributions, prompting them to refer to ``$q$-entropy maximizing distributions,'' using the exponent $q$ without needing to mention if this distribution was estimated using the corresponding R\'enyi or Tsallis entropies.

Postulate $4_{\rm at}$ and Remark~\ref{rem:h-log}  lead to the following 
\begin{definition}
\label{def:g-ent}
The {\it generalized $g$-entropy} given a measure $\mu$ on a measure space $(\mathcal{X}, \nu)$, with $\mu \ll \nu$,  is given by 
\begin{equation}
 \label{eq:g-entropy}
\mathcal{H}_g(\mu, (\mathcal{X}, \nu)) = - \log \mathcal{M}_g \left(\dfrac{1}{\mu(\mathcal{X}) } \dfrac{\dd \mu}{\dd \nu}, (\mathcal{X}, \mu) \right) 
=  - \log g^{-1}\left(   \dfrac{1}{\mu(\mathcal{X})} \int_\mathcal{X}  g\left( \dfrac{1}{\mu(\mathcal{X})}\dfrac{\dd \mu}{\dd \nu} \right) \, \dd \mu \right) \, ,
\end{equation}
where $g$  is a continuous, strictly monotone function, with inverse $g^{-1}$.
\end{definition}

Having defined the generalized entropy, we can now also distinguish between two forms of additivity.  R\'enyi's Postulate 4 is an \textit{external} additivity condition in that it concerns the product of two measure spaces. Using the notation established in Remark~\ref{rem:ent-notation} and Definitions~\ref{def:gh-ent} and \ref{def:g-ent}, we can now more clearly rewrite 

\begin{itemize}
    \item[] \textsc{Postulate 4.} (External Additivity) \quad For $i=1,2$, let $\mu_i$ be finite measures on the measure spaces $(\mathcal{X}_i, \nu_i)$, where $\mu_i\ll\nu_i$. Then
    $$
    \mathcal{H}_g\left(
    \mu_1\otimes\mu_2,
    (\mathcal X_1\times\mathcal X_2,\nu_1\otimes\nu_2)
    \right)
    =
    \mathcal{H}_g(\mu_1,(\mathcal X_1,\nu_1))
    +
    \mathcal{H}_g(\mu_2,(\mathcal X_2,\nu_2)).
    $$
\end{itemize}

There is also a stronger form of additivity, which acts within a single measure
space rather than between two measure spaces.  To state it cleanly,
fix a finite measure space $(\mathcal X,\mu)$ and define, for a positive
measurable function $f$,
$$
\widetilde{\mathcal{H}}_g(f,(\mathcal X,\mu))
:=
-\log \mathcal{M}_g(f,(\mathcal X,\mu)).
$$

\begin{itemize}
    \item[] \textsc{Postulate $4_{\mathrm{int}}$.} (Internal Additivity) \quad
    For all positive measurable functions $f,f'$ on a finite measure space
    $(\mathcal X,\mu)$,
    $$
    \widetilde{\mathcal{H}}_g(ff',(\mathcal X,\mu))
    =
    \widetilde{\mathcal{H}}_g(f,(\mathcal X,\mu))
    +
    \widetilde{\mathcal{H}}_g(f',(\mathcal X,\mu)).
    $$
\end{itemize}
Of course, 
$$
\mathcal{H}_g\big(\mu,(\mathcal X,\nu) \big)
=
\widetilde{\mathcal{H}}_g \left(\frac{1}{\mu(\mathcal X)}
\frac{\dd\mu}{\dd\nu},(\mathcal X,\mu) \right).
$$

Put succinctly, Postulate 4 is ``external'' because it combines independent systems  ($f\otimes f'$ on $\mathcal{X}\times\mathcal Y$), while Postulate $4_{\mathrm{int}}$ is ``internal'' because it combines densities inside one fixed system ($ff'$ on $\mathcal{X}$). Lemma~\ref{lem:4int-4} in Sec.~\ref{sec:ent-mean-lem} confirms that Postulate $4_{\mathrm{int}}$ indeed implies Postulate 4.

We proceed by defining the Shannon and R\'enyi entropies:
\begin{definition}
\label{def:shannon}
The \textit{Shannon entropy} 
of a  finite  measure $\mu$  on a measure space $(\mathcal{X}, \nu)$, with $\mu \ll \nu$, is given by
\begin{equation}
 \label{eq:ent-def}
S_\nu(\mu,\mathcal{X}) 
= \log \mu(\mathcal{X}) 
 - \frac{1}{\mu({\mathcal{X}})} \int_{{\mathcal{X}}}   \dfrac{\dd \mu}{\dd \nu} 
 \log \left(\dfrac{\dd\mu}{\dd \nu} \right)  \dd \nu  .
\end{equation}
\end{definition}

\begin{definition}
\label{def:renyi-ent}
Using the same notation as in Definition~\ref{def:shannon}, the \textit{R\'enyi entropy} of order $p \in \mathbb{R}_{\geq 0} \setminus \{1\}$ is given by
\begin{equation}
 \label{eq:renyi-def}
H^p_\nu(\mu,\mathcal{X}) 
=  \log \mu(\mathcal{X}) + \dfrac{1}{1-p }  \log \left( 
 \frac{1}{\mu({\mathcal{X}})} \int_{{\mathcal{X}}} \left(\dfrac{\dd \mu}{\dd \nu} \right)^p
 \dd \nu \right) .
\end{equation}
We define the R\'enyi entropy for $p=1$ by taking the limit $p \rightarrow 1$, in which case, by Lemma~\ref{lem:M0} and Remark~\ref{rem:ren-ent-M}, it equals Shannon entropy. 
For $p=0$, we use the standard convention,
$\left(\frac{\dd\mu}{\dd\nu}\right)^0
:=
\mathbf 1_{\{\dd\mu/\dd\nu>0\}}$,
and set
$$
H_\nu^0(\mu,\mathcal X) :=\log\nu(\{\dd\mu/\dd\nu>0\}).
$$
\end{definition}

\begin{remark}
    The Shannon entropy \eqref{eq:ent-def} follows from the generalized entropy \eqref{eq:g-entropy} by taking $g(x) = \log (x)$.
    The R\'enyi entropy \eqref{eq:renyi-def} follows from the generalized entropy \eqref{eq:g-entropy} by taking $g(x) = x^{p-1}$, with $p$ as in Definition~\ref{def:renyi-ent}.
\end{remark}

\begin{remark}
The above entropies are usually defined for input measures that are probability measures, and, as expected, taking $\mu(\mathcal{X}) = 1$  reduces the above formulas to the more common expressions. However, expressing the entropies more generally for finite measures will be more useful in what follows. We note that Shannon entropy as written in Definition~\ref{def:shannon} is also known as ``relative entropy'' or ``Baron-Jauch entropy'' \cite{BarJau72, Whe91}, although these are usually written for probability measures only, not in the finite measure form presented above, which was also previously used in \cite{Laz23}. 
\end{remark}

\subsection{Main results}

\begin{thm}[Entropy Admissibility]
    \label{thm:ent-admit}
Let $\mathcal{H}_{g,h}$ be a generalized
$(g,h)$-entropy of the form \eqref{eq:gh-entropy}, with $h$ continuous  and strictly decreasing,
and $g$ continuous and strictly monotone. Then $\mathcal{H}_{g,h}$ satisfies Postulate~A
on the class of finite measure spaces if and only if the following equivalent
conditions hold:
\begin{enumerate}
  \item[(a)] For every finite measure space $(\mathcal{X},\mu)$ and every measurable function
$f>0$ $\mu$-a.e. for which the displayed means are defined,
  \begin{equation}
\label{eq:M-condition}
\mathcal{M}_g(f,(\mathcal{X},\mu))\geq M^{-1}_\mu(f,\mathcal{X}),
\end{equation}
with equality if and only if $f$ is constant $\mu$-a.e.
  \item[(b)] if $g$ is increasing (decreasing) then $g(1/t)$ is strictly convex (concave) for $t \in(0,\infty)$. 
\end{enumerate}
\end{thm}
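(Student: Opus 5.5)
Since $h$ is continuous and strictly decreasing, Postulate~A for $\mathcal{H}_{g,h}$ is equivalent to the reversed inequality for the inner mean. Write $f=\frac{1}{\mu(\mathcal X)}\frac{\dd\mu}{\dd\nu}$. Then the statement becomes
$$\mathcal{M}_g(f,(\mathcal X,\mu))\ \ge\ \mathcal{M}_g\Big(\tfrac{1}{\nu(\mathcal X)}\mathbf 1,(\mathcal X,\nu)\Big)=\frac{1}{\nu(\mathcal X)},$$
with equality iff $\mu=c\nu$.

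The second clause of Postulate~A is automatic for $\mathcal H_{g,h}$. Indeed, $\mathcal H_{g,h}(\nu,(\mathcal X,\nu))=h(1/\nu(\mathcal X))$ depends only on the total mass.

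The key observation is that the harmonic mean of $f$ with respect to $\mu$ equals $1/\nu(\mathcal X)$ exactly. To see this, compute on $\{f>0\}$:
$$\frac{1}{\mu(\mathcal X)}\int \frac{1}{f}\,\dd\mu=\int_{\{\dd\mu/\dd\nu>0\}}\dd\nu=\nu(\{f>0\}).$$
Hence $M^{-1}_\mu(f)=1/\nu(\{f>0\})\ge 1/\nu(\mathcal X)$.

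\textbf{(a) implies Postulate~A.} Chain $\mathcal M_g(f)\ge M^{-1}_\mu(f)\ge 1/\nu(\mathcal X)$. Equality in the first step forces $f$ to be constant $\mu$-a.e. Equality in the second forces $\nu(\{f=0\})=0$. Together these give $\mu=c\nu$.

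\textbf{Postulate~A implies (a).} Given $(\mathcal X,\mu)$ and $f>0$ with the means defined, rescale $f$ so that $\int f^{-1}\dd\mu=\mu(\mathcal X)^2$. Then define $\nu$ by $\dd\nu=\frac{1}{\mu(\mathcal X)f}\dd\mu$. This makes $\mu\ll\nu$, $\nu\ll\mu$, $\frac{1}{\mu(\mathcal X)}\frac{\dd\mu}{\dd\nu}=f$, and $\nu(\mathcal X)=1/M^{-1}_\mu(f)$. Postulate~A then reads exactly as \eqref{eq:M-condition}, including the equality case. Rescaling is harmless because both means are homogeneous of degree one in $f$ on the two sides simultaneously: replace $f$ by $\lambda f$ and note that the construction of $\nu$ only uses $f$ up to this normalization. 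This needs a little care, since $\mathcal M_g$ need not be homogeneous. It is better to avoid rescaling altogether by choosing $\nu=\frac{1}{\mu(\mathcal X)f}\mu$ directly; then the normalized density is $f$ automatically, and $\nu(\mathcal X)=\frac{1}{\mu(\mathcal X)}\int f^{-1}\dd\mu=1/M^{-1}_\mu(f)$ identically. So no homogeneity is needed.

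\textbf{(a) is equivalent to (b).} Substitute $t=1/f$ and set $\psi(t)=g(1/t)$. Then $\mathcal M_g(f)=1/\mathcal M_\psi(t)$, and $M^{-1}_\mu(f)=1/\big(\text{arithmetic mean of } t\big)$. So (a) becomes $\mathcal M_\psi(t)\le \bar t$, with equality iff $t$ is constant.

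Suppose $g$ is increasing, so $\psi$ is decreasing. Then $\mathcal M_\psi(t)\le\bar t$ is equivalent to $\frac{1}{\mu(\mathcal X)}\int\psi(t)\,\dd\mu\ge\psi(\bar t)$. This is Jensen's inequality, and the strict version with the equality case holds iff $\psi$ is strictly convex. The decreasing case is symmetric and gives strict concavity.

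For the converse (that (a) forces strict convexity), test (a) on two-point measures with arbitrary weights. This yields the midpoint-type inequality $\lambda\psi(t_1)+(1-\lambda)\psi(t_2)>\psi(\lambda t_1+(1-\lambda)t_2)$ for $t_1\ne t_2$, which is strict convexity.

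\textbf{Main obstacle.} The expected difficulty is the general-measure Jensen step with integrability. One must ensure that the equality case of Jensen for strictly convex $\psi$ holds whenever the means are defined, including when $\int\psi(t)$ is $+\infty$-valued or $t$ is unbounded. I would handle this with a supporting line at $\bar t$ and strictness off a null set.
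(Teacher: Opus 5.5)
Your proposal is correct and follows essentially the paper's proof. Like the paper, you realize an arbitrary $f>0$ as a normalized density via $\dd\nu_f=\frac{1}{\mu(\mathcal X)f}\,\dd\mu$, use the identity $M^{-1}_\mu(f)=1/\nu(\{\dd\mu/\dd\nu>0\})$ to chain $\mathcal M_g\ge M^{-1}_\mu\ge 1/\nu(\mathcal X)$, and reduce (a) to strict Jensen for $t\mapsto g(1/t)$ with a two-point test for the converse. You should delete the abandoned rescaling detour, since its homogeneity claim for $\mathcal M_g$ is false and taking $\nu_f$ directly makes it unnecessary.
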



The next theorem uses the following regularity condition.

\begin{definition}
\label{def:simple-density-continuity}
Suppose that $\mathcal H$ satisfies Postulate $5'$ with generator
$\varphi$, let $h$ be its atomic entropy scale, and set $g:=\varphi\circ h$.
For a nonzero finite measure $\mu$ and a positive measurable function
$f$, let $\nu_f$ denote the reference measure defined by
$$
\dd\nu_f
:=
\frac{1}{\mu(\mathcal X)f}\,\dd\mu,
$$
whenever this measure is finite. We say that $\mathcal H$ is
\textit{continuous under simple density approximation} if, for every
sequence of positive simple functions $f_n$,
$$
g(f_n)\longrightarrow g(f)
\quad\text{in }
L^1\left(\frac{\mu}{\mu(\mathcal X)}\right)
$$
implies
$$
\mathcal H(\mu,(\mathcal X,\nu_{f_n}))
\longrightarrow
\mathcal H(\mu,(\mathcal X,\nu_f)).
$$
\end{definition}

\begin{thm}[Entropy Characterization]
\label{thm:ent-char}
Let $\mathcal H$ be an entropy on finite reference-measure spaces
satisfying Postulates $A$ and $1$--$3$. 
Let $h$ be its induced atomic entropy scale. Assume also
continuity under simple density approximation in the sense of
Definition~\ref{def:simple-density-continuity}.

Assume further that, for every nonzero finite measure $\mu\ll\nu$, if
$$
Y
:=
\left\{
x\in\mathcal X:
\frac{\dd\mu}{\dd\nu}(x)>0
\right\},
$$
then
$$
\mathcal H(\mu,(\mathcal X,\nu))
=
\mathcal H
\bigl(
\mu|_Y,(Y,\nu|_Y)
\bigr).
$$
Then, on the class of reference-measure structures satisfying
$\mu\ll\nu$, and in order of decreasing generality:
\begin{enumerate}
\item If $\mathcal H$ satisfies Postulate $5'$, then it is given by the
$(g,h)$-entropy \eqref{eq:gh-entropy}, with
$$
g=\varphi\circ h,
$$
where $\varphi$ is the generator in Postulate $5'$.

\item If $\mathcal{H}$ satisfies Postulates $5'$ and $4_{\rm at}$, then it is
given by the $g$-entropy \eqref{eq:g-entropy}.

\item If $\mathcal H$ satisfies Postulates $5'$ and $4$, then it is given by
R\'enyi entropy \eqref{eq:renyi-def}, with $p>0$.

\item If the resulting $g$-entropy satisfies Postulate $4_{\rm int}$, or if
$\mathcal H$ satisfies Postulates $4$ and $5$, then $\mathcal H$ is Shannon
entropy \eqref{eq:ent-def}.
\end{enumerate}
\end{thm}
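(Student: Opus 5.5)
I would prove the four levels in order, each one building on the previous.

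\textbf{Level (1): the $(g,h)$ form.} The plan is to compute $\mathcal H$ first on simple densities and then pass to limits.

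First take a reference structure with $\mu\ll\nu$ and normalized density $f=\frac{1}{\mu(\mathcal X)}\frac{\dd\mu}{\dd\nu}$ that is simple and positive. The support-restriction hypothesis lets me assume $f>0$ everywhere. Write $f=\sum_{i=1}^n c_i\mathbf 1_{A_i}$ with the $A_i$ disjoint.

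By Remark~\ref{rem:ent-notation}, the structure $(\mathcal X,\nu)$ with input $\mu$ is the disjoint sum of the pieces $(A_i,\nu|_{A_i})$ carrying $\mu|_{A_i}$. Each piece has constant normalized density $c_i$. By the equality clause of Postulate~A, each piece has the entropy of its own reference structure. By the second clause of Postulate~A together with Postulate~1, that entropy depends only on the reference mass, so it equals the one-atom value $h(c_i)$ from \eqref{eq:atomic-h}. Some care is needed to track the renormalization $m_i/m$ in the convention, and to check that the atomic value is $h$ of the normalized density rather than of the raw mass. Postulate~2 makes $h$ continuous, and the direction of Postulate~A makes it strictly decreasing.

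Iterating Postulate~$5'$ over the $n$ pieces, which is legitimate because the quasi-arithmetic mean is associative with mass weights, gives
\[
\mathcal H=\varphi^{-1}\Bigl(\tfrac{1}{\mu(\mathcal X)}\textstyle\sum_i \mu(A_i)\,\varphi(h(c_i))\Bigr).
\]
This is exactly \eqref{eq:M-h-inside}, and hence \eqref{eq:gh-entropy} with $g=\varphi\circ h$.

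For a general positive density, I approximate by simple $f_n$ with $g(f_n)\to g(f)$ in $L^1(\mu/\mu(\mathcal X))$. I take truncated simple approximants of $g(f)$ and pull them back through $g^{-1}$, using monotonicity of $g$. The structures are then $\nu_{f_n}\to\nu_f$ as in Definition~\ref{def:simple-density-continuity}. That continuity hypothesis, together with continuity of $h\circ g^{-1}$, passes the formula to the limit.

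\textbf{Level (2): the $g$-entropy.} Postulate~$4_{\rm at}$, continuity, and Postulate~3 force $h=-\log$ by Remark~\ref{rem:h-log}. Substituting this into level (1) gives \eqref{eq:g-entropy}.

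\textbf{Level (3): the R\'enyi family.} Postulate~4 implies $4_{\rm at}$, by taking products of one-atom spaces. So $\mathcal H=\mathcal H_g$. Evaluating on products of two-atom simple densities turns external additivity into the functional equation
\[
\mathcal M_g(f\otimes f')=\mathcal M_g(f)\,\mathcal M_g(f')
\]
for independent products. This is the classical condition characterizing quasi-arithmetic means that are multiplicative, i.e.\ homogeneous. The known solutions (Acz\'el--Dar\'oczy) are $g(x)=x^{p-1}$ up to affine change, or $g=\log$. I would invoke that result, or reprove it by fixing $f'$ constant to get homogeneity of $\mathcal M_g$, and then use the Hardy--Littlewood--P\'olya characterization of homogeneous quasi-arithmetic means as power means.

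Postulate~A then restricts the exponent. By Theorem~\ref{thm:ent-admit}(b), $t\mapsto t^{1-p}$ must be strictly convex when increasing and strictly concave when decreasing. Checking cases gives $p>0$, $p\ne1$, with $g=\log$ giving the case $p=1$. This yields \eqref{eq:renyi-def}.

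\textbf{Level (4): Shannon entropy.} There are two routes.
\begin{itemize}
\item Under Postulates~4 and 5, $\varphi$ is affine, so $g$ is an affine image of $h$. Level (3) forces $g$ to be a power or $\log$, and only $g=\log$ is compatible with $h=-\log$.
\item Under $4_{\rm int}$, apply it to $f$ and to a constant $f'=c$ on a two-atom space. Then apply it to general pairs $f,f'$ and compare with a power generator $x^{q}$. Multiplicativity for arbitrary non-independent $f,f'$ fails for $q\neq0$ by a two-point counterexample, so $g=\log$.
\end{itemize}
Lemma~\ref{lem:4int-4} lets the second route reuse level (3).

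\textbf{Main obstacle.} I expect the hardest step to be the base case of level (1). It requires showing rigorously that a constant-density piece has entropy exactly $h(c_i)$, which combines relabeling invariance, the mass-only dependence in Postulate~A, and the rescaling convention for disjoint sums. The second difficulty is producing simple approximants for which the $L^1$ convergence of $g(f_n)$ actually holds when $g(f)$ is unbounded.
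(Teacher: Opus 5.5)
Your proposal is correct and follows the paper's proof essentially step for step: simple densities via the equality case of Postulate~A and iterated Postulate~$5'$; truncated simple approximants plus the continuity hypothesis; $4_{\rm at}\Rightarrow h=-\log$; product multiplicativity forcing power or logarithmic generators, with Postulate~A giving $p>0$; and $4_{\rm int}\Rightarrow 4$ plus the two-point counterexample. Three small points: (i) the pieces must carry the rescaled references $\frac{\mu(\mathcal X)}{\mu(A_i)}\nu|_{A_i}$, not $\nu|_{A_i}$, both to recompose to $\nu$ under Remark~\ref{rem:ent-notation} and to have normalized density $c_i$ (reference mass $1/c_i$); (ii) strict decrease of $h$ needs the effective-support hypothesis (compare $\nu|_Y$ inside $(\mathcal X,\nu)$ with $\nu$), not just the direction of Postulate~A; and (iii) in the $4+5$ branch you derive Shannon internally ($\varphi=\mathrm{id}$, so $g=h=-\log$), whereas the paper cites R\'enyi's discrete theorem, so your route is the more self-contained one in this measure-theoretic setting.
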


\begin{remark}
The same admissibility condition also has an equivalent statistical
formulation as data processing for the associated Csisz\'ar $f$-divergence, or as entropy monotonicity under Markov kernels; see Proposition~\ref{prop:data-processing}.
In Theorem~\ref{thm:ent-admit}, condition $(b)$ sharpens condition $(a)$
into a criterion that depends only on $g$, independent of $\mu$, and $\nu$. For Shannon
entropy, $g=\log$ is increasing and $g(1/t)=-\log t$ is strictly convex on $(0,\infty)$. For R\'enyi entropy of order $p$, $g(t)=t^{p-1}$;
for $p>1$, $g$ is increasing and $g(1/t)=t^{1-p}$ is strictly convex,
while for $0<p<1$, $g$ is decreasing and $g(1/t)=t^{1-p}$ is strictly
concave. Thus, Theorem~\ref{thm:ent-admit} answers Question~\ref{renyi-ques'}
explicitly:
$$
   g \text{ is admissible in Postulate~5}'
  \iff
    g \text{ increasing (decreasing) and } t\mapsto g(1/t) \text{ strictly convex (concave)}.
$$
\end{remark}

\begin{remark}
\label{rem:summary}
    We may summarize the hierarchy of the four regimes in Theorem~\ref{thm:ent-char} based on the postulates they satisfy:
$$
\begin{array}{ccccccc}
\begin{array}[t]{c}
4+5\\
\text{or } 4_{\mathrm{int}}+5'
\end{array}
& \subseteq &
4+5'
& \subseteq &
A+5'+4_{\rm at}
& \subseteq &
A+5'
\\[6pt]
\text{(Shannon entropy)}
&
&
\text{(R\'enyi entropy)}
&
&
\text{($g$-entropy)}
&
&
\text{($(g,h)$-entropy)} .
\end{array}
$$
\end{remark}

Fig.~\ref{fig:postulate-tree} summarizes the relationship between the various postulates and the corresponding entropies.

\begin{figure}[t]
\centering
\resizebox{0.92\textwidth}{!}{%
\begin{tikzpicture}[
    >=Latex,
    every node/.style={align=center, font=\small},
    post/.style={
        draw,
        rounded corners,
        thick,
        inner sep=3pt,
        fill=blue!3,
        text width=2.8cm
    },
    outcome/.style={
        draw,
        rounded corners,
        thick,
        inner sep=3pt,
        fill=green!5,
        text width=2.7cm
    },
    aux/.style={
        draw,
        rounded corners,
        thick,
        inner sep=3pt,
        fill=orange!5,
        text width=2.9cm
    },
    line/.style={->, thick}
]

\node[aux, text width=4.1cm] at (0,0) (A5p)
{Postulate A\\
(monotonicity)\\
Postulate $5'$\\
(quasi-mean law)};

\node[outcome] at (5.2,0) (ghent)
{generalized\\
$(g,h)$-entropy};

\node[aux] at (-4.2,-3.0) (p5)
{Postulate $5$\\
(arithmetic mean)};

\node[post] at (0,-3.0) (p4at)
{Postulate $4_{\rm at}$\\
(atomic additivity)\\
$h=-\log$};

\node[post] at (4.2,-3.0) (p4q)
{Postulate $4_q$\\
($q$-additivity)\\
$h=h_q$};

\node[outcome] at (2.0,-5.0) (gent)
{generalized\\
$g$-entropy};

\node[outcome] at (4.2,-6.2) (tsallis)
{Tsallis\\
entropy};

\node[post] at (0,-6.2) (p4)
{Postulate $4$\\
(external \\additivity)};

\node[aux, text width=1.8cm] at (-3.3,-8.9) (p45)
{Postulates $4+5$};

\node[post] at (0,-8.9) (p4int)
{Postulate $4_{\rm int}$\\
(internal \\additivity)};

\node[outcome] at (4.2,-8.9) (renyi)
{R\'enyi\\
entropy};

\node[outcome] at (-0.9,-11.6) (shannon)
{Shannon\\
entropy};

\draw[line] (A5p) -- (ghent);
\draw[line] (A5p) -- (p5);
\draw[line] (A5p) -- (p4at);
\draw[line] (A5p) -- (p4q);

\draw[line] (p4at) -- (p4);
\draw[line] (p4at.south east) to[out=-40,in=100] (gent.north);
\draw[line] (p4q) -- (tsallis);

\draw[line] (p4) -- (renyi);
\draw[line] (p4) -- (p4int);

\draw[line] (p5) -- (p45);
\draw[line] (p4) -- (p45);
\draw[line] (p45) -- (shannon);
\draw[line] (p4int) -- (shannon);

\end{tikzpicture}%
}
\caption{Relationship between the postulates and the corresponding entropy
regimes. Arrows from postulates to postulates indicate additional or
stronger assumptions, while arrows to green boxes indicate the resulting
entropy class. The branches $4_{\rm at}$ and $4_q$ represent different
choices of entropy scale. The node $4+5$ and the branch
$4_{\rm int}+5'$ record the two routes to Shannon entropy.}
\label{fig:postulate-tree}
\end{figure}

\begin{remark}
R\'enyi's formulation of Question~\ref{renyi-ques} suggests two distinct notions of admissibility. If admissibility is interpreted as compatibility with the full additivity Postulate 4, then Theorem~\ref{thm:ent-char} shows that the admissible generators reduce to the power-law and logarithmic cases, giving the R\'enyi--Shannon branch of the hierarchy. However, R\'enyi's discussion of the power law generator as representing one option among other admissible choices points to a broader admissibility problem that is only possible by weakening his additivity postulate (Postulate 4). This is precisely Question~\ref{renyi-ques'}, and Theorem~\ref{thm:ent-admit} answers it by the convexity/concavity criterion for $t\mapsto g(1/t)$. This separation is also consistent with R\'enyi's later treatment of relative information measures, for which he proved that the same postulates select the logarithmic and power-law generators. Thus Theorem~\ref{thm:ent-admit} isolates the general admissibility criterion, while Theorem~\ref{thm:ent-char} explains why the stronger additivity assumptions recover the classical Rényi and Shannon cases.
\end{remark}

We end this section by considering Tsallis entropy \cite{Tsallis88} as an example of how an entropy we have not yet considered fits within the framework presented above.
\begin{example}
\label{ex:tsallis}
     The \textit{Tsallis entropy} is defined in our notation as 
    $$
S^q_\nu(\mu,\mathcal{X}) 
=   \dfrac{1}{q-1 } \left(  1-
 \frac{1}{\mu({\mathcal{X}})}\int_{{\mathcal{X}}} \left( \frac{1}{\mu({\mathcal{X}})}\dfrac{\dd \mu}{\dd \nu} \right)^{q-1}
 \dd \mu  \right) 
 =   \dfrac{1}{q-1 } \left(  1-
 \frac{1}{\mu({\mathcal{X}})^q}\int_{{\mathcal{X}}} \left( \dfrac{\dd \mu}{\dd \nu} \right)^q
 \dd \nu  \right) .
$$
Equivalently, if
$$
g(t)=t^{q-1},
\qquad
h_q(u)=\frac{1-u^{q-1}}{q-1},
$$
then
$$
S^q_\nu(\mu,\mathcal X)
=
h_q\!\left(
\mathcal{M}_g\left(\frac{1}{\mu(\mathcal X)}
\frac{d\mu}{d\nu},(\mathcal X,\mu) \right)
\right).
$$
Thus, Tsallis entropy uses the same mean generator
$g(t)=t^{q-1}$ as R\'enyi entropy of order $q$, but with a different entropy scale. 
R\'enyi entropy corresponds to the scale
$h(u)=-\log u$, while Tsallis entropy corresponds to $h_q$.

This difference in the inner scale is exactly reflected in the product
composition law. The logarithmic scale satisfies ordinary atomic additivity, per Postulate $4_{\rm at}$,
$$
-\log(rs)= -\log r-\log s,
$$
and imposing the same additivity for full (non-atomic) entropies leads to the ordinary additivity of Postulate $4$. By contrast, the Tsallis scale satisfies
$$
h_q(rs)
=
h_q(r)+h_q(s)+(1-q)h_q(r)h_q(s),
$$
which leads to the corresponding product rule given by
\begin{itemize}
    \item[] \textsc{Postulate $4_{q}$.} ($q$-Additivity) \quad $\mathcal{S}(\mu \otimes \nu) = \mathcal{S}(\mu) + \mathcal{S}(\nu) +(1-q)\mathcal{S}(\mu) \mathcal{S}(\nu)$.
\end{itemize}
Nevertheless, since the admissibility condition of
Theorem~\ref{thm:ent-admit} depends only on the mean generator $g$, Tsallis
entropy satisfies Postulate A for the same reason as R\'enyi entropy of the
same order. For fixed $q$, Tsallis entropy is therefore a monotone reparameterization of R\'enyi entropy and determines the same entropy ordering, as previously noted \cite{CzaNau02, DukkBhat07, Leo08}.
However, it belongs naturally to the generalized $(g,h)$-entropy class rather than to the generalized $g$-entropy class (see Fig.~\ref{fig:postulate-tree}), because it replaces the atomic additivity of Postulate $4_{\rm at}$ by the $q$-additivity scale above. The two scales coincide in the Shannon limit $q \to 1$.
\end{example}

\vspace{0.5em}

\section{Statistical Interpretation and Applications}
\label{sec:stat-interpret}

\subsection{Interpretation and use of the hierarchy}

Theorem~\ref{thm:ent-char} separates four structural choices that are
often conflated in the definition of entropy: comparison with a reference
measure, aggregation across disjoint components, the numerical entropy
scale, and composition under products.

At the first level, Postulate A requires the proportional reference case
to be the unique entropy maximizer, while Postulate $5'$ requires
disjoint components to be combined through a generalized (quasi-arithmetic) mean. The
resulting $(g,h)$-entropy separates two roles: $g$ determines how the
normalized Radon--Nikodym density is averaged, while $h$ determines the
scale on which that mean is reported. The ordering imposed by Postulate A only constrains $h$ to be strictly decreasing, so different choices of $h$ may change its numerical scale while preserving the same entropy ordering.
Postulate $4_{\rm at}$ essentially fixes this outer scale to be given by
$h(r)=-\log r$.
This produces the generalized $g$-entropies.

External additivity (Postulate 4) imposes a stronger restriction. For product reference-measure structures, the normalized densities multiply. Requiring the corresponding generalized mean to factor across products selects the power-law generators, and hence the R\'enyi family of entropies. Thus, external additivity restricts not only the entropy scale but also the manner in which density information is averaged, restricting $g$.

Shannon entropy is obtained at the most rigid level. R\'enyi's original route combines external additivity (Postulate 4) with the arithmetic mean-value
postulate (Postulate 5). Alternatively, Postulate $4_{\rm int}$ requires factorization under multiplication of positive functions within a fixed measure space. Among the power means, only the geometric mean has this property, giving 
Shannon entropy.

The hierarchy may therefore be used diagnostically. Structural
monotonicity, as given by Postulate A, determines whether a proposed generator is admissible.
Atomic additivity fixes the logarithmic entropy scale, a common choice of scale, though other choices can be equally reasonable in certain contexts ({\it e.g.}, Tsallis entropy in Example~\ref{ex:tsallis}). External additivity reduces the class to R\'enyi entropy, while internal
additivity singles out Shannon entropy. The framework consequently
identifies which structural assumptions are retained or relinquished when passing among Shannon, R\'enyi, Tsallis, and more general admissible entropy functionals.

\subsection{Admissible entropy functionals: construction and examples}
\label{subsec:admissible-examples}

Theorem~\ref{thm:ent-admit} is constructive rather than merely
classificatory: once an admissible generator, $g$, is identified, it
immediately determines a generalized entropy through
Definition~\ref{def:gh-ent} after a choice of entropy scale, $h$. We first give a practical differential
test for admissibility and then illustrate the result with both established
entropy families and further admissible functionals generated by the
theorem. 

Throughout the remainder of this subsection, $P$ and $Q$ denote
probability measures. Whenever $dP/dQ$ appears, $P\ll Q$ is understood;
stronger equivalence or positivity assumptions are stated explicitly where
they are required.

To ease the statement of the results below we define
$$
\varepsilon_g
:=
\begin{cases}
1, & g \textrm{ increasing},\\
-1, & g \textrm{ decreasing}.
\end{cases}
$$

\begin{cor}[Differential admissibility criterion]
\label{cor:differential-admissibility}
Let $g\in C^2(0,\infty)$ be strictly monotone. Then $g$ is admissible whenever
$$
\varepsilon_g \cdot\big(2g'(x)+xg''(x) \big)>0
\qquad
\text{for all }x>0.
$$
\end{cor}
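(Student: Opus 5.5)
The plan is to reduce the corollary to condition (b) of Theorem~\ref{thm:ent-admit} and then check (b) with a second-derivative test. Put $\psi(t):=g(1/t)$ for $t\in(0,\infty)$. By Theorem~\ref{thm:ent-admit}, $g$ is admissible (that is, $\mathcal H_{g,h}$ satisfies Postulate~A for any continuous strictly decreasing $h$) exactly when $\psi$ is strictly convex if $g$ is increasing and strictly concave if $g$ is decreasing. In the notation $\varepsilon_g$, this says that $\varepsilon_g\psi$ is strictly convex on $(0,\infty)$. So it is enough to prove that the displayed inequality makes $\varepsilon_g\psi$ strictly convex.

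Since $g\in C^2(0,\infty)$ and $t\mapsto 1/t$ is smooth on $(0,\infty)$, the chain rule gives $\psi\in C^2(0,\infty)$. Writing $x=1/t$, we have
\begin{equation*}
\psi'(t)=-t^{-2}g'(1/t),\qquad
\psi''(t)=2t^{-3}g'(1/t)+t^{-4}g''(1/t)=x^{3}\bigl(2g'(x)+xg''(x)\bigr).
\end{equation*}
The map $t\mapsto 1/t$ is a bijection of $(0,\infty)$ onto itself and $x^3>0$. The hypothesis $\varepsilon_g(2g'(x)+xg''(x))>0$ for all $x>0$ is therefore equivalent to $(\varepsilon_g\psi)''(t)>0$ for all $t>0$.

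Finally, I use the standard fact that a $C^2$ function with strictly positive second derivative on an open interval is strictly convex. This follows because its derivative is strictly increasing, so by the mean value theorem the chord inequality is strict. Hence $\varepsilon_g\psi$ is strictly convex, condition (b) holds, and Theorem~\ref{thm:ent-admit} gives admissibility.

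There is no real obstacle here. The only points needing care are getting the sign $\varepsilon_g$ to line up with the convex/concave dichotomy, and the change of variables $x=1/t$ in the second derivative. The condition is only sufficient, not necessary: strict convexity allows $\psi''$ to vanish at isolated points. That is why the corollary is stated with ``whenever''.
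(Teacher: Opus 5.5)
Your proposal is correct and follows essentially the same route as the paper. The paper also computes $\frac{d^2}{dt^2}g(1/t)=t^{-4}\bigl[g''(1/t)+2t\,g'(1/t)\bigr]$, matches its sign to that of $2g'(x)+xg''(x)$ via $x=1/t$, and invokes condition (b) of Theorem~\ref{thm:ent-admit}; your explicit argument that a strictly positive second derivative gives strict convexity fills in a step the paper leaves implicit.
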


\begin{proof}
Direct differentiation gives
$$
\frac{d^2}{dt^2}g(1/t)
=
\frac{1}{t^4}
\left[
g''(1/t)+2t\,g'(1/t)
\right].
$$
Setting $x=1/t$, the sign of this expression is the sign of $2g'(x)+xg''(x)$.
The result therefore follows from Theorem~\ref{thm:ent-admit}.
\end{proof}

The admissibility criterion also has a useful connection with the classical
theory of statistical divergences. Let $P\ll Q$ be probability measures
and write $r=dP/dQ$. If $g$ is admissible, then
$$
F_g(t):=\varepsilon_g\cdot \big(t g(t)-g(1) \big), \quad t>0
$$
is strictly convex, which follows from Theorem~\ref{thm:ent-admit}. If $F_g \in C^2(0, \infty)$, then
$$
F_g''(r)= \varepsilon_g\cdot \big(2g'(r)+r g''(r)\big) >0.
$$
Moreover,
$$
D_{F_g}(P\Vert Q)
:=
\int_{\mathcal{X}}
F_g\left(\frac{dP}{dQ}\right)\,dQ
=
\varepsilon_g\cdot \left(\int_{\mathcal{X}}
g\left(\frac{dP}{dQ}\right)\,dP-g(1) \right),
$$
and hence
$$
H_g(P,(\mathcal{X},Q))
=
-\log
g^{-1}\left(
g(1)+\varepsilon_gD_{F_g}(P\Vert Q)
\right).
$$
Thus an admissible generator determines an associated Csisz\'ar
$f$-divergence, and the corresponding $g$-entropy is a monotone
transformation of that divergence. Theorem~\ref{thm:ent-admit}, however,
characterizes generalized-mean generators compatible with Postulate~A;
it is not itself a characterization of arbitrary $f$-divergences. This connection places the resulting functionals within the broader literature
on minimum-divergence estimation, density-ratio estimation, proper losses,
and variational divergence methods
\cite{Csiszar67,NguWaiJor10,ReidWill11,NowCseTom16}.

\begin{prop}[Admissible representations of $f$-divergences]
\label{prop:f-divergence-representation}
Let $F\in C^1(0,\infty)$ be strictly convex and satisfy
$F(1)=0$. For $c\in\mathbb R$ and $\varepsilon \in \{1, -1\}$, define
$$
g_{\varepsilon,c}(t):=\frac{\varepsilon F(t)+c}{t}.
$$
Then the function $\varepsilon g_{\varepsilon,c}$ is an increasing admissible generator if and only if
$$
t  F'(t) - F(t)> \varepsilon c
\qquad
\text{for every }t>0.
$$
In this case, $F(t)=\varepsilon \cdot\big(tg_{\varepsilon,c}(t)-g_{\varepsilon,c}(1) \big)$.

Consequently, $F$ admits an admissible entropy-generator
representation if and only if
\begin{equation}
\label{ineq:f-div-admit}
\inf_{t>0}
\left\{
t F'(t)-F(t)
\right\} > -\infty.
\end{equation}
\end{prop}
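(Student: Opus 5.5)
The plan is to reduce everything to Theorem~\ref{thm:ent-admit}(b) applied to $G:=\varepsilon g_{\varepsilon,c}$, that is,
$$G(t)=\frac{F(t)+\varepsilon c}{t},$$
using $\varepsilon^2=1$. Because $G$ is continuous, ``increasing admissible generator'' means two things: $G$ is strictly increasing, and $t\mapsto G(1/t)$ is strictly convex on $(0,\infty)$. The proof checks these two conditions separately. Convexity turns out to hold automatically, and monotonicity gives exactly the stated inequality.

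For convexity, write
$$G(1/t)=tF(1/t)+\varepsilon c\,t.$$
The first term is the perspective of $F$. I would show it is strictly convex directly. For $s,t>0$ and $\lambda\in(0,1)$, set $u=\lambda s+(1-\lambda)t$. Then
$$uF(1/u)=uF\!\left(\tfrac{\lambda s}{u}\cdot\tfrac1s+\tfrac{(1-\lambda)t}{u}\cdot\tfrac1t\right),$$
and strict convexity of $F$ gives strict inequality whenever $s\neq t$. Adding the linear term $\varepsilon c\,t$ preserves strict convexity. So this half of criterion (b) holds for every $c$ and $\varepsilon$.

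For monotonicity, differentiate: $G'(t)=\bigl(\phi(t)-\varepsilon c\bigr)/t^2$, where $\phi(t):=tF'(t)-F(t)$. The key step is that $\phi$ is strictly increasing. Take $0<s<t$ and use the tangent inequality $F(s)\ge F(t)+F'(t)(s-t)$. This gives
$$\phi(t)-\phi(s)\ge tF'(t)-sF'(s)+F'(t)(s-t)=s\bigl(F'(t)-F'(s)\bigr)>0,$$
since $F'$ is strictly increasing.

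Now both directions follow.
\begin{itemize}
\item If $\phi>\varepsilon c$ everywhere, then $G'>0$, so $G$ is strictly increasing.
\item Conversely, suppose $G$ is strictly increasing, so $G'\ge0$ and $\phi\ge\varepsilon c$. If $\phi(t_0)=\varepsilon c$ at some $t_0$, strict monotonicity of $\phi$ would force $\phi<\varepsilon c$ on $(0,t_0)$, a contradiction. Hence $\phi>\varepsilon c$ everywhere.
\end{itemize}

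The identity for $F$ is a direct computation. Since $F(1)=0$, we have $g_{\varepsilon,c}(1)=\varepsilon F(1)+c=c$, and $t\,g_{\varepsilon,c}(t)=\varepsilon F(t)+c$. Therefore
$$\varepsilon\bigl(t\,g_{\varepsilon,c}(t)-g_{\varepsilon,c}(1)\bigr)=\varepsilon\cdot\varepsilon F(t)=F(t).$$

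For the ``consequently'' statement, suppose first that some $(\varepsilon,c)$ works. Then $\phi>\varepsilon c$ everywhere, so $\inf\phi\ge\varepsilon c>-\infty$. Conversely, if $m:=\inf\phi>-\infty$, take $\varepsilon=1$ and $c=m-1$; then $\phi>\varepsilon c$ everywhere. (Equally, $\varepsilon=-1$ with $c=1-m$ works.)

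The main point requiring care is the step that $\phi$ is strictly increasing using only $F\in C^1$. This is what upgrades $G'\ge 0$ to the strict inequality in the statement; the tangent-line argument handles it without any second derivatives.
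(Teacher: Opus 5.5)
Your proposal is correct and follows the paper's proof. Like the paper, you set $k=\varepsilon g_{\varepsilon,c}$, note that $k(1/t)=tF(1/t)+\varepsilon c\,t$ is strictly convex (the perspective of $F$ plus an affine term), read monotonicity off $k'(t)=(tF'(t)-F(t)-\varepsilon c)/t^2$, and invoke Theorem~\ref{thm:ent-admit}(b); your tangent-line proof that $tF'(t)-F(t)$ is strictly increasing usefully justifies the converse step (upgrading $k'\ge 0$ to strict inequality), which the paper asserts without argument.
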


\begin{proof}
Set
$k_{\varepsilon,c}
:=
\varepsilon g_{\varepsilon,c}$.
Then
$$
k_{\varepsilon,c}(1/t)
=
t F(1/t)+\varepsilon ct.
$$
The map $t \mapsto t F(1/t)$ is strictly convex, while the second term is affine. Hence
$u\mapsto k_{\varepsilon,c}(1/t)$ is strictly convex.
Moreover,
$$
k_{\varepsilon,c}'(t)
=
\frac{
tF'(t)-F(t)-\varepsilon c
}{t^2}.
$$
Thus $k_{\varepsilon,c}$ is strictly increasing exactly when
$tF'(t)-F(t)>\varepsilon c$
for every $t>0$.
Theorem~\ref{thm:ent-admit} then shows that
$k_{\varepsilon,c}=\varepsilon g_{\varepsilon,c}$ is admissible.

Finally,
$$
\varepsilon
\left(
tg_{\varepsilon,c}(t)-g_{\varepsilon,c}(1)
\right)
=
F(t),
$$
since $F(1)=0$. A suitable choice of $c$ exists for at least one
$\varepsilon\in\{1,-1\}$ if and only if
condition~\eqref{ineq:f-div-admit} is satisfied.
\end{proof}

Proposition~\ref{prop:f-divergence-representation} identifies the precise
relationship between the present entropy admissibility framework and
Csisz\'ar $f$-divergences. Strict convexity of $F$ corresponds
exactly to the differential admissibility condition of Corollary~\ref{cor:differential-admissibility}.
The additional restriction is monotonicity of $g$, which becomes a
first-order condition on 
$t F'(t)- F(t)$.
Thus, admissible entropy generators correspond not to arbitrary
$f$-divergence generators without qualification, but to those convex
generators satisfying the  first-order condition~\eqref{ineq:f-div-admit}  in Proposition~\ref{prop:f-divergence-representation}.

\begin{prop}[Admissibility and data processing]
\label{prop:data-processing}
Let $g:(0,\infty)\to\mathbb R$ be continuous and strictly monotone, and define
$$
F_g(t)
:=
\varepsilon_g\big(tg(t)-g(1)\big),
\qquad t>0.
$$
For any measurable space $(\mathcal{X},\sigma({\mathcal{X})})$ and probability
measures $P$ and $Q$ on $\mathcal X$ satisfying $P\sim Q$, write
$$
D_{F_g}(P\Vert Q)
:=
\int_{\mathcal X}
F_g\left(\frac{\dd P}{\dd Q}\right)\dd Q,
$$
whenever the integral is finite. Then
$$
\mathcal H_g(P,(\mathcal X,Q))
=
-\log g^{-1}
\left(
g(1)+\varepsilon_gD_{F_g}(P\Vert Q)
\right).
$$

The following statements are equivalent:
\begin{enumerate}
\item The generator $g$ is admissible.
\item The function $F_g$ is strictly convex.
\item For every pair of measurable spaces
$(\mathcal{X}, \sigma(\mathcal{X}))$ and $(\mathcal{Y},\sigma(\mathcal{Y}))$, every pair of probability measures $P\sim Q$ on $\mathcal X$ for which $D_{F_g}(P\Vert Q)<\infty$, and every Markov kernel $
K:\mathcal{X}\times\sigma(\mathcal{Y})\rightarrow[0,1]$,
one has
$$
\mathcal{H}_g(PK,(\mathcal{Y},QK))
\geq
\mathcal{H}_g(P,(\mathcal{X},Q)).
$$
Here the image measures are defined by
$$
PK(B):=\int_{\mathcal X}K(x,B)\,\dd P(x),
\qquad
QK(B):=\int_{\mathcal X}K(x,B)\,\dd Q(x),
\qquad B\in \sigma(\mathcal{Y}).
$$

Moreover, let $M$ be the joint probability measure on
$\mathcal X\times\mathcal Y$,  $
M(\dd x,\dd y):=Q(\dd x)K(x,\dd y)$,
and let $\mathbf X$ and $\mathbf Y$ denote the coordinate random
variables $\mathbf{X}(x,y):=x$
and $\mathbf{Y}(x,y):=y$.
Then equality holds if and only if
$$
\frac{\dd P}{\dd Q}(\mathbf X)
=
\mathbb E_M\!\left[\frac{\dd P}{\dd Q}(\mathbf X)\mid\mathbf Y\right]
\qquad M\text{-almost surely}.
$$

\end{enumerate}
\end{prop}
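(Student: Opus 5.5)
The plan has four stages: verify the representation formula, prove (1)$\Leftrightarrow$(2) via a perspective identity, prove (2)$\Rightarrow$(3) together with the equality case using conditional Jensen, and prove (3)$\Rightarrow$(2) with a two-point construction.

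\textbf{Representation formula.} Let $r=\dd P/\dd Q$ and use $\dd P = r\,\dd Q$. Then
$$
\int g(r)\,\dd P=\int r g(r)\,\dd Q=g(1)+\varepsilon_g\int F_g(r)\,\dd Q ,
$$
where we use $\int \dd Q=1$ and $\varepsilon_g^2=1$. Since $P(\mathcal X)=1$, the normalized density in Definition~\ref{def:g-ent} is exactly $r$. Substituting into \eqref{eq:g-entropy} gives the displayed formula for $\mathcal H_g(P,(\mathcal X,Q))$.

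\textbf{(1)$\Leftrightarrow$(2).} By Theorem~\ref{thm:ent-admit}(b), admissibility means that $\varepsilon_g\, g(1/t)$ is strictly convex on $(0,\infty)$. The key identity is
$$
\varepsilon_g\, g(1/t) = t\,F_g(1/t)+\varepsilon_g g(1),
$$
so $\varepsilon_g g(1/t)$ is the perspective of $F_g$ up to an additive constant. The perspective map $F\mapsto tF(1/t)$ preserves strict convexity in both directions, because it is an involution. This can be checked directly from the definition of convexity: apply the weights $\lambda t_1/(\lambda t_1+(1-\lambda)t_2)$ at the points $1/t_i$. Hence $F_g$ is strictly convex if and only if $\varepsilon_g g(1/t)$ is, which is (1)$\Leftrightarrow$(2). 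No differentiability is needed, which is why I use this argument rather than Corollary~\ref{cor:differential-admissibility}.

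\textbf{(2)$\Rightarrow$(3) and the equality case.} Under $M$ we have $\dd(PK)/\dd(QK)(\mathbf Y)=\mathbb E_M[r(\mathbf X)\mid \mathbf Y]$. To see this, test against $B\in\sigma(\mathcal Y)$:
$$
PK(B)=\int r(x)K(x,B)\,\dd Q=\mathbb E_M\bigl[r(\mathbf X)\mathbf 1_B(\mathbf Y)\bigr].
$$
Also $PK\sim QK$. Conditional Jensen applied to the convex function $F_g$ then gives
$$
D_{F_g}(PK\Vert QK)=\mathbb E_M\bigl[F_g(\mathbb E[r(\mathbf X)\mid\mathbf Y])\bigr]\le \mathbb E_M\bigl[F_g(r(\mathbf X))\bigr]=D_{F_g}(P\Vert Q).
$$
The map $D\mapsto -\log g^{-1}(g(1)+\varepsilon_g D)$ is decreasing in $D$ for either monotonicity of $g$: $g^{-1}$ has the same monotonicity as $g$, and $\varepsilon_g$ corrects the sign. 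The entropy inequality in (3) follows.

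For equality, note that $D_{F_g}(P\Vert Q)<\infty$, so we may subtract. Equality in the entropy inequality is then equality in Jensen. Strict convexity forces $r(\mathbf X)$ to be $M$-a.s.\ equal to its conditional expectation given $\mathbf Y$. I expect the rigorous conditional strict-Jensen step to be the main technical point. I would handle it with a supporting-line argument: write $F_g(x)\ge F_g(m)+s(m)(x-m)$ with a measurable subgradient choice $s$ at $m=\mathbb E[r\mid \mathbf Y]$. The gap in this inequality has zero expectation and is nonnegative, so it vanishes a.s. Strict convexity makes the gap positive whenever $x\neq m$.

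\textbf{(3)$\Rightarrow$(2).} Suppose $F_g$ is not strictly convex. Since $\varepsilon_g g(1/t)$ is continuous, there are points $a<b$ and a weight $\lambda\in(0,1)$ with $F_g(\lambda a+(1-\lambda)b)\ge \lambda F_g(a)+(1-\lambda)F_g(b)$. The task is to realize this as a strict entropy decrease under a kernel, which requires arranging the normalizations. Take $\mathcal X=\{1,2,3\}$ and choose $Q$ and $P$ so that the density $r$ equals $a$ on atom $1$, $b$ on atom $2$, and some $c$ on atom $3$, where $c$ is chosen so that $P$ is a probability measure. The weights $Q(\{1\}),Q(\{2\})$ must be proportional to $\lambda,1-\lambda$, scaled small enough that $c>0$ is feasible. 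Let $K$ merge atoms $1$ and $2$ and fix atom $3$. The merged density is then $\lambda a+(1-\lambda)b$, and $D_{F_g}$ changes by
$$
(\lambda Q_{12})\bigl[F_g(\lambda a+(1-\lambda)b)-\lambda F_g(a)-(1-\lambda)F_g(b)\bigr]\ge 0,
$$
where $Q_{12}=Q(\{1,2\})$. This is an increase or no change, but $r$ is not measurable with respect to $\mathbf Y$, contradicting the strict equality characterization. If only non-strict convexity fails, the same construction contradicts the inequality itself.
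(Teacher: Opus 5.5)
Your proposal is correct and follows essentially the same route as the paper: the same rearrangement for the representation formula, conditional Jensen with $\dd(PK)/\dd(QK)(\mathbf Y)=\mathbb E_M[r(\mathbf X)\mid\mathbf Y]$ for (2)$\Rightarrow$(3), and the same three-atom merging kernel for (3)$\Rightarrow$(2) (argued by contradiction rather than directly), with your perspective argument making explicit the (1)$\Leftrightarrow$(2) equivalence that the paper only asserts. There are two harmless slips: the identity should read $\varepsilon_g g(1/t)=tF_g(1/t)+\varepsilon_g g(1)\,t$, which is an affine rather than constant correction and still preserves strict convexity, and the change in $D_{F_g}$ carries the factor $Q_{12}$, not $\lambda Q_{12}$.
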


\begin{remark}
The equivalence between parts $(1)$ and $(2)$ begins with a generator
$g$ that is already assumed strictly monotone. Hence
$F_g$ automatically has the entropy representation,
$F_g(t)
=
\varepsilon_g\bigl(tg(t)-g(1)\bigr)$.
By contrast, an arbitrary strictly convex $f$-divergence generator
$F$ need not admit such a representation by a monotone $g$.
Proposition~\ref{prop:f-divergence-representation} shows that the
additional necessary and sufficient condition is  condition~\eqref{ineq:f-div-admit}.
\end{remark}

\begin{proof}[Proof of Proposition~\ref{prop:data-processing}]
The equivalence of $(1)$ and $(2)$ follows from
Theorem~\ref{thm:ent-admit}. Indeed, the strict convexity or concavity of
$t\mapsto g(1/t)$ is equivalent
to the strict convexity of
$t\mapsto \varepsilon_g\,tg(t)$.

Let $R:=\frac{\dd P}{\dd Q}$. 
Then 
$$
\mathcal H_g(P,(\mathcal X,Q))
=
-\log g^{-1}
\left( \int_{\mathcal X}g(R)\,\dd P
\right).
$$
Starting from
$$
D_{F_g}(P\Vert Q)
=
\varepsilon_g
\left(
\int_{\mathcal X}g(R)\,\dd P-g(1)
\right),
$$
and rearranging leads to the displayed entropy representation.

The likelihood ratio of the image measures under $K$ satisfies
$$
\frac{\dd (PK)}{\dd (QK)}(\mathbf{Y})
=
\mathbb E_{M}[R(\mathbf{X})\mid \mathbf{Y}] 
\qquad M\text{-almost surely}.
$$
If $F_g$ is strictly convex, conditional Jensen's inequality gives
$$
D_{F_g}(PK\Vert QK)
\leq
D_{F_g}(P\Vert Q),
$$
with equality if and only if
$$
R(\mathbf{X})=\mathbb E_{M}[R(\mathbf{X})\mid \mathbf{Y}]
$$
almost surely. Since $t \mapsto
-\log g^{-1}\bigl(g(1)+\varepsilon_g t\bigr)$ is strictly decreasing, the corresponding entropy inequality and its
equality condition follow.

It remains to prove that $(3)$ implies $(2)$. Let $a,b>0$,
$\lambda\in(0,1)$, and suppose that $a\neq b$. Set
$$
m:=\lambda a+(1-\lambda)b.
$$
Choose $\delta>0$ sufficiently small that
$$
c:=\frac{1-\delta m}{1-\delta}>0,
$$
and define probability measures on $\{1,2,3\}$ by
$$
Q=
\bigl(\delta \lambda,\delta(1-\lambda),1-\delta\bigr)
$$
and
$$
P=
\bigl(\delta\lambda a,\delta(1-\lambda)b,(1-\delta)c\bigr).
$$
Then
$$
\frac{\dd P}{\dd Q}=(a,b,c).
$$
Let $K$ be the deterministic kernel induced by the map
$T:\{1,2,3\}\to\{A,B\}$ given by
$T(1)=T(2)=A$ and $T(3)=B$. Then
$$
QK=(\delta,1-\delta),
\qquad
PK=(\delta m,(1-\delta)c),
$$
and hence
$$
\frac{\dd(PK)}{\dd(QK)}=(m,c).
$$

Since $T(1)=T(2)$ but $\frac{\dd P}{\dd Q}(1)=a\neq b
=\frac{\dd P}{\dd Q}(2)$,
the likelihood ratio is not measurable with respect to
$\sigma(T)$. Hence equality in $(3)$ cannot hold. 
The strict equality
condition in $(3)$ therefore gives
$$
D_{F_g}(PK\Vert QK)<D_{F_g}(P\Vert Q).
$$
Expanding both sides yields
$$
\delta F_g(m)+(1-\delta)F_g(c)
<
\delta\bigl[\lambda F_g(a)+(1-\lambda)F_g(b)\bigr]
+(1-\delta)F_g(c),
$$
which, after cancellation and division by $\delta$, gives
$$
F_g\bigl(\lambda a+(1-\lambda)b\bigr)
<
\lambda F_g(a)+(1-\lambda)F_g(b).
$$
Since $a,b>0$ and $\lambda\in(0,1)$ were arbitrary, $F_g$ is strictly
convex.
\end{proof}

\begin{prop}[Minimum-divergence estimating equation]
\label{prop:min-div-estimating}
Let $g\in C^1(0,\infty)$ be an admissible generator, and let $F_g$ be defined as in Proposition~\ref{prop:data-processing}. For $\Theta \subseteq \mathbb{R}^d$, let $\{Q_\theta:\theta \in \Theta\}$ be a differentiable family of
probability measures having strictly positive densities $q_\theta$ with respect to a measure $\lambda$, and let $P$ be a probability measure with a strictly positive density $p:= \dd P/\dd \lambda$. Write
$$
R_\theta:=\frac{p}{q_\theta},
\qquad
s_\theta:=\nabla_\theta\log q_\theta.
$$
Suppose that differentiation under the integral is justified. If
$\theta^\ast$ is an interior minimizer of $\theta\mapsto D_{F_g}(P\Vert Q_\theta)$,
then
$$
\int_{\mathcal X}
w_g(R_{\theta^\ast})s_{\theta^\ast}\,\dd P
=
0,
$$
where $w_g(t):=t|g'(t)|$.
\end{prop}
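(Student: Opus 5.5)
The plan is to compute the gradient of $\theta\mapsto D_{F_g}(P\Vert Q_\theta)$ directly and set it to zero at the interior minimizer $\theta^\ast$. Writing the divergence with respect to the dominating measure $\lambda$ gives
$$
D_{F_g}(P\Vert Q_\theta)=\int_{\mathcal X}F_g\!\left(\frac{p}{q_\theta}\right)q_\theta\,\dd\lambda .
$$
Since differentiation under the integral is assumed justified, we differentiate the integrand $\theta\mapsto F_g(p/q_\theta)\,q_\theta$ pointwise. With $\nabla_\theta R_\theta=-R_\theta s_\theta$ and $\nabla_\theta q_\theta=q_\theta s_\theta$, this gives
$$
\nabla_\theta\bigl[F_g(R_\theta)q_\theta\bigr]
=\bigl(F_g(R_\theta)-R_\theta F_g'(R_\theta)\bigr)q_\theta s_\theta .
$$
Interior minimality then yields $\int (F_g(R_{\theta^\ast})-R_{\theta^\ast}F_g'(R_{\theta^\ast}))\,q_{\theta^\ast}s_{\theta^\ast}\,\dd\lambda=0$.

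The second step substitutes $F_g(t)=\varepsilon_g(tg(t)-g(1))$, which is $C^1$ because $g\in C^1$. We have $F_g'(t)=\varepsilon_g(g(t)+tg'(t))$, so
$$
F_g(t)-tF_g'(t)=-\varepsilon_g\bigl(t^2g'(t)+g(1)\bigr).
$$
The constant term integrates to $-\varepsilon_g g(1)\int s_\theta q_\theta\,\dd\lambda=-\varepsilon_g g(1)\nabla_\theta\int q_\theta\,\dd\lambda=0$. This is the usual zero-mean property of the score, again using the assumed interchange of differentiation and integration.

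The remaining term is
$$
-\varepsilon_g\int R^2g'(R)\,q\,s\,\dd\lambda=-\varepsilon_g\int R\,g'(R)\,s\,\dd P,
$$
because $R\,q=p$. Since $g$ is strictly monotone, $\varepsilon_g g'(t)=|g'(t)|$ wherever the derivative exists. Hence the stationarity condition becomes $-\int w_g(R_{\theta^\ast})s_{\theta^\ast}\,\dd P=0$, which is the claim.

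The computation is routine. The only point needing care is the score-mean-zero step, which should be invoked under the same differentiation-under-the-integral hypothesis. The identity $\varepsilon_g g'=|g'|$ needs strict monotonicity only to fix the sign of $g'$; a vanishing derivative at isolated points is harmless.
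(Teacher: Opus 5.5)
Your proposal is correct and follows the paper's proof step for step: you differentiate the integrand to get $\bigl(F_g(R_\theta)-R_\theta F_g'(R_\theta)\bigr)q_\theta s_\theta$, rewrite it via $tF_g'(t)-F_g(t)=\varepsilon_g\bigl(t^2g'(t)+g(1)\bigr)$, discard the constant term using $\int q_\theta s_\theta\,\dd\lambda=0$, and finish with $q_\theta R_\theta^2=R_\theta p$ and $\varepsilon_g g'=|g'|$. Your caveat about zeros of $g'$ is unnecessary, since $\varepsilon_g g'=|g'|$ holds pointwise everywhere for any monotone $C^1$ function.
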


\begin{proof}
Differentiation gives
$$
\nabla_\theta D_{F_g}(P\Vert Q_\theta)
=
\int_{\mathcal X}
q_\theta
\left[
F_g(R_\theta)
-
R_\theta F_g'(R_\theta)
\right]
s_\theta\,\dd\lambda.
$$
Moreover,
$$
R_\theta F_g'(R_\theta)-F_g(R_\theta)
=
\varepsilon_g
\left(
R_\theta^2g'(R_\theta)+g(1)
\right).
$$
Noting that $\int_{\mathcal X}q_\theta s_\theta\,\dd\lambda=0$, 
$\varepsilon_g g'(t) = |g'(t)|$, and $q_\theta R_\theta^2 = R_\theta p$,
the first-order condition reduces to
$$
\int_{\mathcal X}
R_\theta|g'(R_\theta)|s_\theta\,\dd P
= 0,
$$
which completes the proof.
\end{proof}

Proposition~\ref{prop:min-div-estimating} specializes the classical minimum-$f$-divergence first-order condition to the entropy-generated class $F_g$, identifying
$w_g(t)=t|g'(t)|$ as the density-ratio weight induced by the admissible generator $g$.
For the arctangent generator constructed below, $g(t)=\arctan t$, one has
$$
w_g(t)=\frac{t}{1+t^2},
$$
and the corresponding first-order condition becomes
$$
\int_{\mathcal X}
\frac{R_{\theta^\ast}}{1+R_{\theta^\ast}^2}
s_{\theta^\ast}\,dP = 0.
$$
Here $w_g(t)\leq 1/2$ and $w_g(t)\to0$ as either $t\downarrow0$ or $t\uparrow\infty$, so extreme density-ratio regimes are downweighted.
Robustness and asymptotic properties of sample-based estimators require additional model-specific analysis.

We conclude this section with three constructions illustrating how the framework produces both classical and non-power entropy families.
\\

\paragraph{\bf Sharma-Mittal entropy.}
The Sharma-Mittal family \cite{SharMitt75,SharMitt77,Masi05} illustrates the separation between
the generalized-mean generator, $g$, and the outer entropy scale, $h$. For
$q>0$, $q\neq 1$, and $r>0$, $r\neq 1$, take
$$
g_q(t)=t^{q-1},
\qquad
h_r(u)=\frac{1-u^{r-1}}{r-1}.
$$
Since $h_r'(u)=-u^{r-2}<0$,
Definition~\ref{def:gh-ent} gives
\begin{equation}
H_{\nu}^{q,r}(\mu,\mathcal{X})
=
\frac{1}{r-1}
\left[
1-
\left(
\frac{1}{\mu(\mathcal{X})}
\int_{\mathcal{X}}
\left(
\frac{1}{\mu(\mathcal{X})}
\frac{d\mu}{d\nu}
\right)^{q-1}
\,d\mu
\right)^{\frac{r-1}{q-1}}
\right]
=
\frac{1}{r-1}
\left[
1-
\left(
\frac{1}{\mu(\mathcal{X})^q}
\int_{\mathcal{X}}
\left(
\frac{d\mu}{d\nu}
\right)^q
\,d\nu
\right)^{\frac{r-1}{q-1}}
\right].
\end{equation}

The limit $r\rightarrow 1$ gives R\'enyi entropy of order $q$, while $r=q$ gives Tsallis entropy. These functionals therefore retain the same admissible power-mean generator while varying the outer entropy
scale.
\\

\paragraph{\bf Arctangent entropy.}
Let $g(t)=\arctan t$ and $h=-\log$.
Then $g$ is strictly increasing and
$$
2g'(t)+t g''(t)
=
\frac{2}{(1+t^2)^2}>0,
$$
so $g$ is admissible. Since $g^{-1}(y)=\tan y$,
the resulting entropy is
$$
H_{\arctan}(\mu,(\mathcal{X},\nu)) =
-\log
\left[
\tan
\left(
\frac{1}{\mu(\mathcal{X})}
\int_{\mathcal{X}}
\arctan
\left(
\frac{1}{\mu(\mathcal{X})}
\frac{d\mu}{d\nu}
\right)
\,d\mu
\right)
\right].
$$

\paragraph{\bf Integral transform entropies.}
A broad class of examples is obtained from integral transforms. For these we also take $h=-\log$.

\begin{cor}[Positive-kernel construction]
\label{prop:kernel-construction}
Let $(S,\mathcal{S},\Lambda)$ be a finite positive measure space, and let
$$
K:S\times(0,\infty)\longrightarrow\mathbb{R}
$$
be measurable in $s$ and twice continuously differentiable in $t$. Assume
that the first two derivatives in $t$ may be passed under the integral sign
and that, for every $t>0$,
$$
\int_S \partial_tK(s,t)\,\dd\Lambda(s)>0
$$
and
$$
\int_S
\left(
2\partial_tK(s,t)+t\partial_{tt}K(s,t)
\right)
\dd\Lambda(s)>0.
$$
Then
$$
g_\Lambda(t):=\int_S K(s,t)\,\dd\Lambda(s)
$$
is an increasing admissible entropy generator. If both displayed
inequalities are reversed, then $g_\Lambda$ is a decreasing admissible
entropy generator.
\end{cor}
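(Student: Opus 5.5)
The plan is to reduce everything to the differential admissibility criterion of Corollary~\ref{cor:differential-admissibility}, or directly to Theorem~\ref{thm:ent-admit}(b). Differentiation under the integral sign is permitted by hypothesis, and it gives
$$
g_\Lambda'(t)=\int_S \partial_tK(s,t)\,\dd\Lambda(s),
\qquad
g_\Lambda''(t)=\int_S \partial_{tt}K(s,t)\,\dd\Lambda(s).
$$
The two displayed hypotheses can then be read off as statements about $g_\Lambda$:
$$
g_\Lambda'(t)>0,
\qquad
2g_\Lambda'(t)+t\,g_\Lambda''(t)>0
\qquad\text{for every } t>0.
$$
Here the factor $t$ is constant in $s$ and can be pulled out of the integral. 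Before this, I will check that $g_\Lambda\in C^2(0,\infty)$. This follows from the assumed interchange of derivative and integral together with the continuity of the resulting integrals, which is implicit in the justification hypothesis; if needed, I will state it as part of what ``may be passed under the integral sign'' means.

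\textbf{Increasing case.} Since $g_\Lambda'>0$ on $(0,\infty)$, the function $g_\Lambda$ is continuous and strictly increasing, so $\varepsilon_{g_\Lambda}=1$. The second inequality is exactly $\varepsilon_{g_\Lambda}\bigl(2g_\Lambda'(x)+xg_\Lambda''(x)\bigr)>0$ for all $x>0$. Corollary~\ref{cor:differential-admissibility} then gives admissibility. Equivalently, the computation in the proof of that corollary shows
$$
\frac{d^2}{dt^2}\,g_\Lambda(1/t)>0,
$$
so $t\mapsto g_\Lambda(1/t)$ is strictly convex, and Theorem~\ref{thm:ent-admit}(b) applies.

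\textbf{Decreasing case.} If both inequalities are reversed, then $g_\Lambda'<0$, so $g_\Lambda$ is strictly decreasing and $\varepsilon_{g_\Lambda}=-1$. The reversed second inequality then says $\varepsilon_{g_\Lambda}\bigl(2g_\Lambda'+tg_\Lambda''\bigr)>0$ again. Thus $t\mapsto g_\Lambda(1/t)$ is strictly concave, and Theorem~\ref{thm:ent-admit}(b) gives admissibility as a decreasing generator. Alternatively, I could apply the increasing case to the kernel $-K$, since the generator $-g_\Lambda$ defines the same $g$-mean and hence the same entropy.

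\textbf{Main obstacle.} The only delicate point is the regularity bookkeeping: making sure $g_\Lambda$ is genuinely $C^2$, so that Corollary~\ref{cor:differential-admissibility} applies. A strictly positive second derivative of $g_\Lambda(1/t)$ then yields strict convexity. If the hypotheses are too weak for continuity of $g_\Lambda''$, I will avoid the corollary and argue directly that $g_\Lambda(1/t)$ has a strictly positive second derivative everywhere, which already forces strict convexity. Everything else is a direct substitution.
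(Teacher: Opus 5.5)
Your proposal is correct and follows the paper's proof: differentiate under the integral to identify the two hypotheses with $g_\Lambda'>0$ and $2g_\Lambda'+t\,g_\Lambda''>0$ (or their reversals), then invoke Corollary~\ref{cor:differential-admissibility}. Your added remark is a reasonable tightening of a regularity point the paper leaves implicit: a strictly positive (resp.\ negative) second derivative of $t\mapsto g_\Lambda(1/t)$ already gives strict convexity (resp.\ concavity) without continuity of $g_\Lambda''$, so Theorem~\ref{thm:ent-admit}(b) applies directly.
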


\begin{proof}
Differentiation under the integral gives
$$
g_\Lambda'(t)
=
\int_S\partial_tK(s,t)\,\dd\Lambda(s)
$$
and
$$
2g_\Lambda'(t)+t g_\Lambda''(t)
=
\int_S
\left(
2\partial_tK(s,t)+t\partial_{tt}K(s,t)
\right)
\dd\Lambda(s).
$$
The result follows from
Corollary~\ref{cor:differential-admissibility}.
\end{proof}

An explicit example is given by the Laplace transform, with $\Lambda((0,\infty))>0$,
$$
g_{\Lambda}(t)
:=
\int_{[0,\infty)}
\exp\left(-\frac{s}{t}\right)
\,d\Lambda(s),
\qquad t>0.
$$
It is straightforward to check that the kernel
$K(s,t)=\exp(-s/t)$ satisfies the conditions of
Corollary~\ref{prop:kernel-construction}; hence $g_\Lambda$ is admissible. The corresponding entropy is given by
$$
H_{\Lambda}(\mu,(\mathcal{X},\nu))
=
-\log
g_{\Lambda}^{-1}
\left[
\frac{1}{\mu(\mathcal{X})}
\int_Y
\left(
\int_{[0,\infty)}
\exp\left(
-\frac{s\mu(\mathcal{X})}
{d\mu/d\nu}
\right)
\,d\Lambda(s)
\right)
\,d\mu
\right],
$$
where  $Y :=
\left\{
x\in\mathcal{X}:
\frac{d\mu}{d\nu}(x)>0
\right\}$. 
Thus every Laplace transform of the above form gives an admissible entropy functional.

The Laplace transform averages exponential attenuation over the scales encoded by $\Lambda$. Correspondingly, $H_{\Lambda}$ applies these
attenuation scales to the reciprocal normalized Radon--Nikodym density,
averages the resulting values with respect to $\mu$, and then returns the
result to the original scale through $g_{\Lambda}^{-1}$. This gives an
infinite-dimensional family of admissible entropies parameterized by
positive measures $\Lambda$.

Laplace transforms provide one natural class;
Mellin-type and other transform constructions may provide further examples, although admissibility must be verified separately in each case.

As the simplest instance, take
$\Lambda=\delta_{\lambda}$ and $\lambda>0$. Then
$$
g_{\lambda}(t)
=
\exp\left(-\frac{\lambda}{t}\right),
\qquad
g_{\lambda}^{-1}(y)
=
-\frac{\lambda}{\log y},
$$
and the corresponding entropy is
$$
H_{\lambda}(\mu,(\mathcal{X},\nu))
=
\log
\left[
-\frac{1}{\lambda}
\log
\left(
\frac{1}{\mu(\mathcal{X})}
\int_Y
\exp\left(
-\frac{\lambda\mu(\mathcal{X})}
{d\mu/d\nu}
\right)
\,d\mu
\right)
\right].
$$
The associated density-ratio weight is
$$
t g_{\lambda}'(t)
=
\frac{\lambda}{t}
\exp\left(-\frac{\lambda}{t}\right),
$$
which is bounded and redescending.

This explicit case illustrates the construction without introducing a
separate family beyond the positive-kernel proposition.
Table~\ref{tab:additional-generators} in Sec.~\ref{sec:proofs} includes other admissible generators and their corresponding entropy functionals.

\section{Entropies and Generalized Means}

The Shannon and R\'enyi entropies introduced in Sec.~\ref{sec:main} have a surprisingly direct relation to the power means \cite{Kol30, Nagumo1930, Bullen2003,deCar16}, given by Definition~\ref{def:Lp-mean}. We review several relevant properties of such means.

\label{sec:ent-mean-lem}
\begin{lem}
\label{prop:ent-max}
For every pair of nonzero finite measures $\mu\ll\nu$ on $\mathcal{X}$ and every $p\geq1$ for which the displayed entropies are well defined, Shannon entropy dominates R\'enyi entropy: 
\begin{equation}
\label{ineq:ent-max}
 H^p_\nu(\mu,\mathcal{X})  \leq S_\nu(\mu,{\mathcal{X}}) \leq  S_\nu(c \nu,{\mathcal{X}}), 
\end{equation}
where $c>0$ and $S_\nu(c\nu,{\mathcal{X}}) = \log \nu(\mathcal{X})$. Thus the maximum possible entropy is  $\log\nu(X)$, and it is attained exactly by measures proportional to the reference measure, that is, for $\mu = c\nu$, where $c>0$.
\end{lem}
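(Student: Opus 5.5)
Write $m:=\mu(\mathcal X)$ and $f:=\frac{1}{m}\frac{\dd\mu}{\dd\nu}$. The plan is to express both entropies as $-\log$ of power means of $f$ with respect to $\mu$, and then use monotonicity of power means in the order $p$.

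\textbf{Step 1: rewrite the entropies as power means.} For R\'enyi entropy,
$$
\frac{1}{m}\int\Bigl(\frac{\dd\mu}{\dd\nu}\Bigr)^p\dd\nu
= m^{p-1}\,\frac{1}{m}\int f^{p-1}\,\dd\mu .
$$
Hence
$$
H^p_\nu(\mu,\mathcal X)=\log m+\log m\cdot\frac{p-1}{1-p}-\log M^{p-1}_\mu(f)=-\log M^{p-1}_\mu(f).
$$
For Shannon entropy, expanding $\log(\dd\mu/\dd\nu)=\log m+\log f$ gives
$$
S_\nu(\mu,\mathcal X)=-\frac{1}{m}\int\log f\,\dd\mu=-\log M^0_\mu(f),
$$
using Lemma~\ref{lem:M0} for the geometric mean. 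In the notation of the paper, $S_\nu=\mathcal H_{\log}$ and $H^p_\nu=\mathcal H_{t^{p-1}}$.

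\textbf{Step 2: the first inequality.} The power-mean inequality (Jensen applied to the convex map $x\mapsto x^{r/s}$, or to $\exp$ and $\log$ in the geometric case) gives $M^{0}_\mu(f)\le M^{p-1}_\mu(f)$ for $p-1\ge0$. Applying the decreasing map $-\log$ yields $H^p_\nu\le S_\nu$. The case $p=1$ is an equality by definition.

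\textbf{Step 3: the second inequality and the equality case.} Use the geometric–harmonic mean inequality: $M^0_\mu(f)\ge M^{-1}_\mu(f)$, with equality if and only if $f$ is constant $\mu$-a.e. This is the strict Jensen inequality for the strictly concave function $\log$ applied to $1/f$. Now compute the harmonic mean:
$$
\frac1m\int\frac1f\,\dd\mu=\frac1m\int m\,\frac{\dd\nu}{\dd\mu}\,\dd\mu=\nu(\{\dd\mu/\dd\nu>0\})\le\nu(\mathcal X).
$$
Therefore $M^{-1}_\mu(f)=1/\nu(Y)\ge1/\nu(\mathcal X)$, and so
$$
S_\nu(\mu)\le-\log M^{-1}_\mu(f)=\log\nu(Y)\le\log\nu(\mathcal X).
$$
Equality throughout requires two things:
\begin{enumerate}
\item $f$ is constant $\mu$-a.e., which gives $\mu=c\nu$ on $Y$;
\item $\nu(\mathcal X\setminus Y)=0$.
\end{enumerate}
Together these give $\mu=c\nu$. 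Conversely, if $\mu=c\nu$ then $f=1/\nu(\mathcal X)$, and direct substitution gives $S_\nu(c\nu)=\log\nu(\mathcal X)$.

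\textbf{Main obstacle.} The delicate step is the bookkeeping around $\mu\ll\nu$, where $f$ may vanish on a $\nu$-positive set. The integrals against $\mu$ only see the set $Y$. The potential strictness therefore enters through $\nu(\mathcal X\setminus Y)$ rather than through Jensen's inequality, and the equality case must account for both sources. I would also check integrability, which the "well defined" hypothesis in the statement is meant to cover.
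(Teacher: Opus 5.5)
Your proposal is correct and is essentially the paper's proof. Both inequalities come from Jensen's inequality for the logarithm under $\mu/\mu(\mathcal X)$: first you compare the geometric mean of the normalized density with its order-$(p-1)$ power mean, then the harmonic mean (which equals $1/\nu(Y)$) with the geometric mean, giving $S_\nu(\mu,\mathcal X)\le\log\nu(Y)\le\log\nu(\mathcal X)$. Your equality analysis (constant density $\mu$-a.e.\ together with $\nu(\mathcal X\setminus Y)=0$) also matches the paper's, and your power-mean notation simply repackages its direct integral computations.
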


\begin{proof}
For $p=1$, the first inequality is an equality. Let $p>1$.
Then Jensen's inequality gives,
\begin{align*}
H^p_\nu(\mu,\mathcal{X}) 
&=  \log \mu(\mathcal{X}) - \dfrac{1}{p-1 }  \log \left( 
 \frac{1}{\mu({\mathcal{X}})} \int_{{\mathcal{X}}} \left(\dfrac{\dd \mu}{\dd \nu} \right)^{p-1}
 \dd \mu \right)   \\
 &\leq  \log \mu(\mathcal{X}) - \dfrac{1}{p-1 }   \left( 
 \frac{1}{\mu({\mathcal{X}})} \int_{{\mathcal{X}}} \log\left(\dfrac{\dd \mu}{\dd \nu} \right)^{p-1}
 \dd \mu \right)   \\
&=  \log \mu(\mathcal{X}) -  
 \frac{1}{\mu({\mathcal{X}})} \int_{{\mathcal{X}}} \log\left(\dfrac{\dd \mu}{\dd \nu} \right)
 \dd \mu   \\
&= S_\nu(\mu,{\mathcal{X}}) \, ,
\end{align*}
with equality if and only if $\dd \mu / \dd \nu$ is constant $\mu$-almost everywhere.

For the second inequality we let
$Y:=\left\{x \in \mathcal{X}: \dfrac{\dd \mu}{\dd \nu}(x)>0  \right\}$.
Then, starting from (\ref{eq:ent-def}) and denoting $\mu' = \dd\mu/\dd \nu$, 
\begin{align*}
S_\nu(\mu,{\mathcal{X}}) 
 &= \log \mu(\mathcal{X}) 
 + \frac{1}{\mu({\mathcal{X}})} \int_{Y} \mu' 
 \log \left(\dfrac{1}{\mu'} \right)  \dd \nu  \\
&\leq \log \mu(\mathcal{X}) 
 + \log \left[    \frac{1}{\mu({\mathcal{X}})} \int_{Y}    \mu' 
  \left(\dfrac{1}{ \mu'} \right)  \dd \nu   \right] \\
&= \log \mu(\mathcal{X}) 
 + \log   \frac{\nu( Y)}{\mu({\mathcal{X}})} \\
&= \log \nu(Y) \\
&\leq \log \nu({\mathcal{X}})  \\
&=S_\nu(c\nu,{\mathcal{X}}) \, ,
\end{align*}
where equality holds if and only if $\nu(Y)=\nu(\mathcal{X})$ and $d\mu/d\nu$ constant
$\mu$-almost everywhere, or equivalently, $\mu=c\nu$ for some $c>0$.
\end{proof}

\begin{lem}
\label{lem:M0}
Let $f>0$ and $f^{\epsilon}, f^{-\epsilon} \in L^1(\mu)$ for some $\epsilon>0$. Then the $L^p$ mean of order $0$ is the geometric mean, given by 
\begin{equation}
M^0_\mu (f) :=\lim_{p \rightarrow 0} M^p_\mu (f) 
= \exp\left( \dfrac{1}{\mu(\mathcal{X})}\int_\mathcal{X}  \log f \; \dd \mu \right).
\end{equation}
\end{lem}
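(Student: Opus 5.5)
The plan is to reduce the limit to a one-dimensional statement about the function $p\mapsto \log M^p_\mu(f)$. We may normalize and write $\bar\mu:=\mu/\mu(\mathcal X)$, a probability measure, so that
$$
\log M^p_\mu(f)=\frac{1}{p}\log\left(\int_{\mathcal X} f^p\,\dd\bar\mu\right)
=\frac{1}{p}\log \phi(p),\qquad \phi(p):=\int_{\mathcal X} e^{p\log f}\,\dd\bar\mu .
$$
Since $\phi(0)=1$, the claimed limit is $\phi'(0)$, i.e.\ the derivative of $\log\phi$ at $0$. It then remains to show $\phi'(0)=\int \log f\,\dd\bar\mu$ and to apply continuity of $\exp$.

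The first step is to verify that $\phi$ is finite and differentiable on $(-\epsilon,\epsilon)$. For $|p|<\epsilon$, pointwise $f^p\le 1+f^{\epsilon}+f^{-\epsilon}$: on $\{f\ge1\}$ we have $f^p\le f^{\epsilon}$ when $p\ge0$ and $f^p\le 1$ when $p<0$, and symmetrically on $\{f<1\}$. Since $\bar\mu$ is finite, this gives $\phi(p)<\infty$. For differentiation under the integral, fix $|p|\le \epsilon/2$ and bound $|\partial_p f^p|=|\log f|\,f^p$. Using $|\log x|\le C_\epsilon x^{\epsilon/2}$ for $x\ge1$ and $|\log x|\le C_\epsilon x^{-\epsilon/2}$ for $x<1$, we get $|\log f|\,f^p\le C_\epsilon(1+f^{\epsilon}+f^{-\epsilon})$ uniformly for $|p|\le\epsilon/2$. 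This is an integrable dominating function. By the mean value theorem and dominated convergence, $\phi$ is differentiable on $(-\epsilon/2,\epsilon/2)$ with $\phi'(p)=\int \log f\, f^p\,\dd\bar\mu$. In particular $\log f\in L^1(\mu)$, which is needed for the right-hand side to make sense.

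Finally, $\phi(0)=1$ gives
$$
\lim_{p\to0}\frac{\log\phi(p)-\log\phi(0)}{p}=(\log\phi)'(0)=\frac{\phi'(0)}{\phi(0)}=\frac{1}{\mu(\mathcal X)}\int_{\mathcal X}\log f\,\dd\mu,
$$
and exponentiating yields the claimed geometric mean.

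The only genuine obstacle is the domination step. We must produce a single integrable bound for $|\log f|f^p$ valid for $p$ on both sides of $0$, and this is exactly why the hypothesis requires both $f^{\epsilon}$ and $f^{-\epsilon}$ to be integrable. The elementary inequality $|\log x|\le C_\epsilon(x^{\epsilon/2}+x^{-\epsilon/2})$ handles it, and the rest is routine calculus.
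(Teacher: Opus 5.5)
Your proof is correct and follows essentially the same route as the paper, which also takes logarithms and evaluates $\lim_{p\to0}\frac{1}{p}\log\int f^p\,\dd\bar\mu$ through the derivative of $p\mapsto\int f^p\,\dd\bar\mu$ at $p=0$; the paper phrases this as L'H\^opital's rule, whereas you phrase it as the difference quotient of $\log\phi$ at $0$. Your domination bound $|\log f|\,f^p\le C_\epsilon(1+f^{\epsilon}+f^{-\epsilon})$ adds what the paper leaves implicit, namely the justification for differentiating under the integral and the fact that $\log f\in L^1(\mu)$.
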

\begin{proof}
Taking logarithms we have
\begin{align*}
\log M^0_\mu (f) &=\lim_{p \rightarrow 0} \log M^p_\mu (f) \\
&= \lim_{p \rightarrow 0} \frac{1}{p} \log \left(   \dfrac{1}{\mu(\mathcal{X})}  \int_\mathcal{X}  f^p \, \dd \mu \right) \\
&= \lim_{p \rightarrow 0} \left( \dfrac{ \dfrac{1}{\mu(\mathcal{X})}  \int_\mathcal{X}  f^p \log f \, \dd \mu}{  \dfrac{1}{\mu(\mathcal{X})}  \int_\mathcal{X}  f^p \, \dd \mu }  \right)  \\
&=\dfrac{1}{\mu(\mathcal{X})}  \int_\mathcal{X}   \log f \, \dd \mu  \, ,
\end{align*}
where we used L'Hôpital's rule in the third line.
\end{proof}

\begin{remark}
\label{rem:ren-ent-M}
We may write the R\'enyi entropy as 
$$
H^p_\nu(\mu,\mathcal{X}) =  - \log M_\mu^{p-1} \left(\dfrac{1}{\mu(\mathcal{X}) } \dfrac{\dd \mu}{\dd \nu}, \mathcal{X} \right) 
=  \log \mu(\mathcal{X}) - \log M_\mu^{p-1} \left( \dfrac{\dd \mu}{\dd \nu} \right) .
 $$
 Together with Lemma~\ref{lem:M0}, this implies that 
$H^1_\nu(\mu, \mathcal{X}) = S_\nu(\mu, \mathcal{X})$.
\end{remark}

\begin{lem}
\label{lem:M-monotone}
For $r<s$, let $p\mapsto \int f^p\,d\mu$ be differentiable on $(r,s)$ and $\int f^p|\log f|\,d\mu<\infty$ for $p \in (r-\varepsilon,s+\varepsilon)$ and some $\varepsilon>0$. If  $r\leq 0$, assume in addition that $f>0$ $\mu$-almost everywhere. Then
\begin{equation}
\label{eq:ent-r-s-ineq}
M_\mu^r(f)\leq M_\mu^s(f),
\end{equation}
with equality if and only if $f$ is constant
$\mu$-almost everywhere.
\end{lem}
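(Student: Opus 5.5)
The plan is to prove monotonicity of the power mean in its order by showing that $\psi(p):=\log M^p_\mu(f)$ is nondecreasing on $(r,s)$, and strictly increasing unless $f$ is constant $\mu$-a.e. Throughout, normalize $\mu$ to the probability measure $\bar\mu:=\mu/\mu(\mathcal{X})$; this does not change any of the means. Put
$$
\Phi(p):=\log \int f^p\,\dd\bar\mu ,
$$
so that $\psi(p)=\Phi(p)/p$ for $p\neq0$. At $p=0$ we have $\Phi(0)=0$, and $\psi(0)$ is the log of the geometric mean by Lemma~\ref{lem:M0}.

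The integrability hypothesis $\int f^p|\log f|\,\dd\mu<\infty$ on $(r-\varepsilon,s+\varepsilon)$ justifies differentiating under the integral sign. The key fact is that $\Phi$ is convex. This follows from H\"older's inequality:
$$
\int f^{\lambda p+(1-\lambda)q}\,\dd\bar\mu\le\Big(\int f^p\,\dd\bar\mu\Big)^{\lambda}\Big(\int f^q\,\dd\bar\mu\Big)^{1-\lambda},
$$
and equality holds only when $f^p$ and $f^q$ are proportional $\bar\mu$-a.e. For $p\ne q$ that forces $f$ to be constant on $\{f>0\}$. If $f$ is allowed to vanish, which happens only when $r>0$, one handles the set $\{f=0\}$ separately. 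So $\Phi$ is strictly convex unless $f$ is constant.

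Since $\Phi(0)=0$, the quantity $\psi(p)=(\Phi(p)-\Phi(0))/(p-0)$ is a chord slope of a convex function through the fixed point $0$. Chord slopes from a fixed point of a convex function are nondecreasing in the other endpoint, on both sides of $0$ and across it. At $p=0$ the slope is replaced by $\Phi'(0)=\int\log f\,\dd\bar\mu$, which matches Lemma~\ref{lem:M0}. Hence $\psi(r)\le\psi(s)$, and so $M^r_\mu(f)\le M^s_\mu(f)$.

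For the equality case, strict convexity of $\Phi$ makes the chord slopes strictly increasing, so equality forces $f$ to be constant $\mu$-a.e. Conversely, a constant $f$ gives equal means.

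The main obstacle is the bookkeeping at the endpoints, together with the case where $0$ lies in $[r,s]$. One must make sure $\Phi$ is finite on a neighbourhood of the relevant points, which the $\varepsilon$-margin hypothesis provides. One must also check that the slope identification at $p=0$ matches the geometric mean, which requires the positivity assumption when $r\le0$. A fallback if the convexity argument gets messy is the direct Jensen route. For $0<r<s$, apply Jensen to the convex map $x\mapsto x^{s/r}$ and the function $f^r$. The cases $r<s<0$ and $r<0<s$ follow from this one via $f\mapsto 1/f$ together with the geometric mean inequality, and strict convexity gives the equality case in each step.
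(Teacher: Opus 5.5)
Your proof is correct, but it takes a genuinely different route from the paper's.

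\textbf{The paper's route.} The paper differentiates $\log M^p_\mu(f)$ directly. It identifies $p^2\,\partial_p\log M^p_\mu(f)$ with $\int\log(\dd\nu_p/\dd\bar\mu)\,\dd\nu_p$, the relative entropy of the escort measure $\dd\nu_p\propto f^p\,\dd\bar\mu$ with respect to $\bar\mu$. Its nonnegativity gives monotonicity separately on $p<0$ and on $p>0$. Its vanishing characterizes constant $f$. The two half-lines are then glued by continuity at $p=0$ via Lemma~\ref{lem:M0}.

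\textbf{Your route and how the two relate.} You use log-convexity of the moment function (H\"older) and monotonicity of chord slopes from the fixed point $(0,\Phi(0))=(0,0)$. These are the same convexity seen at two scales. The paper's relative-entropy term equals
\[
p\Phi'(p)-\Phi(p)=\Phi(0)-\bigl[\Phi(p)+\Phi'(p)(0-p)\bigr],
\]
the gap at $0$ between $\Phi$ and its tangent line at $p$. So the paper's derivative computation is the infinitesimal version of your chord comparison.

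\textbf{What each approach buys.} Your route:
\begin{itemize}
\item needs differentiability only to identify $\log M^0_\mu(f)$ with $\Phi'(0)=\int\log f\,\dd\bar\mu$ when $0$ is an endpoint;
\item crosses $p=0$ in one step via the three-slope inequality, rather than by a separate limit;
\item reads the equality case straight off H\"older's equality condition.
\end{itemize}
The paper's route buys an explicit formula, with an information-theoretic interpretation, for the rate at which the power mean increases. Your fallback (Jensen for $x\mapsto x^{s/r}$, the reflection $f\mapsto 1/f$, and the geometric-mean comparison) is the classical Hardy--Littlewood--P\'olya argument and is also fine.

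\textbf{The loose end you flagged.} If $r>0$ and $f$ vanishes on a set of positive measure, $\Phi$ need not be strictly convex on $(0,\infty)$. For example, with $f=c\mathbf 1_A$ and $0<\bar\mu(A)<1$, $\Phi$ is affine there. So strictness cannot come from strict convexity alone. A clean fix:
\begin{enumerate}
\item Set $f^0:=1$, so $\Phi(0)=0$. Then $\Phi$ remains convex on $[0,\infty)$; for chords through $0$ this is Jensen for $x\mapsto x^{1-\lambda}$ applied to $f^q$.
\item Equality $\Phi(r)/r=\Phi(s)/s$ makes $(0,0)$, $(r,\Phi(r))$, $(s,\Phi(s))$ collinear, so $\Phi$ is affine on $[0,s]$.
\item Continuity at $0^+$ then forces $\log\bar\mu(\{f>0\})=\lim_{p\downarrow 0}\Phi(p)=0$, so $f>0$ almost everywhere.
\item H\"older's equality case then makes $f$ constant.
\end{enumerate}
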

\begin{proof}
Let $\bar\mu:=\frac{\mu}{\mu(\mathcal X)}$.
Then,
$$
\begin{aligned}
\frac{\partial}{\partial p}\log M^p_\mu(f)
&= \frac{\partial}{\partial p} \left(  \frac{1}{p}\log \int_{\mathcal{X}} f^p \dd \bar{\mu}  \right)\\
&=  -\frac{1}{p^2} \log \int_{\mathcal{X}} f^p \;\dd \bar{\mu}  + \frac{1}{p} \dfrac{ \int_{\mathcal{X}} f^p \log f\;\dd \bar{\mu} }{  \int_{\mathcal{X}} f^p \;\dd \bar{\mu}  }  \\
&=\frac{1}{p^2} \left( - \log \int_{\mathcal{X}} f^p \;\dd \bar{\mu}  + p \cdot \dfrac{ \int_{\mathcal{X}} f^p \log f\;\dd \bar{\mu} }{  \int_{\mathcal{X}} f^p \;\dd \bar{\mu}  } \right)
\end{aligned}
$$

Defining the probability measure $\nu_p$ by
$$
\dd\nu_p\
:=
\frac{f^p}{\int_{\mathcal{X}} f^p \;\dd \bar{\mu} } \dd\bar\mu
$$
we continue the above computation by writing
$$
\begin{aligned}
\frac{\partial}{\partial p}\log M^p_\mu(f)
&= \frac{1}{p^2} \left( - \log \int_{\mathcal{X}} f^p \;\dd \bar{\mu}  + \int_{\mathcal{X}}  \log f^p\;\dd\nu_p  \right) \\
&= \frac{1}{p^2}  \int_{\mathcal{X}} \log \left(\frac{\dd\nu_p}{\dd\bar\mu}\right) \dd\nu_p
\end{aligned}
$$
which is nonnegative by Jensen's inequality. 
Since the logarithm is strictly increasing, it follows that $p\mapsto M^p_\mu(f)$ is nondecreasing on every interval on which the above calculation is valid.
Moreover, equality holds if and only if $\nu_p=\bar\mu$.
Since $p\neq0$, this holds if and only if
$f$ is constant $\mu$-almost everywhere.
Thus, for $r<s$ lying on the same side of zero, $M^r_\mu(f)\leq M^s_\mu(f)$, with equality if and only if $f$ is constant $\mu$-almost everywhere.
The cases in which $r\leq0\leq s$ follow by continuity at $p=0$ as given by Lemma~\ref{lem:M0}, where
$$
M^0_\mu(f)
=
\exp\left(
\int_{\mathcal X}\log f\,\dd\bar\mu
\right).
$$
Therefore, whenever the relevant means are defined, $M^r_\mu(f)\leq M^s_\mu(f)$ for all $r<s$,
with equality if and only if $f$ is constant $\mu$-almost everywhere.
\end{proof}

Lemma~\ref{lem:M-monotone} then implies the following result for Rényi entropies, which generalizes Lemma~\ref{prop:ent-max}.
\begin{cor}
For all $0 \leq r < s$, we have $H^s_\nu(\mu,\mathcal{X}) \leq H^r_\nu(\mu,\mathcal{X})$.
\end{cor}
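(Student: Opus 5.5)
The plan is to read the statement directly off Remark~\ref{rem:ren-ent-M} and Lemma~\ref{lem:M-monotone}. By Remark~\ref{rem:ren-ent-M}, for every order $p\ge 0$,
$$
H^p_\nu(\mu,\mathcal X) = -\log M^{p-1}_\mu\!\left(f,\mathcal X\right),
\qquad f:=\frac{1}{\mu(\mathcal X)}\frac{\dd\mu}{\dd\nu}.
$$
For $p=1$ this uses the geometric mean of Lemma~\ref{lem:M0}. Since $-\log$ is strictly decreasing, the claim $H^s_\nu\le H^r_\nu$ for $0\le r<s$ is equivalent to
$$
M^{r-1}_\mu(f)\le M^{s-1}_\mu(f).
$$
The exponents satisfy $-1\le r-1<s-1$, and $f>0$ holds $\mu$-almost everywhere, because $\mu\ll\nu$ forces $\dd\mu/\dd\nu>0$ $\mu$-a.e.

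The key step is therefore to apply Lemma~\ref{lem:M-monotone} with exponents $r-1<s-1$. This requires the lemma's integrability and differentiability hypotheses on $p\mapsto\int f^p\,\dd\mu$ near the interval $[r-1,s-1]$. I would take these as the standing proviso that the displayed entropies are well defined, in the same spirit as Lemma~\ref{prop:ent-max}. The passage through $p=0$, which corresponds to the Shannon case $r\le1\le s$, is already covered inside Lemma~\ref{lem:M-monotone} through continuity at $0$ (Lemma~\ref{lem:M0}). The equality case also transfers: equality holds iff $f$ is constant $\mu$-a.e.

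The one genuine obstacle is the endpoint $r=0$. Here $H^0_\nu$ is defined by the convention $\log\nu(\{\dd\mu/\dd\nu>0\})$ and is not defined through an $L^{-1}$ mean, so I would check it separately. With $Y:=\{\dd\mu/\dd\nu>0\}$, we have
$$
M^{-1}_\mu(f)^{-1}=\frac{1}{\mu(\mathcal X)}\int_Y \mu(\mathcal X)\frac{\dd\nu}{\dd\mu}\,\dd\mu=\nu(Y).
$$
Hence $-\log M^{-1}_\mu(f)=\log\nu(Y)=H^0_\nu(\mu,\mathcal X)$, so the convention agrees with the power-mean formula at $p=-1$.

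For the case $r=0$ itself, I would avoid the regularity hypotheses of Lemma~\ref{lem:M-monotone} at exponent $-1$. Instead I would use Jensen's inequality directly for the mean comparison $M^{-1}\le M^{q}$ with $q=s-1$. When $q\ge 0$ this follows from the harmonic–geometric–power chain. When $-1<q<0$, it follows from convexity of $x\mapsto x^{-1/|q|\cdot}$, or equivalently Jensen applied to $x^{1/|q|}$ for the function $f^{q}$. This reproduces the estimate $S_\nu\le\log\nu(Y)$ from the proof of Lemma~\ref{prop:ent-max} as a special case.
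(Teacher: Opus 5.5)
Your proposal is correct and is essentially the paper's argument, which simply combines the representation $H^p_\nu(\mu,\mathcal X)=-\log M^{p-1}_\mu(f)$ from Remark~\ref{rem:ren-ent-M} with the monotonicity of power means in Lemma~\ref{lem:M-monotone}. Your separate treatment of $r=0$ (checking $M^{-1}_\mu(f)=1/\nu(Y)$ and then using Jensen directly) is a worthwhile refinement that the paper leaves implicit, because the lemma's integrability hypotheses near exponent $-1$ can fail even when $H^0_\nu=\log\nu(Y)$ is finite; just tidy the garbled convexity phrase for $-1<q<0$, since the intended step, Jensen for the convex map $x\mapsto x^{1/|q|}$ applied to $f^{q}$, is correct.
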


\begin{lem}
\label{lem:M}
For two measures $\mu, \nu$, with  $\mu \ll \nu$, on a space $\mathcal{X}$, and for $p > 1$, we have
\begin{align}
\mu(\mathcal{X}) &\leq M^p_\nu \left( \dfrac{\dd \mu}{\dd \nu}  \right) \nu(\mathcal{X}) \\
\label{eq:exp-renyi-ineq}
\mu(\mathcal{X}) &\leq  M^{p-1}_\mu \left( \dfrac{\dd \mu}{\dd \nu}  \right) \nu(\mathcal{X}) \, .
\end{align}
\end{lem}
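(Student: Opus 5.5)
We would derive both inequalities from the power-mean comparison, with the arithmetic mean $M^1_\nu$ as the anchor. Write $f:=\dd\mu/\dd\nu\ge 0$. The basic identity is
$$
M^1_\nu(f)=\frac{1}{\nu(\mathcal X)}\int_{\mathcal X} f\,\dd\nu=\frac{\mu(\mathcal X)}{\nu(\mathcal X)} .
$$
Hence the first inequality is equivalent to $M^p_\nu(f)\ge M^1_\nu(f)$ for $p>1$. We prove this by Jensen's inequality for the convex function $x\mapsto x^p$ on $[0,\infty)$ with respect to the probability measure $\nu/\nu(\mathcal X)$. This gives $\left(\frac{1}{\nu(\mathcal X)}\int f\,\dd\nu\right)^p\le \frac{1}{\nu(\mathcal X)}\int f^p\,\dd\nu$. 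Taking $p$-th roots, which are increasing, yields the claim. If $\int f^p\,\dd\nu=\infty$, the inequality is trivial. We use Jensen directly rather than Lemma~\ref{lem:M-monotone}, because that lemma's differentiability and integrability hypotheses are not available here and $f$ may vanish.

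For the second inequality, we first rewrite the mean with respect to $\mu$ as a mean with respect to $\nu$. Since $\dd\mu=f\,\dd\nu$,
$$
M^{p-1}_\mu(f)=\left(\frac{1}{\mu(\mathcal X)}\int_{\mathcal X} f^{p-1}\,\dd\mu\right)^{\frac{1}{p-1}}
=\left(\frac{1}{\mu(\mathcal X)}\int_{\mathcal X} f^{p}\,\dd\nu\right)^{\frac{1}{p-1}} .
$$
Here $p-1>0$, so $f^{p-1}$ is well defined even where $f=0$. Raising both sides to the power $p-1>0$ and multiplying out, the desired inequality $\mu(\mathcal X)\le M^{p-1}_\mu(f)\,\nu(\mathcal X)$ becomes
$$
\frac{\mu(\mathcal X)^p}{\nu(\mathcal X)^{p-1}}\le \int_{\mathcal X} f^p\,\dd\nu,
\qquad\text{equivalently}\qquad
\left(\frac{\mu(\mathcal X)}{\nu(\mathcal X)}\right)^p\le \frac{1}{\nu(\mathcal X)}\int_{\mathcal X} f^p\,\dd\nu .
$$
This is exactly the $p$-th power of the first inequality. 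So the second statement follows from the first after this algebraic reduction.

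As a cross-check, there is an alternative route for the second inequality. Restrict to $Y=\{f>0\}$, which carries all of $\mu$. Lemma~\ref{lem:M-monotone} with $r=-1<s=p-1$ gives $M^{p-1}_\mu(f)\ge M^{-1}_\mu(f)$. Moreover, $M^{-1}_\mu(f)=\mu(\mathcal X)/\int_Y f^{-1}\,\dd\mu=\mu(\mathcal X)/\nu(Y)\ge \mu(\mathcal X)/\nu(\mathcal X)$. This route needs the lemma's regularity assumptions, so we would keep the Jensen argument as the main proof.

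The only real care point is bookkeeping, not a genuine obstacle. One must check that the change of measure $\int f^{p-1}\,\dd\mu=\int f^p\,\dd\nu$ is valid. It is, by the definition of the Radon--Nikodym derivative applied to the nonnegative function $f^{p-1}$ via monotone convergence. One must also treat the case of infinite integrals, where both inequalities hold trivially.
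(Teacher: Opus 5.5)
Your argument is correct and is essentially the paper's. The paper proves the first inequality by H\"older's inequality applied to $\frac{\dd\mu}{\dd\nu}\cdot 1$ with exponents $p$ and $q=p/(p-1)$, which is the same power-mean comparison $M^1_\nu\le M^p_\nu$ you obtain from Jensen for $x\mapsto x^p$. It then gets the second inequality by rearranging that same bound: it splits $\mu(\mathcal X)=\mu(\mathcal X)^{1/p}\mu(\mathcal X)^{1/q}$, cancels, raises to the power $q$, and uses $\int f^{p-1}\,\dd\mu=\int f^{p}\,\dd\nu$; this is your reduction in a less transparent form, and your version makes explicit that the two inequalities are equivalent.
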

\begin{proof}
For the first inequality, we have by Hölder's inequality for $p,q \geq 1$ satisfying $\dfrac{1}{p} + \dfrac{1}{q} = 1$,
\begin{align*}
\mu(\mathcal{X}) &= \int_\mathcal{X} \dfrac{\dd \mu}{\dd \nu} \dd \nu \\
&\leq   \left(  \int_\mathcal{X}  \left(  \dfrac{\dd \mu}{\dd \nu} \right)^p \dd \nu \right)^{\frac{1}{p}} \left( \int_\mathcal{X} 1^q \dd \nu  \right)^{\frac{1}{q}}      \\
&= \left(   \dfrac{1}{\nu(\mathcal{X})} \int_\mathcal{X}   \left(  \dfrac{\dd \mu}{\dd \nu} \right)^p  \, \dd \nu \right)^{\frac{1}{p}}      \nu(\mathcal{X})^{\frac{1}{p}}     \nu(\mathcal{X}) ^{\frac{1}{q}}  \\
&= M^p_\nu \left( \dfrac{\dd \mu}{\dd \nu}  \right) \nu(\mathcal{X}) \, .
\end{align*}
For the second inequality, we start from the second line in the previous computation, which gives, \begin{align*}
\mu(\mathcal{X}) =\mu(\mathcal{X})^{\frac{1}{p}} \mu(\mathcal{X})^{\frac{1}{q}}
&\leq   \left(  \int_\mathcal{X}  \left(  \dfrac{\dd \mu}{\dd \nu} \right)^p \dd \nu \right)^{\frac{1}{p}} \nu(\mathcal{X})^{\frac{1}{q}}      \\
&= \left(  \dfrac{1}{\mu(\mathcal{X})} \int_\mathcal{X}  \left(  \dfrac{\dd \mu}{\dd \nu} \right)^p \dd \nu \right)^{\frac{1}{p}} \mu(\mathcal{X})^{\frac{1}{p}} \, \nu(\mathcal{X})^{\frac{1}{q}}   \, ,
\end{align*}
whereupon, canceling $\mu(\mathcal{X})^{\frac{1}{p}}$ on both sides of the inequality and exponentiating by $q$, we obtain,
\begin{align*}
\mu(\mathcal{X}) 
&\leq   \left(  \dfrac{1}{\mu(\mathcal{X})} \int_\mathcal{X}  \left(  \dfrac{\dd \mu}{\dd \nu} \right)^p \dd \nu \right)^{\frac{q}{p}} \, \nu(\mathcal{X})    \\
&=    \left(  \dfrac{1}{\mu(\mathcal{X})} \int_\mathcal{X}  \left(  \dfrac{\dd \mu}{\dd \nu} \right)^{p-1}  \dfrac{\dd \mu}{\dd \nu} \, \dd \nu \right)^{\frac{q}{p}} \, \nu(\mathcal{X})    \\
&=   \left[ \left(  \dfrac{1}{\mu(\mathcal{X})} \int_\mathcal{X}  \left(  \dfrac{\dd \mu}{\dd \nu} \right)^{p-1}     \dd \mu \right)^{\frac{1}{p-1}} \right]^{\frac{q(p-1)}{p}} \, \nu(\mathcal{X})    \\
&= M^{p-1}_\mu \left( \dfrac{\dd \mu}{\dd \nu}  \right) \nu(\mathcal{X}) \, ,
\end{align*}
where we used the fact that $\frac{q(p-1)}{p} = 1$.
\end{proof}

The following two lemmas establish two different multiplicativity, or factorization, properties for $L^p$ means.
\begin{lem}
\label{lem:mean-factorization}
For any measurable positive functions $f,f'$ on a finite measure space $(\mathcal{X},\mu)$, the geometric mean is the unique power mean satisfying 
\begin{equation}
\label{eq:M-internal-add}
M^p_\mu(ff',\mathcal{X})=M^p_\mu(f,\mathcal{X})M^p_\mu(f',\mathcal{X}).
\end{equation}
\end{lem}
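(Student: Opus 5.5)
The proof has two parts: the geometric mean satisfies \eqref{eq:M-internal-add}, and no power mean of order $p\neq0$ does.

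\emph{Existence.} For $p=0$, Lemma~\ref{lem:M0} gives
$$
M^0_\mu(f)=\exp\Bigl(\tfrac{1}{\mu(\mathcal X)}\int_{\mathcal X}\log f\,\dd\mu\Bigr).
$$
Since $\log(ff')=\log f+\log f'$ and the integral is linear, the exponent for $ff'$ is the sum of the exponents for $f$ and $f'$. Exponentiating gives \eqref{eq:M-internal-add} whenever $\log f,\log f'\in L^1(\mu)$.

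\emph{Uniqueness.} Fix $p\neq0$. It suffices to exhibit one finite measure space and one pair $f,f'$ for which \eqref{eq:M-internal-add} fails. I would use a two-point space with $\mu=\tfrac12(\delta_1+\delta_2)$, so that $\mu(\mathcal X)=1$, and take $f=f'=(a,1)$ with $a>0$, $a\neq1$. Then
$$
M^p_\mu(f^2)=\Bigl(\tfrac{a^{2p}+1}{2}\Bigr)^{1/p},\qquad M^p_\mu(f)^2=\Bigl(\tfrac{a^{p}+1}{2}\Bigr)^{2/p}.
$$
Raising both sides to the power $p$ (which preserves equality for either sign of $p$) and setting $x=a^p$, equality becomes
$$
\frac{x^2+1}{2}=\Bigl(\frac{x+1}{2}\Bigr)^2,
$$
which is equivalent to $(x-1)^2=0$, i.e.\ $x=1$. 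Since $a\neq1$ and $p\neq0$, we have $x=a^p\neq1$, so equality fails.

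An equivalent check is that $M^p_\mu(f^2)=M^{2p}_\mu(f)^2$. Identity \eqref{eq:M-internal-add} with $f'=f$ would then force $M^{2p}_\mu(f)=M^p_\mu(f)$, contradicting the strict monotonicity in $p$ from Lemma~\ref{lem:M-monotone} for nonconstant $f$ (when $p<0$ we have $2p<p$, which is handled the same way).

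I expect no real obstacle. The only care needed is to interpret ``unique'' as requiring the identity for all $f,f'$, so a single counterexample excludes each $p\neq0$, and to assume the integrability for $p=0$ that makes the geometric means finite.
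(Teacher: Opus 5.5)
Your proposal is correct and follows the paper's proof: it handles $p=0$ via Lemma~\ref{lem:M0} and linearity of the integral, and it excludes each $p\neq0$ by a counterexample on a two-point space with equal weights. The only difference is the test pair: the paper takes $f=(a,b)$, $f'=(b,a)$, so that $ff'$ is constant and failure reduces to strict AM--GM, whereas your choice $f=f'=(a,1)$ reduces it to $(a^p-1)^2\neq0$, which is equivalent to the strict monotonicity of $p\mapsto M^p_\mu(f)$ from Lemma~\ref{lem:M-monotone}.
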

\begin{proof}
Consider first the case $p=0$. Then by Lemma~\ref{lem:M0}, we have 
$$
M^0_\mu(ff')
=
\exp\left(\frac{1}{\mu(\mathcal{X})}\int_{\mathcal{X}}\log(ff')\,d\mu\right)
=
M^0_\mu(f)M^0_\mu(f'),
$$
confirming that the geometric mean satisfies \eqref{eq:M-internal-add}.
Conversely, let $p\neq0$ and consider the two-point space $X=\{1,2\}$ with equal weights
and let $f=(a,b)$, $f'=(b,a)$ for $a,b>0$. Then $ff'=(ab,ab)$, and therefore
$$
M^p_\mu(ff')=
\left(\frac{(ab)^p+(ab)^p}{2}\right)^{1/p}=ab.
$$
On the other hand,
$$
M^p_\mu(f)M^p_\mu(f')
=
\left(\frac{a^p+b^p}{2}\right)^{1/p}
\left(\frac{b^p+a^p}{2}\right)^{1/p}
=
\left(\frac{a^p+b^p}{2}\right)^{2/p}.
$$
For equality to hold for all $a,b>0$, we would need
$$
\frac{a^p+b^p}{2}=(ab)^{p/2}
$$
for all $a,b>0$, which is false whenever $a \neq b$, by the strict
AM--GM inequality applied to $a^p$ and $b^p$. Hence no $p \neq 0$
power mean satisfies the internal factorization property universally.
\end{proof}

Although Lemma \ref{lem:mean-factorization} shows that, other than the geometric mean, the other $L^p$ means do not generally factor in terms of their input functions, they do satisfy a weaker factorization property for product spaces.
\begin{lem}
\label{lem:prod-space-mean}
    Given measurable functions $f$ on the measure space $(\mathcal{X}, \mu)$ and $f'$ on $(\mathcal{Y}, \nu)$, we have
    \begin{equation}
        \label{eq:mean-factor-product}
        M^p_{\mu \otimes \nu} \big(f \cdot f', \mathcal{X} \times \mathcal{Y} \big) = M^p_{\mu } \big(f , \mathcal{X}  \big) \cdot M^p_{\nu} \big( f',  \mathcal{Y} \big) 
    \end{equation}
    for any $p \in \mathbb{R}$, where we take $f,f'>0$ if $p\leq 0$.
\end{lem}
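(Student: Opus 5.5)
The plan is to reduce the identity to Fubini--Tonelli on the product measure. The case $p\neq 0$ and the case $p=0$ are treated separately, the latter via Lemma~\ref{lem:M0} or by a direct computation. Throughout, $f\cdot f'$ denotes the function $(x,y)\mapsto f(x)f'(y)$ on $\mathcal X\times\mathcal Y$, and the total mass factors as $(\mu\otimes\nu)(\mathcal X\times\mathcal Y)=\mu(\mathcal X)\,\nu(\mathcal Y)$.

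For $p\neq0$, I would write $(f(x)f'(y))^p=f(x)^p f'(y)^p$; when $p<0$ this uses the positivity assumption. The integrand is nonnegative and measurable on the product, so Tonelli's theorem gives
$$
\int_{\mathcal X\times\mathcal Y}(f f')^p\,\dd(\mu\otimes\nu)=\int_{\mathcal X}f^p\,\dd\mu\int_{\mathcal Y}(f')^p\,\dd\nu .
$$
I would then divide by $\mu(\mathcal X)\nu(\mathcal Y)$ and take the $1/p$-th power. Since $t\mapsto t^{1/p}$ is multiplicative on $[0,\infty)$, this yields \eqref{eq:mean-factor-product} directly. Because Tonelli is used rather than Fubini, the identity also holds in the extended sense when either side is infinite. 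The only exception is the indeterminate product $0\cdot\infty$, which I would exclude by the standing convention that the displayed means are defined.

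For $p=0$, I would use the formula of Lemma~\ref{lem:M0} with $\log(ff')=\log f(x)+\log f'(y)$. Integrating against the normalized product measure $\bar\mu\otimes\bar\nu$ splits the integral into two terms. By Fubini, the first term integrates out the $y$ variable against the probability measure $\bar\nu$, leaving $\int\log f\,\dd\bar\mu$, and symmetrically for the second term. Exponentiating gives the product of the two geometric means.

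The main technical point is the $p=0$ case. There one needs $\log f\in L^1(\bar\mu)$ and $\log f'\in L^1(\bar\nu)$ so that Fubini applies to the sum and no $\infty-\infty$ arises. I would take this as part of the hypothesis that the means are defined. An alternative avoids integrability of the logarithm altogether: the hypotheses of Lemma~\ref{lem:M0} hold for $f$ and $f'$ with some $\epsilon>0$. By the $p\neq0$ case they then hold for $ff'$ as well, since $\int (ff')^{\pm\epsilon}=\int f^{\pm\epsilon}\int (f')^{\pm\epsilon}<\infty$. One can therefore pass to the limit $p\to0$ in the already established identity and invoke Lemma~\ref{lem:M0} on both sides.
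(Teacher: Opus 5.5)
Your proposal is correct and follows essentially the same route as the paper: for $p\neq 0$ the paper writes the product integral of $f(x)^p f'(y)^p$ as the product of the two normalized integrals and takes the $1/p$-th power, and for $p=0$ it applies Lemma~\ref{lem:M0} and splits $\log\bigl(f(x)f'(y)\bigr)$ into two terms that integrate separately. Your additional bookkeeping only makes explicit details the paper leaves implicit: Tonelli for the extended-valued case, the integrability of $\log f$ guaranteed by the hypotheses of Lemma~\ref{lem:M0}, and the alternative limit argument as $p\to 0$.
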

\begin{proof}
For $p \neq 0$, we obtain from direct computation,
\begin{align*}
    M^p_{\mu \otimes \nu} \big(f \cdot f', \mathcal{X} \times \mathcal{Y} \big) 
    &=\left( \dfrac{1}{\mu(\mathcal{X}) \nu(\mathcal{Y}) }  \int_\mathcal{X} \int_\mathcal{Y} f(x)^p ~f'(y)^p ~ \dd \mu(x) ~\dd \nu(y)   \right)^{ \frac{1}{p}} \\
    &=  \left( \dfrac{1}{\mu(\mathcal{X}) }  \int_\mathcal{X}  f(x)^p  ~ \dd \mu(x)   ~\cdot  \dfrac{1}{ \nu(\mathcal{Y}) }  \int_\mathcal{Y}  ~f'(y)^p ~  ~\dd \nu(y)   \right)^{ \frac{1}{p}} \\
     &= M^p_{\mu } \big(f , \mathcal{X}  \big) \cdot M^p_{\nu} \big( f',  \mathcal{Y} \big) .
\end{align*}
For $p = 0$, we use Lemma~\ref{lem:M0}, which gives
\begin{align*}
    \log M^0_{\mu \otimes \nu} \big(f \cdot f', \mathcal{X} \times \mathcal{Y} \big)
    &= \dfrac{1}{\mu(\mathcal{X}) \nu(\mathcal{Y}) }  \int_\mathcal{X} \int_\mathcal{Y} \log \big(f(x) \;f'(y) \big) ~ \dd \mu(x) ~\dd \nu(y)   \\
    &= \dfrac{1}{\mu(\mathcal{X}) \nu(\mathcal{Y}) }  \int_\mathcal{X} \int_\mathcal{Y} \big( \log f(x) +  \log f'(y) \big) ~ \dd \mu(x) ~\dd \nu(y)    \\
     &=   \dfrac{1}{\mu(\mathcal{X}) }  \int_\mathcal{X}  \log f  ~ \dd \mu     +   \dfrac{1}{ \nu(\mathcal{Y}) }  \int_\mathcal{Y}  \log f' ~  \dd \nu   \\
     &= \log M^0_{\mu } \big(f , \mathcal{X}  \big) + \log M^0_{\nu} \big( f',  \mathcal{Y} \big) ,
\end{align*}
which completes the proof.
\end{proof}

\begin{remark}
\label{rem:multip-mean-char}
   Notice that, for a constant $c>0$, any generalized mean satisfying the factorization of Lemma~\ref{lem:prod-space-mean} must also satisfy
   $$
\mathcal{M}_g(f \otimes c) = \mathcal{M}_g(f ) \mathcal{M}_g( c) = c  \mathcal{M}_g(f ).
   $$
   Thus, product multiplicativity implies homogeneity of the generalized mean. Characterization of homogeneous generalized means then gives that 
   $g(t) = a \log t +b$ or $g(t) = a t^p +b$, corresponding to the geometric and power means, respectively.  
   Thus, the $L^p$ means are uniquely characterized as the means satisfying the multiplicative property of Lemma \ref{lem:prod-space-mean} \cite{FerPalPer08, AubNech11, Leinster12}. 
\end{remark}

\begin{remark}
\label{remark:ent-max}
Referring to formula (\ref{eq:exp-renyi-ineq}) of Lemma \ref{lem:M}, we note taking the logarithm of both sides and rearranging gives the upper bound of the Rényi entropy:
\begin{align*}
H_\nu^p(\mu, \mathcal{X}) = \log \mu(\mathcal{X}) -\log  M^{p-1}_\mu \left( \dfrac{\dd \mu}{\dd \nu}  \right) 
\leq \log \nu(\mathcal{X}) \, .
\end{align*}
This was derived in Lemma \ref{prop:ent-max} using Jensen's inequality.
\end{remark}

\begin{lem}
\label{lem:4int-4}
For generalized $g$-entropies, Postulate $4_{\rm int}$ implies Postulate $4$.
\end{lem}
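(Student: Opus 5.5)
The plan is to reduce external additivity on a product space to internal additivity applied on that same product space, using the two coordinate projections. Fix finite measure spaces $(\mathcal X_i,\nu_i)$ and finite measures $\mu_i\ll\nu_i$ for $i=1,2$. Write $m_i:=\mu_i(\mathcal X_i)$ and define the normalized densities
$$
F_i:=\frac{1}{m_i}\frac{\dd\mu_i}{\dd\nu_i}.
$$
\begin{itemize}
\item[] \emph{Step 1 (densities and masses).} Since all measures involved are finite, $\mu_1\otimes\mu_2\ll\nu_1\otimes\nu_2$, and the Radon--Nikodym derivative is the tensor product of the derivatives. I would check this on rectangles $A\times B$ using Fubini's theorem and extend by uniqueness of product measures. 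The total mass is $(\mu_1\otimes\mu_2)(\mathcal X_1\times\mathcal X_2)=m_1m_2$. Together these give
$$
\frac{1}{m_1m_2}\frac{\dd(\mu_1\otimes\mu_2)}{\dd(\nu_1\otimes\nu_2)}(x,y)=F_1(x)F_2(y).
$$
\item[] \emph{Step 2 (apply internal additivity).} By the identity $\mathcal H_g(\mu,(\mathcal X,\nu))=\widetilde{\mathcal H}_g\bigl(\tfrac{1}{\mu(\mathcal X)}\tfrac{\dd\mu}{\dd\nu},(\mathcal X,\mu)\bigr)$ noted after Postulate~$4_{\rm int}$, the left side of Postulate~$4$ equals $\widetilde{\mathcal H}_g\bigl((F_1\circ\pi_1)(F_2\circ\pi_2),(\mathcal X_1\times\mathcal X_2,\mu_1\otimes\mu_2)\bigr)$, where $\pi_1,\pi_2$ are the projections. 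Both factors $F_i\circ\pi_i$ are positive measurable functions on the single finite measure space $(\mathcal X_1\times\mathcal X_2,\mu_1\otimes\mu_2)$. Postulate~$4_{\rm int}$ therefore splits this quantity as
$$
\widetilde{\mathcal H}_g(F_1\circ\pi_1,\mu_1\otimes\mu_2)+\widetilde{\mathcal H}_g(F_2\circ\pi_2,\mu_1\otimes\mu_2).
$$
\item[] \emph{Step 3 (marginalize each term).} By Fubini,
$$
\frac{1}{m_1m_2}\int g(F_1(x))\,\dd(\mu_1\otimes\mu_2)=\frac{1}{m_1}\int_{\mathcal X_1}g(F_1)\,\dd\mu_1,
$$
so $\mathcal M_g(F_1\circ\pi_1,\mu_1\otimes\mu_2)=\mathcal M_g(F_1,\mu_1)$. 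The same holds for $F_2$. Hence each summand equals $\mathcal H_g(\mu_i,(\mathcal X_i,\nu_i))$, which is exactly Postulate~$4$.
\end{itemize}

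The only point needing care is positivity and well-definedness. $F_i$ may vanish on a $\nu_i$-nonnull set, but it is positive $\mu_i$-a.e. Likewise $F_1\circ\pi_1$ and $F_2\circ\pi_2$ are positive $(\mu_1\otimes\mu_2)$-a.e. Since all means are taken against $\mu_1\otimes\mu_2$, I would restrict to the full-measure set where both are positive, or redefine them as $1$ on the null set. This does not change any integral, so Postulate~$4_{\rm int}$ applies.

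Integrability transfers the same way. By Fubini, $g(F_1\circ\pi_1)\in L^1(\mu_1\otimes\mu_2)$ if and only if $g(F_1)\in L^1(\mu_1)$, so the right side is defined exactly when the left side is, under the paper's natural-domain convention.

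I expect the main obstacle to be minor: making precise that Postulate~$4_{\rm int}$ is invoked on the product space itself, with functions depending on only one coordinate. This is where "internal" specializes to "external". Everything else is Fubini and the product formula for Radon--Nikodym derivatives.
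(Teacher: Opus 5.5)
Your proposal is correct and follows the paper's proof essentially step for step. You pull back the normalized densities along the projections, apply Postulate $4_{\rm int}$ on $(\mathcal X_1\times\mathcal X_2,\mu_1\otimes\mu_2)$, identify the product's normalized density as $F_1F_2$, and marginalize each term by Fubini. Your added remarks on $\mu$-a.e. positivity and on integrability cover details that the paper leaves implicit under its natural-domain convention.
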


\begin{proof}
Let $\mu_i\ll\nu_i$ be finite measures on $\mathcal{X}_i$ for $i=1,2$,
and define the functions, for $i=1,2$,
$$
f_i = \frac1{\mu_i(\mathcal{X}_i)}\frac{d\mu_i}{d\nu_i} \;.
$$

When working on the product space $\mathcal{X}_1\times\mathcal{X}_2$, we use a slight abuse of notation for clarity, writing $f_i$ for the pullback $f_i\circ\pi_i$, where
$
\pi_i:\mathcal X_1\times\mathcal X_2\to\mathcal X_i
$
is the coordinate projection.

Then, by Postulate $4_{\rm int}$, 
 \begin{equation}
 \label{eq:Hg-f1f2}
    \widetilde{\mathcal{H}}_g(f_1f_2,(\mathcal{X}_1 \times \mathcal{X}_2,\mu_1 \otimes \mu_2))
    =
    \widetilde{\mathcal{H}}_g(f_1,(\mathcal{X}_1 \times \mathcal{X}_2,\mu_1 \otimes \mu_2))
    +
    \widetilde{\mathcal{H}}_g(f_2,(\mathcal{X}_1 \times \mathcal{X}_2,\mu_1 \otimes \mu_2)).
\end{equation}

Notice that, for the product measure, we have
$$
\frac{1}{(\mu_1\otimes\mu_2)(\mathcal{X}_1 \times \mathcal{X}_2)}
\frac{d(\mu_1\otimes\mu_2)}{d(\nu_1\otimes\nu_2)}
=
\frac{1}{\mu_1(\mathcal{X}_1)\mu_2(\mathcal{X}_2)}
\frac{d\mu_1}{d\nu_1}
\frac{d\mu_2}{d\nu_2}
=
f_1 f_2.
$$

Therefore, 
\begin{equation}
\label{eq:tildeH-H-12}
\widetilde{\mathcal{H}}_g(f_1f_2,(\mathcal{X}_1 \times \mathcal{X}_2,\mu_1 \otimes \mu_2))
= 
\mathcal{H}_g(\mu_1 \otimes \mu_2,(\mathcal{X}_1 \times \mathcal{X}_2,\nu_1 \otimes \nu_2)).
\end{equation}

Moreover,
$$
\begin{aligned}
\mathcal{M}_g(f_1,(\mathcal{X}_1\times \mathcal{X}_2,\mu_1\otimes\mu_2))
&=
g^{-1}
\left(
\frac{1}{\mu_1(\mathcal{X}_1)\mu_2(\mathcal{X}_2)}
\int_{\mathcal{X}_2} \int_{\mathcal{X}_1}
g(f_1(x_1)) \; d\mu_1(x_1)\,d\mu_2(x_2)
\right) \\
&=
g^{-1}
\left(
\frac{1}{\mu_1(\mathcal{X}_1)}
\int_{\mathcal{X}_1} 
g(f_1(x_1)) \; d\mu_1(x_1) \right)\\
&=
\mathcal{M}_g(f_1,(\mathcal{X}_1,\mu_1)), 
\end{aligned}
$$
which implies that
$$
\begin{aligned}
\widetilde{\mathcal{H}}_g(f_1,(\mathcal{X}_1\times \mathcal{X}_2,\mu_1\otimes\mu_2))
&=
-\log \mathcal{M}_g(f_1,(\mathcal{X}_1\times \mathcal{X}_2,\mu_1\otimes\mu_2)) \\
&=
-\log \mathcal{M}_g(f_1,(\mathcal{X}_1,\mu_1)) \\
&=
\mathcal{H}_g(\mu_1,(\mathcal{X}_1,\nu_1)), 
\end{aligned}
$$
and similarly for $f_2$. 
Therefore, putting this together with Eq.~\ref{eq:tildeH-H-12}, Eq.~\ref{eq:Hg-f1f2} becomes
$$
\mathcal{H}_g(\mu_1 \otimes \mu_2,(\mathcal{X}_1 \times \mathcal{X}_2,\nu_1 \otimes \nu_2))
= \mathcal{H}_g(\mu_1,(\mathcal{X}_1,\nu_1)) 
+
\mathcal{H}_g(\mu_2,(\mathcal{X}_2,\nu_2)),
$$
which is Postulate 4.
\end{proof}

\vspace{1em}

\section{Proofs of the Main Theorems and Related Results}
\label{sec:proofs}

\begin{proof}[Proof of Theorem~\ref{thm:ent-admit}]

We start with (a). We first note that every positive measurable $f$ on $(\mathcal{X},\mu)$ can be realized
as such a normalized Radon--Nikodym density by defining
$$
d\nu_f=\frac{1}{\mu(X)f}\,d\mu,
$$
on the natural domain where this defines a finite measure, so we may equivalently prove condition \eqref{eq:M-condition} for the corresponding density. 
Moreover, substituting $\mu = \nu$ in Eq.~\eqref{eq:gh-entropy} gives
\begin{equation}
\label{eq:max-hg-ent}
\mathcal H_{g,h}(\nu,(\mathcal X,\nu))
=
h\left(\frac{1}{\nu(\mathcal X)}\right).
\end{equation}
(For $h=-\log$ this is simply $\mathcal H_g(\nu,(\mathcal X,\nu)) = \log \nu(\mathcal X)$.)
More generally, the same value is obtained for any $\mu = c\,\nu$, $c>0$. 
Equation~\eqref{eq:max-hg-ent} verifies the reference-mass clause in Postulate~A and shows that every measure $\mu=c\nu$ attains equality in its first clause. The converse equality statement will follow below from the strict equality condition in \eqref{eq:M-condition}, together with $\nu(Y)=\nu(\mathcal X)$.

Since $\mu(\mathcal{X}\setminus Y)=0$, we may regard $\mu$ as a measure on
$Y$, and $\mu\ll\nu|_Y$. Applying Postulate A on the reference space
$(Y,\nu|_Y)$ gives
$$
\mathcal{H}_{g,h}(\mu,(Y,\nu|_Y))
\leq
\mathcal{H}_{g,h}(\nu|_Y,(Y,\nu|_Y)).
$$
Equivalently,
\begin{equation}
\label{eq:h-monotone}
h\left(
\mathcal{M}_g \left(
\frac{1}{\mu(\mathcal X)}
\frac{d\mu}{d\nu|_Y},
(Y,\mu)
\right)
\right)
\leq
h\left(\frac{1}{\nu(Y)}\right).
\end{equation}

Since $\mu(Y)=\mu(X)$ and $\frac{d\mu}{d(\nu|_Y)} =\frac{d\mu}{d\nu}$
on $Y$,  
$$
\mathcal{M}_g \left(
\frac{1}{\mu(\mathcal X)}
\frac{d\mu}{d\nu|_Y},
(Y,\mu)
\right) = \mathcal{M}_g \left(
\frac{1}{\mu(\mathcal{X})}
\frac{d\mu}{d\nu},
(\mathcal{X},\mu)
\right) .
$$
Since $h$ is strictly decreasing,  we then obtain
\begin{equation}
\label{ineq:M-condition}
\mathcal{M}_g \left(
\frac{1}{\mu(\mathcal{X})}
\frac{d\mu}{d\nu},
(\mathcal{X},\mu)
\right)
\geq 
\frac{1}{\nu(Y)}
\geq
\frac{1}{\nu(\mathcal{X})},
\end{equation}
where the last inequality follows because $\nu(Y) \leq \nu(\mathcal{X})$.

Now, the harmonic mean
\begin{align}
\label{eq:harmonic-mean}
    M_\mu^{-1}\left(\dfrac{1}{\mu(\mathcal{X}) }\dfrac{\dd \mu}{\dd \nu}, \mathcal{X} \right) 
    \nonumber
    &=M_\mu^{-1}\left(\dfrac{1}{\mu(\mathcal{X}) }\dfrac{\dd \mu}{\dd \nu}, Y \right)  \\
    \nonumber
    &= \left( \dfrac{1}{\mu(\mathcal{X})}\int_{Y}  \left( \dfrac{1}{\mu(\mathcal{X}) }\dfrac{\dd \mu}{\dd \nu} \right)^{-1} \dd \mu \right)^{-1} \\
    &= \left(\int_{Y}  \left( \dfrac{\dd \mu}{\dd \nu} \right)^{-1} \dfrac{\dd \mu}{\dd \nu} \, \dd \nu \right)^{-1} 
    = \dfrac{1}{\nu(Y)}.
\end{align}
Combining \eqref{ineq:M-condition} and \eqref{eq:harmonic-mean} gives the condition \eqref{eq:M-condition} for the generalized entropy. Notice that if we were to apply Postulate A directly to $\mu \ll \nu$, then Eq.~\eqref{ineq:M-condition} would have instead been the weaker result, $\mathcal{M}_g \left(
\frac{1}{\mu(\mathcal{X})}
\frac{d\mu}{d\nu},
(\mathcal{X},\mu)
\right)
\geq 
\frac{1}{\nu(\mathcal{X})}$, which does not in general imply the condition \eqref{eq:M-condition}.

For the other direction, starting from condition \eqref{eq:M-condition} and applying the decreasing function $h$ to both sides of the inequality, we obtain
\begin{align*}
\mathcal{H}_{g,h}(\mu,(\mathcal{X},\nu))
&=h\left(\mathcal{M}_g\left(
\frac{1}{\mu(\mathcal X)}
\frac{d\mu}{d\nu},
(\mathcal X,\mu)
\right)\right) \\
&\leq
h\left( M_\mu^{-1}\left(\dfrac{1}{\mu(\mathcal{X}) }\dfrac{\dd \mu}{\dd \nu}, \mathcal{X} \right) \right)\\
&=h\left(\dfrac{1}{\nu(Y)} \right) \\
&\leq 
h\left(\dfrac{1}{\nu(\mathcal{X})} \right) \\
&= \mathcal{H}_{g,h}(\nu,(\mathcal{X},\nu))
\end{align*}
which is Postulate A. 
Equality throughout holds if and only if the normalized density is constant $\mu$-almost everywhere and $\nu(Y)=\nu(\mathcal X)$, which is equivalent to $\mu=c\nu$.

The fact that condition \eqref{eq:M-condition} is also satisfied by Shannon entropy and R\'enyi entropy of order $p \geq1$ is given by Lemma \ref{lem:M}. For $0<p<1$, the same inequality follows from
Lemma~\ref{lem:M-monotone}.
This result in the case of Shannon  and R\'enyi entropies also confirms that Postulate A is in fact the weakest condition in the hierarchy, in the sense that it is implied by the stronger additivity assumptions, but it does not imply them.

For (b), we treat the case in which $g$ is increasing; the case in which $g$ is
decreasing is identical after reversing the relevant inequalities, and we
indicate the necessary sign changes at the end.

Fix a finite measure space $(\mathcal{X},\mu)$, a positive measurable function
$f$, and write $\xi := \mu/\mu(\mathcal{X})$ for the associated probability
measure. Condition~\eqref{eq:M-condition}, restricted to this pair, reads
$$
  \mathcal{M}_g(f,(\mathcal{X},\mu)) \geq M^{-1}_\mu(f,\mathcal{X}).
$$
Since $g$ is increasing, applying $g$ to both sides preserves the
inequality:
\begin{equation}
\label{ineq:M-jensen}
  \int_\mathcal{X} g(f)\,d\xi \geq g\Big(M^{-1}_\mu(f,\mathcal{X})\Big)
  = g\left(\psi^{-1}\left(\int_\mathcal{X} \psi(f)\,d\xi\right)\right),
\end{equation}
where $\psi(x):=1/x$ is the generator of the harmonic mean. The map $\psi$
is an involution: $\psi^{-1}=\psi$. Let $\Phi:= g \circ \psi$ and $u := \psi(f) = 1/f$, so
that $f = \psi(u) = 1/u$ and $g(f) = g(1/u) = \Phi(u)$. Likewise,
$g(\psi^{-1}(\int \psi(f)\,d\xi)) = g(\psi(\int \psi(f)\,d\xi)) = \Phi(\int u\,d\xi)$. The inequality~\eqref{ineq:M-jensen} thus becomes
\begin{equation}
\label{eq:jensen-condition}
  \int_\mathcal{X} \Phi(u)\,d\xi \;\ge\; \Phi\Big(\int_\mathcal{X} u\,d\xi\Big),
\end{equation}
with equality in~\eqref{eq:M-condition} corresponding to equality in~\eqref{eq:jensen-condition}. As $f$
ranges over all positive measurable functions on all finite measure
spaces, so does $u=1/f$; hence condition~\eqref{eq:M-condition} holds universally if and
only if~\eqref{eq:jensen-condition} holds for every probability space $(\mathcal{X},\xi)$ and every
positive integrable $u$, that is, if and only if $\Phi$ satisfies Jensen's inequality unconditionally.

To see that ~\eqref{eq:jensen-condition} implies convexity of $\Phi$, we may take  $\mathcal{X}=\{1,2\}$,
$\xi = t\,\delta_1 + (1-t)\,\delta_2$ for an arbitrary $t\in(0,1)$, and let
$u$ take the values $a,b>0$ on the two atoms. Then~\eqref{eq:jensen-condition} reads
$$
  t\,\Phi(a) + (1-t)\,\Phi(b) \;\ge\; \Phi\big(ta+(1-t)b\big)
  \qquad \text{for all } t\in(0,1),\ a,b>0,
$$
which is precisely the statement that $\Phi$ is convex on $(0,\infty)$.
Since $a,b,t$ were arbitrary, $\Phi$ must be convex. 
The equality condition in~\eqref{eq:M-condition} shows that the inequality is strict
whenever the two values of $u$ are distinct. Thus $\Phi$ is strictly
convex. Conversely, strict Jensen inequality for $\Phi$ gives both the
mean comparison and its equality condition. 

If $g$ is decreasing, applying $g$ to $\mathcal{M}_g(f,(X,\mu))\ge M^{-1}_\mu(f,\mathcal{X})$
reverses the inequality~\eqref{ineq:M-jensen}, so~\eqref{eq:M-condition} becomes equivalent to the
\emph{reverse}-Jensen inequality $\int_{\mathcal{X}} \Phi(u)\,d\xi \leq
\Phi(\int_{\mathcal{X}} u\,d\xi)$ for all $\xi,u$, and the above
goes through verbatim with ``convex'' replaced by ``concave'' throughout.
\end{proof}

The comparison of generalized means used in the above proof is classical. In
particular, comparison of weighted quasi-arithmetic means can be expressed
through convexity or concavity of a composition of their generators; see Hardy, Littlewood, and P\'olya~\cite{HarLitPol52} and the weighted comparison results of Maksa and P\'ales~\cite{MaksaPales2010}, and Grabisch, Marichal, Mesiar, and Pap \cite{GrabischEtAl11}. The novelty of Theorem~\ref{thm:ent-admit} is not this comparison theorem itself---that is, the equivalence of conditions $(a)$ and $(b)$ in Theorem~\ref{thm:ent-admit}---but the identification of entropy admissibility under Postulate~A with universal domination of the harmonic mean. The classical mean-comparison criterion then converts that structural entropy requirement into a condition on the generator alone.

Generalized entropies based on Kolmogorov--Nagumo means have also been
studied in information theory and thermostatistics; see, for example,
\cite{AczelDaroczy75,CzaNau02,Duk10}. These works primarily examine entropy
constructions under additivity, pseudo-additivity, or thermodynamic
composition requirements. The present framework instead begins with
structural monotonicity relative to a reference measure and separates the
resulting restrictions on the mean generator from those imposed by the
choice of entropy scale and by product additivity.

\begin{lem}[Simple-density approximation]
Let $\mu$ be a nonzero finite measure, let $r>0$ $\mu$-almost
everywhere, and let $g:(0,\infty)\rightarrow \mathbb{R}$ be continuous and strictly
monotone. If $g(r)\in L^1(\mu)$, then there exist positive simple functions $r_n$, each bounded above
and bounded away from zero, such that
$$
g(r_n)\longrightarrow g(r)
\quad\text{in }L^1(\mu)
$$
and $\mu$-almost everywhere.
\end{lem}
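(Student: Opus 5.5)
We will transfer the problem to the function $G:=g(r)\in L^1(\mu)$. Since $g$ is continuous and strictly monotone on $(0,\infty)$, its image $I:=g((0,\infty))$ is an open interval. Moreover, $g^{-1}:I\to(0,\infty)$ is a continuous strictly monotone bijection. We will construct simple functions $G_n$ taking values in compact subintervals of $I$, with $G_n\to G$ both $\mu$-a.e. and in $L^1(\mu)$. We then set $r_n:=g^{-1}(G_n)$. Each $r_n$ is simple, positive, bounded above and bounded away from zero, because $g^{-1}$ maps a compact subinterval of $I$ onto a compact subinterval of $(0,\infty)$. Finally $g(r_n)=G_n$, so both convergence statements transfer verbatim.

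To build $G_n$, we first choose a point $c\in I$ (for instance $c=g(1)$) and an exhaustion of $I$ by compact intervals $[a_n,b_n]\subset I$ with $a_n\downarrow\inf I$, $b_n\uparrow\sup I$, and $c\in[a_n,b_n]$ for all $n$. We then define the truncation $T_n:=\max(a_n,\min(G,b_n))$. Pointwise, $T_n\to G$ wherever $G\in I$, which holds $\mu$-a.e. since $r>0$ a.e. In addition, $|T_n|\le |G|+|c|$: if $G\in[a_n,b_n]$ then $T_n=G$, and otherwise $T_n$ lies between $c$ and $G$. Since $\mu$ is finite, dominated convergence gives $T_n\to G$ in $L^1(\mu)$.

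Next, for each $n$ we discretize $T_n$ on the compact interval $[a_n,b_n]$. We partition $[a_n,b_n]$ into finitely many intervals of length at most $1/n$ and let $G_n$ take the left endpoint of the interval containing $T_n(x)$. Then $G_n$ is simple with values in $[a_n,b_n]\subset I$, and $|G_n-T_n|\le 1/n$ everywhere. Consequently $\|G_n-T_n\|_{L^1(\mu)}\le \mu(\mathcal X)/n$, and the pointwise error also tends to zero. Combining this with the previous step, $G_n\to G$ in $L^1(\mu)$ and $\mu$-a.e.

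The main point requiring care is that the approximants stay inside the open range $I$ of $g$ with values away from its endpoints. This is exactly what makes $r_n$ bounded away from $0$ and $\infty$ rather than merely positive. The construction handles both the increasing and decreasing cases uniformly, since it works entirely on the image side and only uses that $g^{-1}$ is a homeomorphism onto $(0,\infty)$. If $I$ is bounded on one side, the truncation levels simply approach that finite endpoint from inside, and domination by $|G|+|c|$ still holds.
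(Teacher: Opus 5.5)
Your proposal is correct and follows the same route as the paper: truncate $g(r)$ to compact subintervals exhausting the open range $g((0,\infty))$, approximate each truncation uniformly within $1/n$ by a simple function valued in that subinterval, and pull back through $g^{-1}$. Your explicit domination bound $|T_n|\le |G|+|c|$, obtained from a fixed point $c$ lying in every truncation interval, justifies the $L^1$ convergence of the truncations, which the paper only asserts from $g(r)\in L^1(\mu)$.
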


\begin{proof}
Let $J:=g((0,\infty))$ and set $Y:=g(r)$. Choose compact intervals
$J_n\subset J$ increasing to $J$, and let $Y^{(n)}$ be the truncation of
$Y$ to $J_n$. Since $Y\in L^1(\mu)$, $Y^{(n)}\rightarrow Y$
in $L^1(\mu)$ and almost everywhere. Choose a simple function $Y_n$
taking values in $J_n$ such that
$$
|Y_n-Y^{(n)}|\leq\frac{1}{n}.
$$
Then set $r_n:=g^{-1}(Y_n)$.
Since $Y_n$ takes values in a compact subset of $J$, the function
$r_n$ is bounded above and bounded away from zero. Moreover,
$$
g(r_n)=Y_n\longrightarrow Y=g(r)
$$
in $L^1(\mu)$ and almost everywhere.
\end{proof}



\begin{proof}[Proof of Theorem~\ref{thm:ent-char}]

Continuity of $h$ follows from Postulate~2. We first note that its strict
decrease is forced by Postulate~A and the effective-support hypothesis.
Fix $0<r<s$, let $\mathcal X=\{1,2\}$, and define
$$
\nu(\{1\})=\frac1s,
\qquad
\nu(\{2\})=\frac1r-\frac1s.
$$
Let $Y=\{1\}$ and $\mu=\nu|_Y$. By the effective-support hypothesis,
$$
\mathcal H(\mu,(\mathcal X,\nu))
=
\mathcal H(\mu,(Y,\nu|_Y))
=
h(s).
$$
The second clause of Postulate~A gives
$$
\mathcal H(\nu,(\mathcal X,\nu))=h(r).
$$
Since $\mu$ is not proportional to $\nu$ on $\mathcal{X}$, the strict
equality condition in Postulate~A yields
$$
h(s)<h(r).
$$
Hence $h$ is strictly decreasing.

We first prove part~$(1)$. By the additional hypothesis of the theorem, it suffices to prove the representation on the restricted space. The $(g,h)$-entropy in Definition~\ref{def:gh-ent} is also unchanged under this restriction, since $\mu(Y)=\mu(\mathcal X)$, the Radon--Nikodym derivative restricts accordingly, and its defining integral is taken with respect to $\mu$.

On the restricted space one has $\mu|_Y\sim\nu|_Y$. We may therefore work on this space and, to simplify notation, continue to denote the restricted measures and space by $\mu$, $\nu$, and $\mathcal X$.


Let
$$
m:=\mu(\mathcal X)>0,
\qquad
\bar{r}
:=
\frac{1}{m}\frac{\dd\mu}{\dd\nu}.
$$
Since $\mu\sim\nu$ on the restricted space, $\bar r>0$ almost
everywhere and
$$
\dd\nu=\frac{1}{m\bar r}\,\dd\mu.
$$

We first prove the representation when $\bar r$ is simple, that is, for the positive simple-density structures appearing in the approximation hypothesis.
Write
$$
\bar r
=
\sum_{i=1}^n \bar r_i\mathbf 1_{A_i},
$$ 
where
$$
\mathcal X=A_1\sqcup\cdots\sqcup A_n,
\qquad
\bar r_i>0,
$$
After discarding any $\mu$-null cells, set
$$
m_i:=\mu(A_i),
\qquad
w_i:=\frac{m_i}{m},
\qquad
\mu_i:=\mu|_{A_i},
\qquad
\nu_i:=\frac{m}{m_i}\nu|_{A_i}.
$$
Then $w_i\nu_i=\nu|_{A_i}$,
and hence
$$
\bigoplus_{i=1}^n w_i\nu_i=\nu.
$$
Moreover,
$$
\frac{1}{m_i}\frac{\dd\mu_i}{\dd\nu_i}
=
\bar r_i.
$$
Indeed,
$$
\frac{\dd\mu_i}{\dd\nu_i}
=
\frac{m_i}{m}\frac{\dd\mu}{\dd\nu}
=
m_i\bar r_i.
$$
It follows that
$$
\mu_i=m_i\bar r_i\,\nu_i.
$$
Thus $\mu_i$ is proportional to $\nu_i$, and the equality case of
Postulate A gives
$$
\mathcal H(\mu_i,(A_i,\nu_i))
=
\mathcal H(\nu_i,(A_i,\nu_i)).
$$
Furthermore,
$$
\nu_i(A_i)
=
\frac{m}{m_i}\nu(A_i)
=
\frac{1}{\bar r_i}.
$$
By the definition of the induced atomic entropy scale,
$$
\mathcal H(\mu_i,(A_i,\nu_i))
=
h(\bar r_i).
$$

By Postulate 1, the original structure is identified with the corresponding
disjoint sum of the component structures. Repeated application of
Postulate $5'$ therefore gives
$$
\mathcal H(\mu,(\mathcal X,\nu))
=
\varphi^{-1}
\left(
\sum_{i=1}^n
w_i\varphi\bigl(h(\bar r_i)\bigr)
\right).
$$
Since $g=\varphi\circ h$, we have
$\varphi^{-1}=h\circ g^{-1}$,
and therefore
$$
\mathcal H(\mu,(\mathcal X,\nu))
=
h\circ g^{-1}
\left(
\sum_{i=1}^n w_i g(\bar r_i)
\right)
=
h\circ g^{-1}
\left(
\frac{1}{m}
\int_{\mathcal X}g(\bar r)\,\dd\mu
\right).
$$

Now let $\bar{r}$ be arbitrary, rather than necessarily simple.
Choose positive simple functions $\bar r_n$ as in the simple-density
approximation hypothesis, and define
$$
\dd\nu_n:=\frac{1}{m\bar r_n}\,\dd\mu.
$$
Then
$$
\frac{1}{m}\frac{\dd\mu}{\dd\nu_n}=\bar r_n,
$$
so the finite-range formula gives
$$
\mathcal H(\mu,(\mathcal X,\nu_n))
=
h\circ g^{-1}
\left(
\frac{1}{m}
\int_{\mathcal X}g(\bar r_n)\,\dd\mu
\right).
$$
Since
$$
g(\bar r_n)\longrightarrow g(\bar r)
\quad\text{in }L^1\left(\frac{\mu}{m}\right),
$$
we have
$$
\frac{1}{m}
\int_{\mathcal X}g(\bar r_n)\,\dd\mu
\longrightarrow
\frac{1}{m}
\int_{\mathcal X}g(\bar r)\,\dd\mu.
$$

Since $\dd\nu=\frac{1}{m\bar r}\,\dd\mu$, the reference measure $\nu$ is precisely $\nu_{\bar r}$ in the
notation of Definition~\ref{def:simple-density-continuity}.
The continuity of $h\circ g^{-1}$ and continuity under simple density
approximation now imply
$$
\mathcal H(\mu,(\mathcal X,\nu))
=
h\circ g^{-1}
\left(
\frac{1}{m}
\int_{\mathcal X}g(\bar r)\,\dd\mu
\right).
$$
Equivalently,
$$
\mathcal H(\mu,(\mathcal X,\nu))
=
h\left(
\mathcal M_g
\left(
\frac{1}{\mu(\mathcal X)}
\frac{\dd\mu}{\dd\nu},
(\mathcal X,\mu)
\right)
\right),
$$
which is the $(g,h)$-entropy in Definition~\ref{def:gh-ent}. This proves
part~$(1)$.

For part~$(2)$, Postulate $4_{\rm at}$ gives
$$
h(rs)=h(r)+h(s)
\qquad
\text{for all }r,s>0.
$$
By continuity, there is a constant $a$ such that
$$
h(r)=a\log r.
$$
Since $h$ is strictly decreasing, $a<0$. The uniform probability measure
on two atoms has constant normalized density $1/2$, and hence its entropy
is $h(1/2)$. Postulate 3 gives
$$
h(1/2)=\log 2.
$$
Consequently,
$$
a\log(1/2)=\log 2,
$$
so $a=-1$ and
$$
h(r)=-\log r.
$$
The formula in part~$(1)$ therefore becomes
$$
\mathcal H(\mu,(\mathcal X,\nu))
=
-\log
\mathcal M_g
\left(
\frac{1}{\mu(\mathcal X)}
\frac{\dd\mu}{\dd\nu},
(\mathcal X,\mu)
\right),
$$
which is the $g$-entropy in Definition~\ref{def:g-ent}. This proves
part~$(2)$.

For part~$(3)$, Postulate 4 implies Postulate $4_{\rm at}$ by applying
external additivity to one-atom structures. Hence part~$(2)$ gives
$h=-\log$, so $\mathcal H$ is a $g$-entropy.

If
$$
\bar r_i
=
\frac{1}{\mu_i(\mathcal X_i)}
\frac{\dd\mu_i}{\dd\nu_i},
\qquad
i=1,2,
$$
then the normalized density of the product structure is
$$
\bar r_1\otimes\bar r_2.
$$
External additivity therefore gives
$$
\mathcal M_g
\left(
\bar r_1\otimes\bar r_2,
(\mathcal X_1\times\mathcal X_2,\mu_1\otimes\mu_2)
\right)
=
\mathcal M_g(\bar r_1,(\mathcal X_1,\mu_1))
\mathcal M_g(\bar r_2,(\mathcal X_2,\mu_2)).
$$

By Lemma~\ref{lem:prod-space-mean} and
Remark~\ref{rem:multip-mean-char}, the continuous strictly monotone
generators having this multiplicative property are, up to affine changes
that do not alter the generalized mean,
$g(t)=\log t$ for $p=1$ or $g(t)=t^{p-1}$ for $p\ne1$.
These give the R\'enyi family, with the logarithmic case corresponding to
$p=1$. Postulate A, equivalently the admissibility criterion in
Theorem~\ref{thm:ent-admit}, restricts the order to $p>0$. This proves part~$(3)$.

For the first alternative in part~$(4)$,
Lemma~\ref{lem:4int-4} shows that Postulate $4_{\rm int}$ implies
Postulate 4. Part~$(3)$ therefore places the entropy in the R\'enyi
family. Lemma~\ref{lem:mean-factorization} then shows that the only member
of this family satisfying internal additivity is the one corresponding to
the geometric mean, namely $g(t)=\log t$ up to an affine change of scale. This is Shannon entropy.

For the second alternative, the fact that Postulates 4 and 5 characterize Shannon entropy is R\'enyi's classical theorem \cite{Ren61}. This proves part~$(4)$.
\end{proof}

In addition to the examples of entropies given in Sec.~\ref{subsec:admissible-examples}, other examples of admissible generators and their corresponding entropy functionals are listed in Table~\ref{tab:additional-generators}.

For compactness, we use the following notation in the table:
$$
m:=\mu(\mathcal X),
\qquad
R:=\frac{1}{m}\frac{\dd\mu}{\dd\nu},
\qquad
\mathbb E_\mu[F(R)]
:=
\frac{1}{m}\int_{\mathcal X}F(R)\,\dd\mu.
$$
Thus, for $h=-\log$,
$$
\mathcal H_g(\mu,(\mathcal X,\nu))
=
-\log g^{-1}\left(\mathbb E_\mu[g(R)]\right).
$$

\begin{table}[t]
\centering
\small
\setlength{\tabcolsep}{4pt}
\renewcommand{\arraystretch}{1.35}
\begin{tabularx}{\textwidth}{
@{}
>{\raggedright\arraybackslash}p{0.19\textwidth}
>{\raggedright\arraybackslash}p{0.27\textwidth}
>{\raggedright\arraybackslash}X
>{\raggedright\arraybackslash}p{0.18\textwidth}
@{}
}
\toprule
Generator
&
Resulting entropy
&
Connection
&
Comment
\\
\midrule

\textbf{Shifted logarithmic}
\par
$g_a(t)=\log(a+t)$,
\par
$a\geq 0$
&
$\displaystyle
\mathcal H_a
=
-\log\left[
\exp\left(
\mathbb E_\mu[\log(a+R)]
\right)-a
\right]$
&
For $a=0$, this gives Shannon entropy. After affine normalization,
the generator approaches the arithmetic-mean generator as
$a\to\infty$.
&
Interpolates between geometric and arithmetic generalized means.
\\[3pt]

\textbf{Reflected softplus}
\par
$\displaystyle
g_\lambda(t)
=
\log\left(1+e^{-\lambda/t}\right)$,
\par
$\lambda>0$
&
Let
$\displaystyle
m_\lambda
=
\mathbb E_\mu
\left[
\log\left(1+e^{-\lambda/R}\right)
\right].
$
Then
$\displaystyle
\mathcal H^{\rm sp}_\lambda
=
\log\left[
-\frac{1}{\lambda}
\log\left(e^{m_\lambda}-1\right)
\right].
$
&
Uses the softplus/log-sum-exp nonlinearity appearing in variational $f$-divergence methods \cite{NowCseTom16}, here applied to $R^{-1}$, the reciprocal of the normalized density ratio.
&
Bounded generator with an explicit inverse.
\\[3pt]

\textbf{Square root}
\par
$\displaystyle
g_\varepsilon(t)
=
\sqrt{t^{-2}+\varepsilon^2}-t^{-1}$,
\par
$\varepsilon>0$
&
Let
$\displaystyle
m_\varepsilon
=
\mathbb E_\mu
\left[
\sqrt{R^{-2}+\varepsilon^2}-R^{-1}
\right].
$
Then
$\displaystyle
\mathcal H_\varepsilon
=
\log\left(
\frac{\varepsilon^2-m_\varepsilon^2}
{2m_\varepsilon}
\right).
$
&
Its associated divergence generator is $\sqrt{1+\varepsilon^2t^2}-\sqrt{1+\varepsilon^2}$. Up to its normalization at $t=1$, this has the same square-root functional form as Charbonnier/pseudo-Huber robust losses \cite{Barron19}, with the
normalized density ratio $R$ as its argument.
&
Smooth algebraic generator with an explicit inverse.
\\[3pt]

\textbf{Reciprocal exponential}
\par
$\displaystyle
g_\lambda(t)=e^{-\lambda/t}$,
\par
$\lambda>0$
&
$\displaystyle
\mathcal H_\lambda
=
\log\left[
-\frac{1}{\lambda}
\log\left(
\mathbb E_\mu \left[e^{-\lambda/R} \right]
\right)
\right]
$
&
This is the single-scale case
$\Lambda=\delta_\lambda$ of the reciprocal Laplace-transform family.
&
Simplest explicit member of the kernel-generated class.
\\[3pt]

\textbf{Rational saturation}
\par
$\displaystyle
g(t)=\frac{t}{1+t}$
&
Let
$\displaystyle
m_{\rm rat}
=
\mathbb E_\mu
\left[
\frac{R}{1+R}
\right].
$
Then
$\displaystyle
\mathcal H_{\rm rat}
=
-\log\left(
\frac{m_{\rm rat}}{1-m_{\rm rat}}
\right).
$
&
The associated $f$-divergence is, up to an irrelevant linear term,
one quarter of the triangular discrimination.
&
Bounded rational generator with an explicit inverse.
\\

\bottomrule
\end{tabularx}
\caption{
Additional examples of admissible generators. All entropy functionals use the
logarithmic outer scale $h=-\log$ and are understood on their natural
domains.
}
\label{tab:additional-generators}
\end{table}

\begin{remark}[Entropy bounds from refining partitions]
Let $\mu\ll\nu$, set
$$
m:=\mu(\mathcal X)>0,
\qquad
r:=\frac{1}{m}\frac{d\mu}{d\nu},
$$
and let $\mathcal P$ be a finite measurable partition of $\mathcal X$. Denote by
$\sigma(\mathcal P)$ the $\sigma$-algebra generated by the cells of $\mathcal P$, and let
$$
r_{\mathcal P}
:=
\mathbb E_\nu
\left[
r\mid\sigma(\mathcal P)
\right].
$$
Thus $r_{\mathcal P}$ is constant on each cell $A\in\mathcal P$, and, whenever
$\nu(A)>0$,
$$
r_{\mathcal P}|_A
=
\frac{1}{\nu(A)}
\int_A r\,d\nu
=
\frac{\mu(A)}{m\nu(A)}.
$$
Let $\mu_{\mathcal P}$ be the measure satisfying
$$
\frac{d\mu_{\mathcal P}}{d\nu}
=
mr_{\mathcal P}.
$$
Equivalently, on each $A\in\mathcal P$ with $\nu(A)>0$,
$$
\mu_{\mathcal P}|_A
=
\frac{\mu(A)}{\nu(A)}\,\nu|_A.
$$
Thus $\mu_{\mathcal P}(A)=\mu(A)$, while the density is replaced on each
cell by its $\nu$-average.

Assume that the disjoint-sum composition law is increasing in each
entropy coordinate, as it is under Postulate $5'$. If $\mathcal Q$
refines $\mathcal P$, then $\mu_{\mathcal P}$ is obtained from
$\mu_{\mathcal Q}$ by replacing the measure on each cell of
$\mathcal P$ by the proportional-to-reference measure having the same
mass. Hence
$$
\mathcal H(\mu,(\mathcal X,\nu))
\leq
\mathcal H(\mu_{\mathcal Q},(\mathcal X,\nu))
\leq
\mathcal H(\mu_{\mathcal P},(\mathcal X,\nu)).
$$
For the one-cell partition $\mathcal P_0={\mathcal X}$,
$$
\mu_{\mathcal P_0}
=
\frac{\mu(\mathcal X)}{\nu(\mathcal X)}\,\nu,
$$
and hence
$$
\mathcal H(\mu_{\mathcal P_0},(\mathcal X,\nu)) =
\mathcal H(\nu,(\mathcal X,\nu)).
$$
Consequently,
$$
\mathcal H(\mu,(\mathcal X,\nu))
\leq
\mathcal H(\mu_{\mathcal P},(\mathcal X,\nu))
\leq
\mathcal H(\nu,(\mathcal X,\nu)).
$$
Let $(\mathcal F_n)$ be an increasing sequence of finite $\sigma$-algebras such that
$$
\sigma\left(\bigcup_{n\geq1}\mathcal F_n\right)=\sigma(r),
$$
where $\sigma(r)$ denotes the $\sigma$-algebra generated by $r$, and let
$\mathcal P_n$ be the finite partition associated with $\mathcal F_n$. Then
L\'evy's upward theorem gives
$$
\mathbb E_\nu[r\mid\mathcal F_n]
\longrightarrow
r
$$
$\nu$-almost everywhere and in $L^1(\nu)$. Hence, under continuity
of the entropy along this partition approximation,
$$
\mathcal H(\mu_{\mathcal P_n},(\mathcal X,\nu))
\downarrow
\mathcal H(\mu,(\mathcal X,\nu)).
$$
Thus, the entropy is recovered from decreasing finite-resolution upper
bounds, from the maximum reference entropy at the coarsest partition to
the entropy of the original Radon--Nikodym density in the limiting
refinement.
\end{remark}

\begin{remark}[The effective support]
The additional hypothesis in Theorem~\ref{thm:ent-char} specifies how
the entropy treats the part of the reference space on which the input
measure vanishes. If $\mu\ll\nu$ and $Y := \left\{ \frac{\dd\mu}{\dd\nu}>0 \right\}$,
then $\mu|_Y\sim\nu|_Y$. The hypothesis identifies the original reference-measure structure with its restriction to this effective support. In other words,  we identify the reference-measure structure $(\mu,(\mathcal X,\nu))$ with its effective restriction
$\bigl(\mu|_Y,(Y,\nu|_Y) \bigr)$.

This condition is satisfied automatically by the generalized
entropies considered in the theorem. Their defining means are taken
with respect to $\mu$, and therefore depend only on
$\dd\mu/\dd\nu$ on $Y$. The remaining reference mass on
$\mathcal X\setminus Y$ does not enter the defining integral.

The condition does not, however, follow from the other postulates
alone.\footnote{For example, for $c>0$ one may define
$$
\mathcal H_c(\mu,(\mathcal X,\nu))
:=
S_\nu(\mu,\mathcal X)
-
c\,\nu\left(
\left\{
\frac{\dd\mu}{\dd\nu}=0
\right\}
\right).
$$
The minus sign ensures that the additional term cannot increase the
entropy above the proportional-reference case required by
Postulate~A, and the term vanishes whenever $\mu$ is proportional to
$\nu$. The functional $\mathcal H_c$ satisfies Postulates A and 1--3, as well as the arithmetic mean-value property, but  is excluded upon restriction to the effective support.}
Such functionals also measure reference mass in regions lying outside
the information carried by the input measure $\mu$. The present
framework excludes this additional information and treats the entropy
as a functional of the behavior of $\mu$ relative to $\nu$ on the
support actually reached by $\mu$. Extensions retaining off-support
information may nevertheless be relevant in settings where absence
from the support itself carries information.

Restriction to the effective support is compatible with the two basic
operations used in the paper. If
$Y_i:=\{\dd\mu_i/\dd\nu_i>0\}$, then, up to the relevant null sets,
$$
Y_{\mu_1\otimes\mu_2,\nu_1\otimes\nu_2}
=
Y_1\times Y_2,
$$
while the effective support of the disjoint sum is
$Y_1\sqcup Y_2$. Thus restriction to the effective support commutes
with both products and disjoint sums.

Finally, the hypothesis does not discard reference mass universally.
When the input measure is the reference measure itself,
$\dd\nu/\dd\nu=1$ almost everywhere, so its effective support $Y=\mathcal X$  modulo $\nu$-null sets. In particular, the reference
entropy $\mathcal H(\nu,(\mathcal X,\nu))$ appearing in Postulate~A
continues to use the full reference structure.
\end{remark}

We end by showing how Postulate 5' itself can be derived from more fundamental properties, starting with   
\begin{itemize}
    \item[] \textsc{Postulate B.} (Intensivity) \quad If  $\mathcal{H}(\mu) = \mathcal{H}(\nu) = c$, then
    $\mathcal{H}(\mu\oplus \nu) = c$.
\end{itemize}

\begin{remark}
    We note, without proof, that while intensive entropies are ``mean-like,'' non-intensive entropies (\textit{e.g.}, those with  $\mathcal{H}(\mu_1\oplus \cdots \oplus \mu_n) = n^\alpha h$, for some $\alpha>0$)  would generally be ``volume-like,'' in the sense that their value grows with the number, or total size, of replicated components rather than remaining invariant under replication. Such functionals are therefore less suitable for measuring information of statistical structures in a scale-free manner.
\end{remark}

Embedded in R\'enyi's framework, as well as in ours, is the tacit underlying assumption that the entropy reflects only the information in the given structures of interest---in this case, the given measure spaces and the operations on them.
For the disjoint sum this means ``structural compatibility'' in the sense that the operation itself does not introduce new information not present in the measure spaces themselves. More precisely, we also assert
\begin{itemize}
    \item[] \textsc{Postulate C.} (Structural Compatibility) \quad 
Whenever
$\mu_i(\mathcal{X}_i)=\nu_i(\mathcal{Y}_i)$ and $\mathcal H(\mu_i)=\mathcal H(\nu_i)
$
for $i=1,2$, one has
$$
\mathcal H(\mu_1\oplus\mu_2)
=
\mathcal H(\nu_1\oplus\nu_2).
$$
\end{itemize}

\begin{prop}
\label{prop:derive-5'}
Assume Postulates B, C, and 1. 
Assume also that the range of $\mathcal H$ is an interval $I$ independent of the total mass, and that the induced disjoint-sum composition law, viewed as a function of the two positive masses and the two entropy coordinates, is continuous and strictly increasing in each entropy coordinate. 
Then Postulate $5'$ holds.
That is, there exists a continuous strictly monotone function
$\varphi:I\to\mathbb R$, unique up to an affine change of scale, such that
$$
\mathcal H(\mu\oplus\nu)
=
\varphi^{-1}
\left(
\frac{
\mu(\mathcal X)\varphi(\mathcal H(\mu))
+
\nu(\mathcal Y)\varphi(\mathcal H(\nu))
}{
\mu(\mathcal X)+\nu(\mathcal Y)
}
\right).
$$
\end{prop}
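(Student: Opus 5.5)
By Postulate C together with Postulate 1, the entropy of a disjoint sum depends only on the masses and entropies of the two summands. I would therefore first define the composition law
$$
\mathcal H(\mu\oplus\nu)=F\bigl(m_1,m_2;x_1,x_2\bigr),\qquad m_1=\mu(\mathcal X),\ m_2=\nu(\mathcal Y),\ x_i\in I .
$$
Since $I$ does not depend on the mass, $F$ is defined on all of $(0,\infty)^2\times I^2$. By hypothesis, $F$ is continuous and strictly increasing in $x_1$ and $x_2$. Postulate B says $F(m_1,m_2;x,x)=x$, so $F$ is reflexive. Postulate 1 gives symmetry, $F(m_1,m_2;x_1,x_2)=F(m_2,m_1;x_2,x_1)$, because $\mathcal X\sqcup\mathcal Y\cong\mathcal Y\sqcup\mathcal X$. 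It also gives associativity,
$$
F\bigl(m_1+m_2,m_3;F(m_1,m_2;x_1,x_2),x_3\bigr)=F\bigl(m_1,m_2+m_3;x_1,F(m_2,m_3;x_2,x_3)\bigr),
$$
because $(\mathcal X_1\sqcup\mathcal X_2)\sqcup\mathcal X_3\cong\mathcal X_1\sqcup(\mathcal X_2\sqcup\mathcal X_3)$. Here Postulate C is used again to replace an inner sum by any structure with the same mass and entropy.

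The plan is then to reduce to a classical characterization of weighted quasi-arithmetic means: Kolmogorov--Nagumo and Acz\'el \cite{Aczel48}, or the Bullen \cite{Bullen2003} treatment of weighted means. I would extend $F$ to an $n$-ary composition $F_n(m_1,\dots,m_n;x_1,\dots,x_n)$ by iterating. Associativity and symmetry make $F_n$ well defined and symmetric in the pairs $(m_i,x_i)$. Splitting a summand into pieces with the same entropy leaves the result unchanged, by reflexivity and associativity. This replication property lets me reduce rational weights to equal masses. Continuity in the masses then extends the reduction to real weights.

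The key step, which I expect to be the main obstacle, is constructing $\varphi$. Fix $a<b$ in $I$. For $t\in[0,1]$, define
$$
\psi(t):=F(1-t,t;a,b),\qquad \psi(0)=a,\quad \psi(1)=b,
$$
where the endpoint values are obtained as limits $t\to0$ and $t\to1$. Showing that these limits exist and equal $a$ and $b$ will need continuity of $F$ up to vanishing mass, or an argument via associativity. I would try to prove that $\psi$ is strictly increasing and continuous onto $[a,b]$. Strict monotonicity should follow from associativity: decompose $F(1-s,s;a,b)$ with $s>t$ as a composition containing $\psi(t)$, and then use strict increase in the entropy coordinates. Set $\varphi:=\psi^{-1}$ on $[a,b]$. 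Verifying the mean formula first for dyadic weights and then by continuity yields
$$
F(m_1,m_2;x_1,x_2)=\varphi^{-1}\left(\frac{m_1\varphi(x_1)+m_2\varphi(x_2)}{m_1+m_2}\right)
$$
for $x_1,x_2\in[a,b]$. This is the classical bisymmetry argument. Bisymmetry itself, $F(F(x_{11},x_{12}),F(x_{21},x_{22}))=F(F(x_{11},x_{21}),F(x_{12},x_{22}))$ with matching masses, follows from associativity and symmetry by rearranging four summands.

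Finally, exhaust $I$ by an increasing sequence of compact intervals $[a_n,b_n]$. The generators on nested intervals agree up to affine maps, by the uniqueness part of the same classical argument. Two quasi-arithmetic means coincide if and only if their generators are affinely related, so after normalizing, the $\varphi$'s glue to a single continuous strictly monotone $\varphi$ on $I$. This gluing also gives the stated uniqueness up to affine change of scale.
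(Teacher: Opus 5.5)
Your proposal is correct, but it takes a different route from the paper. The paper checks the same structural properties you list: well-definedness of the composition law from Postulate C, weighted symmetry and grouping from Postulate 1, and reflexivity from Postulate B. It then invokes the classical characterization of weighted quasi-arithmetic means (Kitagawa; Grabisch et al.). That result produces a generator $\varphi$ together with an a priori unknown positive weight function $W$ of the masses. The paper uses the grouping law a second time to obtain $W(a+b)=W(a)+W(b)$, and positivity then forces $W(a)=\kappa a$.

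You instead reprove the characterization by hand. Replication gives homogeneity of degree zero in the masses, and $\psi(t)=F(1-t,t;a,b)$ serves as the inverse generator. Because the masses enter your construction literally, mass-proportional weights come out automatically, with no separate Cauchy equation for $W$. Your exhaustion-and-gluing step also makes explicit the passage from compact subintervals to an arbitrary interval $I$, which the paper absorbs into the citation. Uniqueness becomes transparent too, since any generator $\chi$ must satisfy $\chi(\psi(t))=(1-t)\chi(a)+t\chi(b)$. The paper's argument is shorter; yours is self-contained.

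On the obstacle you flagged: continuity is assumed only for positive masses, so only your associativity option is available, and it works. Associativity and homogeneity give $\psi(ts)=F(1-t,t;a,\psi(s))$. Since $\psi$ is increasing and bounded below by $a$, the limit $L=\lim_{s\downarrow 0}\psi(s)$ exists. Letting $s\downarrow0$ gives $L=F(1-t,t;a,L)$, which is impossible for $L>a$ because $F(1-t,t;a,L)<F(1-t,t;L,L)=L$. The endpoint $t\to1$ is symmetric.

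You can also skip the dyadic step. Realize a summand of mass $m_i$ and entropy $\psi(t_i)$ as a disjoint sum of pieces of masses $m_i(1-t_i)$ and $m_it_i$ with entropies $a$ and $b$. Regrouping then gives $F(m_1,m_2;\psi(t_1),\psi(t_2))=\psi\bigl((m_1t_1+m_2t_2)/(m_1+m_2)\bigr)$ for all real weights at once.
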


\begin{proof}[Proof of Proposition~\ref{prop:derive-5'}]
By Postulate C, the value of $\mathcal H(\mu \oplus \nu)$
depends only on the four quantities
$\mu(\mathcal X),
\nu(\mathcal Y)$,
$\mathcal H(\mu),
\mathcal H(\nu)$.
Thus disjoint sum induces a rule for combining two entropy values, weighted
by the masses of the two summands. More precisely, for $a,b>0$ and
$x,y\in I$, let
$$
F_{a,b}(x,y)
$$
denote the entropy of the disjoint sum of any two structures having masses
$a,b$ and entropies $x,y$, respectively. Postulate C implies that this
value is well defined.

Postulate 1 and commutativity of disjoint sum imply weighted symmetry.
Indeed, $\mu\oplus\nu\cong\nu\oplus\mu$ by canonical relabeling, and therefore
$$
F_{a,b}(x,y)=F_{b,a}(y,x).
$$

Postulate 1 and associativity of disjoint sum imply the grouping law. Since
$$
(\mu_1\oplus\mu_2)\oplus\mu_3
\cong
\mu_1\oplus(\mu_2\oplus\mu_3),
$$
we have
$$
F_{a+b,c}\bigl(F_{a,b}(x,y),z\bigr)
=
F_{a,b+c}\bigl(x,F_{b,c}(y,z)\bigr).
$$
Here the two successive applications of the disjoint-sum reference
convention give, under either parenthesization, the same reference measure
$$
\frac{a}{a+b+c}\nu_1
\oplus
\frac{b}{a+b+c}\nu_2
\oplus
\frac{c}{a+b+c}\nu_3.
$$
Thus the induced composition law is compatible with regrouping and
refinement.

Postulate B gives intensivity. Namely, $F_{a,b}(x,x)=x$.
Continuity and strict monotonicity in each entropy coordinate hold by
assumption.

Thus the family $\{F_{a,b}:a,b>0\}$ satisfies the continuity, strict
monotonicity, idempotence, symmetry, and grouping conditions in the
classical characterization of weighted quasi-arithmetic means
\cite{Kitagawa34}; see also \cite[p.~392]{Aczel48} for historical
context and \cite[Theorem~4]{GrabischEtAl11} for a modern fixed-weight
bisymmetry formulation. It follows that there exist a common continuous
strictly monotone function $\varphi:I\to\mathbb R$, unique up to affine change of scale, and a positive weight
scale $W:(0,\infty)\to(0,\infty)$ such that
$$
F_{a,b}(x,y)
=
\varphi^{-1}
\left(
\frac{W(a)\varphi(x)+W(b)\varphi(y)}
{W(a)+W(b)}
\right).
$$
The grouping law identifies a block of masses $a$ and $b$ with a single
block of mass $a+b$, and therefore gives
$$
W(a+b)=W(a)+W(b).
$$
Because $W$ is positive, it is strictly increasing: if $0<a<b$, then
$W(b)=W(a)+W(b-a)>W(a)$. Hence the additive function $W$ is linear,
so $W(a)=\kappa a$ for some $\kappa>0$. Cancelling $\kappa$ yields
$$
F_{a,b}(x,y)
=
\varphi^{-1}
\left(
\frac{a\varphi(x)+b\varphi(y)}
{a+b}
\right).
$$

Substituting
$
a=\mu(\mathcal{X})$, $b=\nu(\mathcal{Y})$, $
x=\mathcal{H}(\mu)$, $y=\mathcal{H}(\nu)$ gives Postulate $5'$.
\end{proof}

\vspace{1em}

\subsection*{Acknowledgments.} 
The author would like to thank Henry Cohn for his much appreciated feedback and mentorship.
The author used OpenAI's GPT-5.5 and GPT-5.6 models through ChatGPT for literature review, text generation, editing, and revision. The author independently reviewed and verified all resulting text, mathematical statements, substantive claims, and references and takes full responsibility for the manuscript.

 \bibliography{stat-ent-char}
 \bibliographystyle{alpha}
 \end{document}